\newcommand{\be}{\begin{equation}}
\newcommand{\ee}{\end{equation}}
\newcommand{\ba}{\begin{eqnarray}}
\newcommand{\ea}{\end{eqnarray}}
\newcommand{\term}[1]{\textbf{#1}}
\newcommand{\bs}[1]{\left[ #1 \right]} 
\newcommand{\br}[1]{\left( #1 \right)} 
\newcommand{\bc}[1]{\left\{ #1 \right\}} 
\newcommand{\wor}[2][]{{\mathtt{wor}_{#1}}\!\left(#2\right)}     
\newcommand{\obs}[2][]{{\mathtt{obs}_{#1}}\!\left(#2\right)}     
\newcommand{\pa}[2][]{{\mathtt{pa}_{#1}}\!\left(#2\right)}       
\newcommand{\vpa}[2][]{{\mathtt{vpa}_{#1}}\!\left(#2\right)}     
\newcommand{\lpa}[2][]{{\mathtt{lpa}_{#1}}\!\left(#2\right)}     
\newcommand{\ch}[2][]{{\mathtt{ch}_{#1}}\!\left(#2\right)}       
\newcommand{\aff}[2][]{{\mathtt{aff}_{#1}}\!\left(#2\right)}     
\newcommand{\an}[2][]{{\mathtt{an}_{#1}}\!\left(#2\right)}       
\newcommand{\des}[2][]{{\mathtt{des}_{#1}}\!\left(#2\right)}     
\newcommand{\sub}[2][]{{\mathtt{sub}_{#1}}\!\left(#2\right)}     
\newcommand{\exo}[2][]{{\mathtt{exo}_{#1}}\!\left(#2\right)}     
\newcommand{\perm}[2][]{{\mathtt{perm}_{#1}}\!\left(#2\right)}   
\newsavebox\MBox
\newcommand\colorline[2]{{\sbox\MBox{$#2$}%
  \rlap{\usebox\MBox}\color{#1}\rule[-1.4\dp\MBox]{\wd\MBox}{1.2pt}}}
\theoremstyle{plain}
\newtheorem{theorem}{Theorem}
\newtheorem{lemma}[theorem]{Lemma}
\theoremstyle{definition}
\newtheorem{definition}{Definition}
\newtheorem{example}{Example}
\theoremstyle{remark}
\begin{document}

\title{A Combinatorial Solution to Causal Compatibility}
\author{Thomas C. Fraser \\ \href{mailto://tfraser@perimeterinstitute.ca}{tfraser@perimeterinstitute.ca}}
\affil{{\small Perimeter Institute for Theoretical Physics, Waterloo, Ontario, Canada, N2L 2Y5 \\ University of Waterloo, Waterloo, Ontario, Canada, N2L 3G1}}
\date{\today}

\maketitle
\begin{abstract}
    Within the field of causal inference, it is desirable to learn the structure of causal relationships holding between a system of variables from the correlations that these variables exhibit; a sub-problem of which is to certify whether or not a given causal hypothesis is \textit{compatible} with the observed correlations. A particularly challenging setting for assessing causal compatibility is in the presence of partial information; i.e. when some of the variables are hidden/latent. This paper introduces the possible worlds framework as a method for deciding causal compatibility in this difficult setting. We define a graphical object called a possible worlds diagram, which compactly depicts the set of all possible observations. From this construction, we demonstrate explicitly, using several examples, how to prove causal incompatibility. In fact, we use these constructions to prove causal incompatibility where no other techniques have been able to. Moreover, we prove that the possible worlds framework can be adapted to provide a \textit{complete} solution to the possibilistic causal compatibility problem. Even more, we also discuss how to exploit graphical symmetries and cross-world consistency constraints in order to implement a hierarchy of necessary compatibility tests that we prove converges to sufficiency.
\end{abstract}

{\textbf{Keywords:} causal inference, causal compatibility, quantum non-classicality}

\clearpage
\section{Introduction}
\label{sec:intro}

A theory of causation specifies the effects of actions with absolute necessity. On the other hand, a probabilistic theory encodes degrees of belief and makes predictions based on limited information. A common fallacy is to interpret correlation as causation; opening an umbrella has never caused it to rain, although the two are strongly correlated. Numerous paradoxical and catastrophic consequences are unavoidable when probabilistic theories and theories of causation are confused. Nonetheless, \textit{Reichenbach's principle} asserts that correlations must admit causal explanation; after all, the fear of getting wet causes one to open an umbrella.

In recent decades, a concerted effort has been put into developing a formal theory for probabilistic causation~\cite{Pearl_2009,Spirtes_2000}. Integral to this formalism is the concept of a \textit{causal structure}. A causal structure is a directed acyclic graph, or DAG, which encodes hypotheses about the causal relationships among a set of random variables. A \textit{causal model} is a causal structure when equipped with an explicit description of the parameters which govern the causal relationships. Given a multivariate probability distribution for a set of variables and a proposed causal structure, the \textit{causal compatibility problem} aims to determine the existence or non-existence of a causal model for the given causal structure which can explain the correlations exhibited by the variables. More generally, the objective of \textit{causal discovery} is to enumerate \textit{all} causal structure(s) compatible with an observed distribution. Perhaps unsurprisingly, causal inference has applications in a variety of academic disciplines including economics, risk analysis, epidemiology, bioinformatics, and machine learning~\cite{Pearl_2009,Pearl_2009_tr,Goudet_2017,Jing_2004,Robins_2000}.

For physicists, a consideration of causal influence is commonplace; the theory of special/general relativity strictly prohibits causal influences between space-like separated regions of space-time~\cite{Wald_2010}. Famously, in response to Einstein, Podolsky, and Rosen's~\cite{EPR_Orig} critique on the completeness of quantum theory, Bell~\cite{Bell_1964} derived an observational constraint, known as Bell's inequality, which must be satisfied by all hidden variable models which respect the causal hypothesis of relativity. Moreover, Bell demonstrated the existence of quantum-realizable correlations which violate Bell's inequality~\cite{Bell_1964}. Recently, it has been appreciated that Bell's theorem can be understood as an instance of causal inference~\cite{Wood_2012}. Contemporary quantum foundations maintains two closely related causal inference research programs. The first is to develop a theory of \textit{quantum} causal models in order to facilitate a causal description of quantum theory and to better understand the limitations of quantum resources~\cite{Henson_2014,Fritz_2014,Leifer_2011,Ried_2015,Allen_2016,Chaves_2015,Costa_2016,Barret_2007,Wolfe_2016,Pienaar_2015,Oreshkov_2011}. The second is the continued study of \textit{classical} causal inference with the purpose of distinguishing genuinely quantum behaviors from those which admit classical explanations~\cite{Wolfe_2016,Fritz_2012,Fritz_2014,Abramsky_2012,Weilenmann_2016,Rosset_2017,Abramsky_2011,Chaves_2016,Fraser_2017}. In particular, the results of \cite{Henson_2014} suggest that causal structures which support quantum non-classicality are uncommon and typically large in size; therefore, systematically finding new examples of such causal structures will require the development of new algorithmic strategies. As a consequence, quantum foundations research has relied upon, and contributed to, the techniques and tools used within the field of causal inference~\cite{Wolfe_2016,Rosset_2017,Chaves_2015,Henson_2014}. The results of this paper are concerned exclusively with the latter research program of \textit{classical} causal inference, but does not rule out the possibility of a generalization to quantum causal inference.

When all variables in a probabilistic system are observed, checking the compatibility status between a joint distribution and a causal structure is relatively easy; compatibility holds if and only if all conditional independence constraints implied by graphical \textit{d-separation} relations hold~\cite{Pearl_2009,Pearl_2013}. Unfortunately, in more realistic situations there are ethical, economic, or fundamental barriers preventing access to certain statistically relevant variables, and it becomes necessary to hypothesize the existence of \textit{latent/hidden} variables in order to adequately explain the correlations expressed by the \textit{visible/observed} variables~\cite{Pearl_2009,Evans_2015,Wolfe_2016}.
In the presence of latent variables, and in the absence of interventional data, the causal compatibility problem, and by extension the subject of causal inference as a whole, becomes considerably more difficult.

In order to overcome these difficulties, numerous simplifications have be invoked by various authors in order to make partial progress. A particularly popular simplification strategy has been to consider alternative classes of graphical causal models which can act as surrogates for DAG causal models; e.g. MC-graphs~{\cite{Koster_2002}}, summary graphs~{\cite{Wermuth_2011}}, or maximal ancestral graphs (MAGs)~{\cite{Richardson_2002,Zhang_2008}}. While these approaches are certainly attractive from a practical perspective (efficient algorithms such as FCI~{\cite{Spirtes_2000}} or RFCI~{\cite{Colombo_2012}} exist for assessing causal compatibility with MAGs, for instance), they nevertheless fail to fully capture all constraints implied by DAG causal models with latent variables~{\cite{Evans_2016}}.\footnote{For concrete and relevant example of this weakness, note that there are observable distributions incompatible with the DAG causal structure in Figure~{\ref{fig:triangle_structure}} (which admits of no observable d-separation relations), whereas its associated MAG is compatible with \textit{all} observed distributions. An analogous statement happens to be true of the DAG causal structure in Figure~{\ref{fig:evans_causal_structure_1}}.} The forthcoming formalism is concerned with assessing the causal compatibility of DAG causal structures \textit{directly}, therefore avoiding these shortcomings.

Nevertheless, when considering DAG causal structures directly (henceforth just causal structures), making assumptions about the nature of the latent variables and the parameters which govern them can simplify the problem~\cite{Spirtes_2013,Wainwright_2007,Geiger_2013}. For instance, when the latent variables are assumed to have a known and finite cardinality\footnote{The cardinality of a random variable is the size of its sample space.}, it becomes possible to articulate the causal compatibility problem as a finite system of polynomial equality and inequality equations with a finite list of unknowns for which non-linear quantifier elimination methods, such as Cylindrical Algebraic Decomposition~\cite{Jirstrand_1995}, can provide a complete solution. Unfortunately, these techniques are only computationally tractable in the simplest of situations. Other techniques from algebraic geometry have been used in simple scenarios to approach the causal compatibility problem as well~\cite{Lee_2015,Garcia_2003,Geiger_2013}. When no assumptions about the nature of the latent variables are made, there are a plethora of methods for deriving novel equality~\cite{Richardson_2017,Evans_2015} and inequality~\cite{Wolfe_2016,Fritz_2011,Abramsky_2012,Chaves_2016,Weilenmann_2016,Evans_2012,Bancal_2010,Henson_2014,Fraser_2017,Bonet_2013,Studel_2015} constraints that must be satisfied by any compatible distribution. The majority of these methods are unsatisfactory on the basis that the derived constraints are necessary, but not sufficient. A notable exception is the Inflation Technique~\cite{Wolfe_2016}, which produces a hierarchy of linear programs (solvable using efficient algorithms~\cite{Bradley_1977,Jones_2004,Schrijver_1998,Dantzig_1973,Kavvadias_2005}) which are necessary \textit{and} sufficient~\cite{Navascues_2017} for determining compatibility.

In contrast with the aforementioned algebraic techniques, the purpose of this paper is to present the \textit{possible worlds framework}, which offers a combinatorial solution to the causal compatibility problem in the presence of latent variables. Importantly, this framework can only be applied when the cardinality of the visible variables are known to be finite.\footnote{Regarding the latent variables, Appendix~{\ref{sec:finite_bounds}} demonstrates that the latent variables can be assumed to have finite cardinality without loss of generality whenever the visible variables have finite cardinality.} This framework is inspired by the twin networks of Pearl~\cite{Pearl_2009}, parallel worlds of Shpitser~\cite{Shpitser_2008}, and by some original drafts of the Inflation Technique paper~\cite{Wolfe_2016}. The possible worlds framework accomplishes three things. First, we prove its conceptual advantages by revealing that a number of disparate instances of causal incompatibility become unified under the same premise. Second, we provide a closed-form algorithm for completely solving the \textit{possibilistic} causal compatibility problem. To demonstrate the utility of this method, we provide a solution to an unsolved problem originally reported~\cite{Evans_2016}. Third, we show that the possible worlds framework provides a hierarchy of tests, much like the Inflation Technique, which solves completely the \textit{probabilistic} causal compatibility problem. 

Unfortunately, the computational complexity of the proposed probabilistic solution is prohibitively large in many practical situations. Therefore, the contributions of this work are primarily conceptual. Nevertheless, it is possible that these complexity issues are intrinsic to the problem being considered.
Notably, the hierarchy of tests presented here has an asymptotic rate of convergence commensurate to the only other complete solution to the probabilistic compatibility problem, namely the hierarchy of tests provided in~\cite{Navascues_2017}. Moreover, unlike the Inflation Technique, if a distribution is compatible with a causal structure, then the hierarchy of tests provided here has the advantage of returning a causal model which generates that distribution.

This paper is organized as follows: Section~\ref{sec:a_review_of_causal_modeling} begins with a review of the mathematical formalism behind causal modeling, including a formal definition of the causal compatibility problem, and also introduces the notations to be used throughout the paper. Afterwards, Section~\ref{sec:possible_worlds_framework} introduces the possible worlds framework and defines its central object of study: a possible worlds diagram. Section~\ref{sec:complete_possibilistic} applies the possible worlds framework to prove possibilistic incompatibility between several distributions and corresponding causal structures, culminating in an algorithm for exactly solving the possibilistic causal compatibility problem. Finally, Section~\ref{sec:complete_probabilistic} establishes a hierarchy of tests which completely solve the probabilistic causal compatibility problem. Moreover, Section~{\ref{sec:symmetry_and_superfluous}} articulates how to utilize internal symmetries in order to alleviate the aforementioned computational complexity issues. Section~\ref{sec:conclusion} concludes.

Appendix~\ref{sec:simplifying_causal_structures} summarizes relevant results from~\cite{Evans_2016} needed in Section~\ref{sec:a_review_of_causal_modeling}. Appendix~\ref{sec:simplifying_causal_parameters} generalizes the results of~\cite{Rosset_2017}, placing new upper bounds on the maximum cardinality of the latent variables, required for Sections~\ref{sec:a_review_of_causal_modeling} and~\ref{sec:complete_probabilistic}.

\section{A Review of Causal Modeling}
\label{sec:a_review_of_causal_modeling}

This review section is segmented into three portions. First, Section~\ref{sec:directed_graphs} defines directed graphs and their properties. Second, Section~\ref{sec:probability_distributions} introduces the notation and terminology regarding probability distributions to be used throughout the remainder of this article. Finally, Section~\ref{sec:causal_models} defines the notion of a causal model and formally introduces the causal compatibility problem.

\subsection{Directed Graphs}
\label{sec:directed_graphs}

\begin{definition}
    A \term{directed graph} $\mathcal G$ is an ordered pair $\mathcal G = \br{\mathcal Q, \mathcal E}$ where $\mathcal Q$ is a finite set of \textit{vertices} and $\mathcal E$ is a set \textit{edges}, i.e. ordered pairs of vertices $\mathcal E \subseteq \mathcal Q \times \mathcal Q$. If $\br{q,u} \in \mathcal E$ is an edge, denoted as $q \to u$, then $u$ is a \term{child} of $q$ and $q$ is a \term{parent} of $u$. A \term{directed path} of length $k$ is a sequence of vertices $q_{(1)} \to q_{(2)} \to \cdots \to q_{(k)}$ connected by directed edges. For a given vertex $q$, $\pa[\mathcal G]{q}$ denotes its parents and $\ch[\mathcal G]{q}$ its children. If there is a directed path from $q$ to $u$ then $q$ is an \term{ancestor} of $u$ and $u$ is a \term{descendant} of $q$; the set of all ancestors of $q$ is denoted $\an[\mathcal G]{q}$ and the set of all descendants is denoted $\des[\mathcal G]{q}$. The definition for parents, children, ancestors and descendants of a single vertex $q$ are applied disjunctively to sets of vertices $Q \subseteq \mathcal Q$:
    \begin{align}
        \ch[\mathcal G]{Q} &= \bigcup_{q \in Q} \ch[\mathcal G]{q}, \qquad \pa[\mathcal G]{Q} = \bigcup_{q \in Q} \pa[\mathcal G]{q}, \\
        \an[\mathcal G]{Q} &= \bigcup_{q \in Q} \an[\mathcal G]{q}, \qquad \des[\mathcal G]{Q} = \bigcup_{q \in Q} \des[\mathcal G]{q}.
    \end{align}
    A directed graph is \term{acyclic} if there is no directed path of length $k > 1$ from $q$ back to $q$ for any $q \in \mathcal{Q}$ and \term{cyclic} otherwise. For example, Figure~\ref{fig:examples_of_directed_graphs} depicts the difference between cyclic and acyclic directed graphs.
\end{definition}

\begin{definition}
    The \term{subgraph} of $\mathcal G = \br{\mathcal Q, \mathcal E}$ induced by $\mathcal W \subset \mathcal Q$, denoted $\sub[\mathcal G]{\mathcal W}$, is given by,
    \begin{align}
        \sub[\mathcal G]{\mathcal W} = \br{\mathcal W, \mathcal E \cap \br{\mathcal W \times \mathcal W}},
    \end{align}
    i.e. the graph obtained by taking all edges from $\mathcal E$ which connect members of $\mathcal W$.
\end{definition}

\begin{figure}
    \centering
    \begin{subfigure}[t]{0.49\textwidth}
        \centering
        \includegraphics{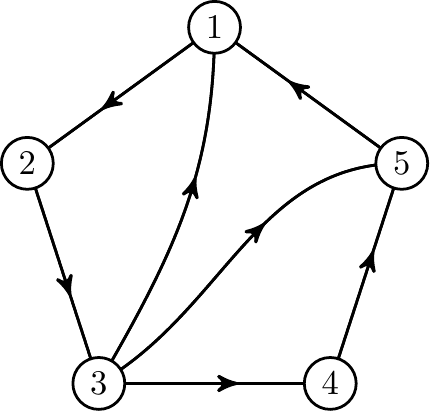}
        \caption{A directed cyclic graph.}
    \end{subfigure}
    \begin{subfigure}[t]{0.49\textwidth}
        \centering
        \includegraphics{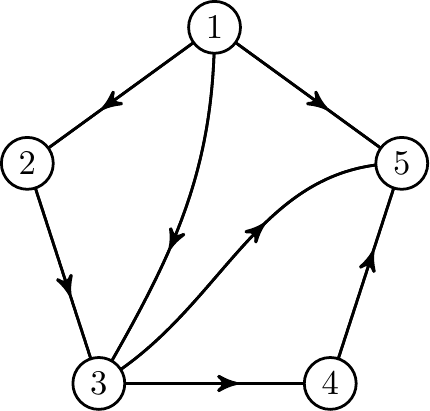}
        \caption{A directed acyclic graph.}
    \end{subfigure}
    \caption{The difference between a directed \textit{cyclic} graph and a directed \textit{acyclic} graph.}
    \label{fig:examples_of_directed_graphs}
\end{figure}

\subsection{Probability Theory}
\label{sec:probability_distributions}

\begin{definition}[Probability Theory]
    A \term{probability space} is a triple $\br{\Omega, \Xi, \mathtt{P}}$ where the state space $\Omega$ is the set of all possible \textit{outcomes}, $\Xi \subseteq 2^{\Omega}$ is the set of \textit{events} forming a $\sigma$-algebra over $\Omega$, and $\mathtt{P}$ is a $\sigma$-additive function from events to probabilities such that $\mathtt{P}(\Omega) = 1$.
\end{definition}

\begin{definition}[Probability Notation]
    For a collection of random variables $\mathsf{X}_{\mathcal I} = \bc{\mathsf{X}_{1},\mathsf{X}_{2},\ldots, \mathsf{X}_{k}}$ indexed by $i \in \mathcal I = \bc{1,2,\ldots, k}$ where each $\mathsf{X}_{i}$ takes values from $\Omega_{i}$, a joint distribution $\mathtt{P}_{\mathcal I} = \mathtt{P}_{12\ldots k}$ assigns probabilities to outcomes from $\Omega_{\mathcal I} = \prod_{i \in \mathcal I} \Omega_{i}$. The event that each $\mathsf{X}_{i}$ takes value $x_{i}$, referred to as a \term{valuation} of $\mathsf X_{\mathcal I}$\footnote{A valuation is a particular type of event in $\Xi$ where the random variables take on definite values.}, is denoted as,
    \begin{align}
        \mathtt{P}_{\mathcal I}(x_{\mathcal I}) = \mathtt{P}_{12\ldots k}\br{x_1x_2\ldots x_k} = \mathtt{P}\br{X_{1} = x_1, X_{2} = x_2, \ldots X_{k} = x_k}.
        \label{eq:valuation_definition}
    \end{align}
    A point distribution $\mathtt{P}_{\mathcal I}(y_{\mathcal I}) = 1$ for a particular event $y_{\mathcal I} \in \Omega_{\mathcal I}$ is expressed using square brackets,
    \begin{align}
        \mathtt{P}_{\mathcal I}(y_{\mathcal I}) = 1 \Leftrightarrow \mathtt{P}_{\mathcal I}(x_{\mathcal I}) = [y_{\mathcal I}](x_{\mathcal I}) = \delta(y_{\mathcal I}, x_{\mathcal I}) = \prod_{i \in \mathcal I} \delta(y_{i}, x_{i}).
    \end{align}
    The set of all probability distributions over $\Omega_{\mathcal I}$ is denoted as $\mathbb{P}_{\mathcal I}$. Let $k_{i}$ denote the \textit{cardinality} or size of $\Omega_{i}$. If $\mathsf{X}_{i}$ is discrete, then $k_{i} = \abs{\Omega_{i}}$, otherwise $\mathsf{X}_{i}$ is continuous and $k_{i} = \infty$.
\end{definition}

\subsection{Causal Models and Causal Compatibility}
\label{sec:causal_models}

A \term{causal model} represents a complete description of the causal mechanisms underlying a probabilistic process. Formally, a causal model is a pair of objects $\br{\mathcal G, \mathcal P}$, which will be defined in turn. First, $\mathcal G$ is a directed acyclic graph $\br{\mathcal Q, \mathcal E}$, whose vertices $q \in \mathcal Q$ represent random variables $\mathsf{X}_{\mathcal Q} = \{ \mathsf{X}_{q} \mid q \in \mathcal Q \}$. The purpose of a causal structure is to graphically encode the causal relationships between the variables. Explicitly, if $q \to u \in \mathcal E$ is an edge of the causal structure, $\mathsf{X}_{q}$ is said to have \textit{causal influence} on $\mathsf{X}_{u}$\footnote{It is seldom necessary to make the distinction between the random variable $\mathsf{X}_{q}$ and the index/vertex $q$; this paper henceforth treats them as synonymous.}. Consequently, the causal structure predicts that given complete knowledge of a valuation of the parental variables $\mathsf{X}_{\pa[\mathcal G]{u}} = \bc{\mathsf{X}_{q} \mid q \in \pa[\mathcal G]{u}}$, the random variable $\mathsf{X}_{u}$ should become independent of its non-descendants\footnote{This is known as the local Markov property.}~\cite{Pearl_2009}. With this observation as motivation, the \term{causal parameters} $\mathcal P$ of a causal model are a family of conditional probability distributions $\mathtt{P}_{q |\pa[\mathcal G]{q}}$ for each $q \in \mathcal Q$. In the case that $q$ has no parents in $\mathcal G$, the distribution is simply unconditioned. The purpose of the causal parameters are to predict a joint distribution $\mathtt{P}_{\mathcal Q}$ on the configurations $\Omega_{\mathcal Q}$ of a causal structure,
\begin{align}
    \forall x_{\mathcal Q} \in \Omega_{\mathcal Q}, \quad \mathtt{P}_{\mathcal Q}(x_{\mathcal Q}) = \prod_{q\in \mathcal Q} \mathtt{P}_{q | \pa[\mathcal G]{q}}\!(x_{q} | x_{\pa[\mathcal G]{q}}). \label{eq:causal_parameter_product}
\end{align}
If the hypotheses encoded within a causal structure $\mathcal G$ are correct, then the observed distribution over $\Omega_{\mathcal Q}$ should factorize according to Equation~\ref{eq:causal_parameter_product}. Unfortunately, as discussed in Section~\ref{sec:intro}, there are often ethical, economic, or fundamental obstacles preventing access to all variables of a system. In such cases, it is customary to partition the vertices of causal structure into two disjoint sets; the \term{visible (observed) vertices} $\mathcal V$, and the \term{latent (unobserved) vertices} $\mathcal L$ (for example, see Figure~\ref{fig:example_causal_structure_0}). Additionally, we denote visible parents of any vertex $q \in \mathcal V \cup \mathcal L$ as $\vpa[\mathcal G]{q} = \mathcal V \cap \pa[\mathcal G]{q}$ and analogously for the latent parents $\lpa[\mathcal G]{q} = \mathcal L \cap \pa[\mathcal G]{q}$.

\begin{figure}
    \centering
    \includegraphics{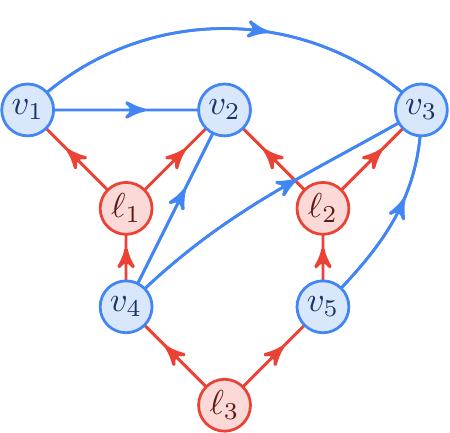}
    \caption{The causal {structure} $\mathcal G_{\ref{fig:example_causal_structure_0}}$ in this figure encodes a causal hypothesis about the causal relationships between the visible variables $\mathcal V = \bc{v_{1}, v_{2}, v_{3}, v_{4}, v_{5}}$ and the latent variables $\mathcal L = \bc{\ell_{1}, \ell_{2}, \ell_{3}}$; e.g. $v_{2}$ experiences a direct causal influence from each of its parents, both visible $\vpa[\mathcal G_{\ref{fig:example_causal_structure_0}}]{v_{2}} = \bc{v_{1}, v_{4}}$ and latent $\lpa[\mathcal G_{\ref{fig:example_causal_structure_0}}]{v_{2}} = \bc{\ell_{1}, \ell_{2}}$. Throughout this {paper}, visible variables and edges connecting them are colored blue whereas all latent variables and all other edges are colored red. }
    \label{fig:example_causal_structure_0}
\end{figure}

In the presence of latent variables, Equation~\ref{eq:causal_parameter_product} stills makes a prediction about the joint distribution $\mathtt{P}_{\mathcal V \cup \mathcal L}(x_{\mathcal V}, \lambda_{\mathcal L})$\footnote{This paper adopts the notational convenient of using $\lambda_{\ell} \in \Omega_{\ell}$ for valuations of latent variables $\ell \in \mathcal L$ to differentiate them from valuations $x_{v} \in \Omega_{v}$ of observed variables $v \in \mathcal V$.} over the visible and latent variables, albeit an experimenter attempting to verify or discredit a causal hypothesis only has access to the marginal distribution $\mathtt{P}_{\mathcal V}(x_{\mathcal V})$. If $\Omega_{\mathcal L}$ is continuous,
\begin{align}
    \forall x_{\mathcal V} \in \Omega_{\mathcal V}, \mathtt{P}_{\mathcal V}(x_{\mathcal V}) = \int_{\lambda_{\mathcal L} \in \Omega_{\mathcal L}} \mathrm{d}\mathtt{P}_{\mathcal V \cup \mathcal L}(x_{\mathcal V}, \lambda_{\mathcal L})
    \label{eq:latent_marginalization_continuous}
\end{align}
If $\Omega_{\mathcal L}$ is discrete,
\begin{align}
    \forall x_{\mathcal V} \in \Omega_{\mathcal V}, \mathtt{P}_{\mathcal V}(x_{\mathcal V}) = \sum_{\lambda_{\mathcal L} \in \Omega_{\mathcal L}} \mathtt{P}_{\mathcal V \cup \mathcal L}(x_{\mathcal V}, \lambda_{\mathcal L}).
    \label{eq:latent_marginalization_discrete}
\end{align}
A natural question arises; in the absence of information about the latent variables $\mathcal L$, how can one determine whether or not their causal hypotheses are correct? The principle purpose of this paper is to provide the reader with methods for answering this question.

In general, other than being a directed acyclic graph, there are no restrictions placed on a causal structure with latent variables. Nonetheless,~\cite{Evans_2016} demonstrates that every causal structure $\mathcal G$ can be converted into a standard form that is observationally equivalent to $\mathcal G$ where the latent variables are exogenous (have no parents) and whose children sets are isomorphic to the facets of a simplicial complex over $\mathcal V$\footnote{Appendix~\ref{sec:observational_impact} briefly discusses what it means for two causal structures to be \textit{observationally equivalent}.}. Appendix~\ref{sec:simplifying_causal_structures} summarizes the relevant results from~\cite{Evans_2016} necessary for making this claim. Additionally, Appendix~\ref{sec:simplifying_causal_parameters} demonstrates that any finite distribution $\mathtt{P}_{\mathcal V}$ which satisfies the causal hypotheses (i.e. Equation~\ref{eq:latent_marginalization_continuous}) can be generated using deterministic causal parameters for the visible variables and moreover, the cardinalities of the latent variables can be assumed finite\footnote{We prove this result in Appendix~\ref{sec:simplifying_causal_parameters} by generalizing the proof techniques used in~\cite{Rosset_2017}.}. Altogether, Appendices~\ref{sec:simplifying_causal_structures} and~\ref{sec:simplifying_causal_parameters} suggest that without loss of generality, we can simplify the causal compatibility problem as follows:

\begin{definition}[Functional Causal Model]
    \label{defn:functional_causal_model}
    A \term{(finite) functional causal model} for a causal structure $\mathcal G = \br{\mathcal V \cup \mathcal L, \mathcal E}$ is a triple $\br{\mathcal G, \mathcal F_{\mathcal V}, \mathcal P_{\mathcal L}}$ where
    \begin{align}
        \mathcal F_{\mathcal V} = \{f_{v} : \Omega_{\pa[\mathcal G]{v}} \to \Omega_{v} \mid v \in \mathcal V\}
        \label{eq:functional_parameters}
    \end{align}
    are deterministic functions for the visible variables $\mathcal V$ in $\mathcal G$, and
    \begin{align}
        \mathcal P_{\mathcal L} = \bc{\mathtt{P}_{\ell} : \Omega_{\ell} \to \bs{0,1} \mid \ell \in \mathcal L}
        \label{eq:latent_parameters}
    \end{align}
    are finite probability distributions for the latent variables $\mathcal L$ in $\mathcal G$. A functional causal model defines a probability distribution $\mathtt{P}_{\mathcal V} : \Omega_{\mathcal V} \to \bs{0,1}$,
    \begin{align}
        \forall x_{\mathcal V} \in \Omega_{\mathcal V}, \quad \mathtt{P}_{\mathcal V}(x_{\mathcal V}) = \prod_{\ell \in \mathcal L} \sum_{\lambda_{\ell} \in \Omega_{\ell}} \mathtt{P}_{\ell}(\lambda_{\ell}) \prod_{v \in \mathcal L} \delta(x_{v}, f_{v}(x_{\vpa[\mathcal G]{v}},\lambda_{\lpa[\mathcal G]{v}})). \label{eq:functional_compatibility}
    \end{align}
\end{definition}

\begin{definition}[The Causal Compatibility Problem]
    \label{defn:causal_compat}
    Given a causal structure $\mathcal G = \br{\mathcal V \cup \mathcal L, \mathcal E}$ and a distribution $\mathtt{P}_{\mathcal V}$ over the visible variables $\mathcal V$, \term{the causal compatibility problem} is to determine if there exists a functional causal model $\br{\mathcal G, \mathcal F_{\mathcal V}, \mathcal P_{\mathcal L}}$ (defined in Definition~\ref{defn:functional_causal_model}) such that Equation~\ref{eq:functional_compatibility} reproduces $\mathtt{P}_{\mathcal V}$. If such a functional causal model exists, then $\mathtt{P}_{\mathcal V}$ is said to be \term{compatible} with $\mathcal G$; otherwise $\mathtt{P}_{\mathcal V}$ is incompatible with $\mathcal G$. The set of all \textit{compatible} distributions on $\mathcal V$ for a causal structure $\mathcal G$ is denoted $\mathcal M_{\mathcal V}\br{\mathcal G}$.
\end{definition}

\section{The Possible Worlds Framework}
\label{sec:possible_worlds_framework}

Consider the causal structure in Figure~\ref{fig:example_causal_structure} denoted $\mathcal G_{\ref{fig:example_causal_structure}}$. For the sake of concreteness, suppose one is promised the latent variables are sampled from a binary sample space, i.e. $k_{\mu} = k_{\nu} = 2$. Let $z_{\mu} = \mathtt{P}_{\mu}(0_{\mu})$ and $z_{\nu} = \mathtt{P}_{\nu}(0_{\nu})$. The causal hypothesis $\mathcal G_{\ref{fig:example_causal_structure}}$ predicts (via Equation~\ref{eq:functional_compatibility}) that observable events $(x_{a}, x_{b}, x_{c}) \in \Omega_{a} \times \Omega_{b} \times \Omega_{c}$ will be distributed according to,
\begin{align}
\begin{split}
    \mathtt{P}_{abc} &= z_{\mu}z_{\nu}[\obs[abc]{0_{\mu}0_{\nu}}] + z_{\mu}(1-z_{\nu})[\obs[abc]{0_{\mu}1_{\nu}}] + \\
    &\quad+ (1-z_{\mu})z_{\nu}[\obs[abc]{1_{\mu}0_{\nu}}] + (1-z_{\mu})(1-z_{\nu})[\obs[abc]{1_{\mu}1_{\nu}}],
\end{split}
\end{align}
where $\obs[abc]{\lambda_{\mu}\lambda_{\nu}} \in \Omega_{a} \times \Omega_{b} \times \Omega_{c}$ is shorthand for the observed event generated by the autonomous functions $f_{a}, f_{b}, f_{c}$ for each $(\lambda_{\mu}, \lambda_{\nu}) \in \Omega_{\mu} \times \Omega_{\nu}$. In the case of $\mathcal G_{\ref{fig:example_causal_structure}}$,
\begin{align}
    \obs[abc]{\lambda_{\mu}\lambda_{\nu}} = (f_{a}(\lambda_{\mu}),f_{b}(f_{a}(\lambda_{\mu}),\lambda_{\nu}),f_{c}(f_{b}(f_{a}(\lambda_{\mu}),\lambda_{\nu}),\lambda_{\nu})).
\end{align}
For each distinct realization $(\lambda_{\mu}, \lambda_{\nu}) \in \Omega_{\mu} \times \Omega_{\nu}$ of the latent variables, one can consider a possible world wherein the values $\lambda_\mu, \lambda_\nu$ are \textit{not} sampled according to the respective distributions $\mathtt{P}_{\mu}, \mathtt{P}_{\nu}$, but instead take on definite values. From the perspective of counterfactual reasoning, each world is modelling a distinct counterfactual assignment of the latent variables, but \textit{not} the visible variables.\footnote{It is conceivable that this framework, and its associated diagrammatic notation, could be extended to accommodate counterfactual assignments to the visible variables as well. Such an extension could be useful for assessing compatibility with interventional data, in addition to the purely observational data being considered here.} In this particular example, there are $k_{\mu} \times k_{\nu} = 2 \times 2 = 4$ distinct, possible worlds. Figure~\ref{fig:example_causal_structure_definite_worlds} represents, and uniquely colors, these possible worlds. Note that the definite valuations of the latent variables in Figure~\ref{fig:example_causal_structure_definite_worlds} are depicted using squares\footnote{This diagrammatic convention is imminently explained in more depth by Definition~\ref{defn:awf} and associated Figure~\ref{fig:vertex_explained}.}. Critically, regardless of the deterministic functional relationships $f_{a}, f_{b}, f_{c}$, there are identifiable consistency constraints that must hold between these worlds. For example, $a$ is determined by a function $f_{a} : \Omega_{\mu} \to \Omega_{a}$ and thus the observed value for $a$ in the yellow $(0_{\mu}0_{\nu})$-world must be \textit{exactly} the same as the observed value for $a$ in the green $(0_{\mu}1_{\nu})$-world. This cross-world consistency constraint is illustrated in Figure~\ref{fig:example_causal_structure_merged} by embedding each possible world into a larger diagram with overlapping $\lambda_{\mu} \to a$ subgraphs. It is important to remark that not all cross-world consistency constraints are captured by this diagram; the value of $b$ in the yellow $(0_{\mu}0_{\nu})$-world \textit{must} match the value of $b$ in the orange $(1_{\mu}0_{\nu})$-world if the value of $a$ in both possible worlds is the same.

For comparison, in the original causal structure $\mathcal G_{\ref{fig:example_causal_structure}}$, the vertices represented random variables sampled from distributions associated with causal parameters; whereas in the possible worlds diagram of Figure~\ref{fig:example_causal_structure_merged}, every valuation, including the latent valuations are \textit{predetermined} by the functional dependences $f_{a}, f_{b}, f_{c}$. For example, Figure~\ref{fig:example_causal_structure_merged_valuation} populates Figure~\ref{fig:example_causal_structure_merged} with the observable events generated by the following functional dependences,
\begin{align}
\begin{split}
    f_{a}(0_{\mu}) = 0_{a} &\quad f_{a}(1_{\mu}) = 1_{a}, \\
    f_{b}(0_{a}0_{\nu}) = 3_{b} \quad f_{b}(0_{a}1_{\nu}) = 1_{b} &\quad f_{b}(1_{a}0_{\nu}) = 2_{b} \quad f_{b}(1_{a}1_{\nu}) = 0_{b}, \\
    f_{c}(3_{b}0_{\mu}0_{\nu}) = 0_{c} \quad f_{c}(1_{b}0_{\mu}1_{\nu}) = 1_{c} &\quad f_{c}(2_{b}1_{\mu}0_{\nu}) = 2_{c} \quad f_{c}(0_{b}1_{\mu}1_{\nu}) = 3_{c}.
\end{split}
\label{eq:deterministic_model_example}
\end{align}

\begin{figure}
    \centering
    \begin{subfigure}[t]{0.49\textwidth}
        \centering
        \includegraphics{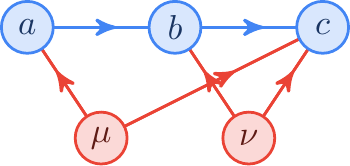}
        \caption{An example causal structure $\mathcal G_{\ref{fig:example_causal_structure}}$.}
        \label{fig:example_causal_structure}
    \end{subfigure}
    \begin{subfigure}[t]{0.49\textwidth}
        \centering
        \includegraphics{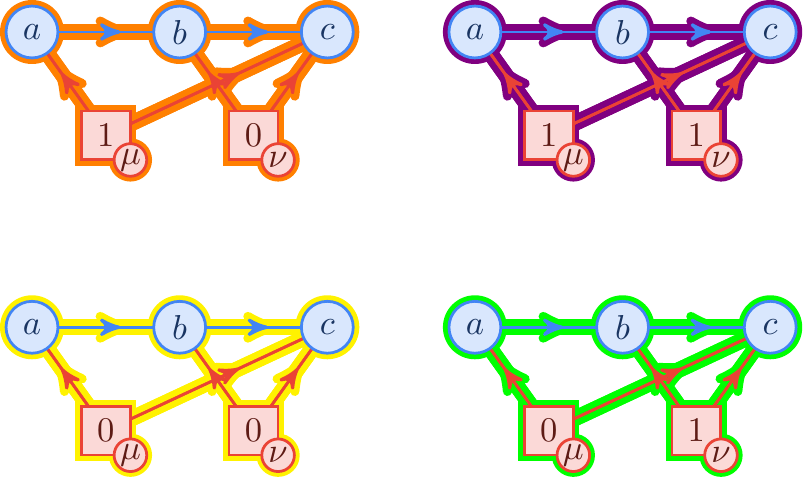}
        \caption{The possible worlds picture for $\mathcal G_{\ref{fig:example_causal_structure}}$.}
        \label{fig:example_causal_structure_definite_worlds}
    \end{subfigure}
    \begin{subfigure}[t]{0.49\textwidth}
        \centering
        \includegraphics{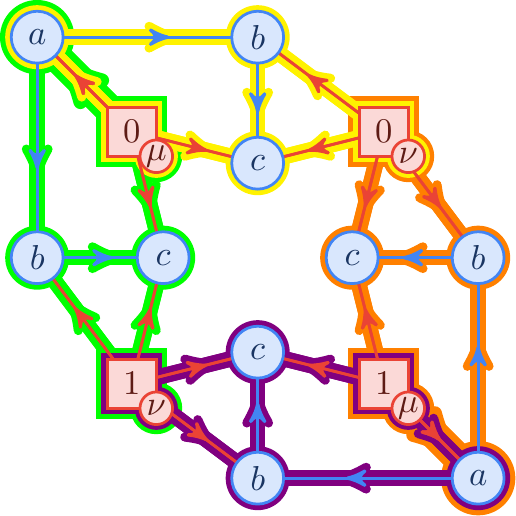}
        \caption{Identifying consistency constraints among possible worlds for $\mathcal G_{\ref{fig:example_causal_structure}}$.}
        \label{fig:example_causal_structure_merged}
    \end{subfigure}
    \begin{subfigure}[t]{0.49\textwidth}
        \centering
        \includegraphics{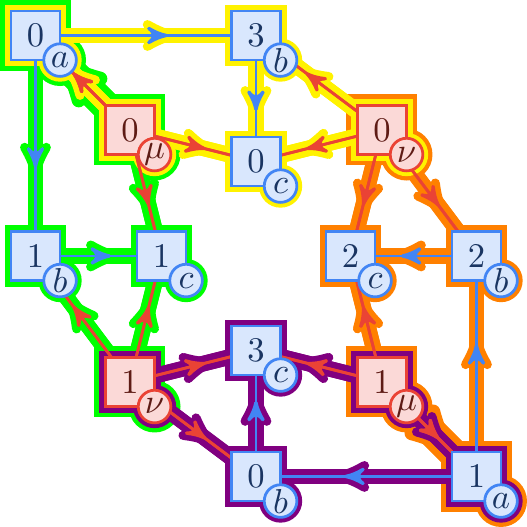}
        \caption{Populating a possible worlds diagram with the deterministic functions $f_{a}, f_{b}, f_{c}$ in Equation~\ref{eq:deterministic_model_example}.}
        \label{fig:example_causal_structure_merged_valuation}
    \end{subfigure}
    \caption{A causal structure $\mathcal G_{\ref{fig:example_causal_structure}}$ and the creation of the possible worlds diagram when $k_{\mu} = k_{\nu} = 2$.}
    \label{fig:d_structure_def}
\end{figure}

The utility of Figure~\ref{fig:example_causal_structure_merged_valuation} is in its simultaneous accounts of Equation~\ref{eq:deterministic_model_example}, the causal structure $\mathcal G_{\ref{fig:example_causal_structure}}$ and the cross-world consistency constraints that $\mathcal G_{\ref{fig:example_causal_structure}}$ induces. Nonetheless, Figure~\ref{fig:example_causal_structure_merged_valuation} fails to specify the probabilities $z_{\mu}, z_{\nu}$ associated with the latent events. In Section~\ref{sec:complete_possibilistic}, we utilize diagrams analogous to Figure~\ref{fig:example_causal_structure_merged_valuation} to tackle the causal compatibility problem. Before doing so, this paper needs to formally define the \textit{possible worlds framework}.

\begin{figure}
    \centering
    \includegraphics{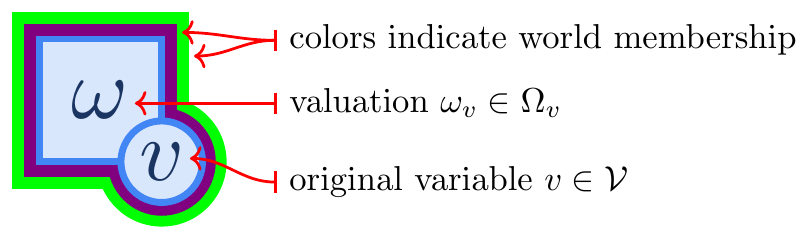}
    \caption{A vertex of a possible worlds diagram dissected.}
    \label{fig:vertex_explained}
\end{figure}
\begin{definition}[The Possible Worlds Framework]
    \label{defn:awf}
    Let $\mathcal G = \br{\mathcal V \cup \mathcal L, \mathcal E}$, be a causal structure with visible variables $\mathcal V$ and latent variables $\mathcal L$. Let $\mathcal F_{\mathcal V}$ be a set of functional parameters for $\mathcal V$ defined exactly as in Equation~\ref{eq:functional_parameters}. The \term{possible worlds diagram} for the pair $(\mathcal G, \mathcal F_{\mathcal V})$ is a directed acyclic graph $\mathcal D$ satisfying the following properties:
    \begin{enumerate}
        \item (Valuation Vertices) Each vertex in $\mathcal D$ consists of three pieces (consult Figure~\ref{fig:vertex_explained} for clarity):
        \begin{enumerate}
            \item a subscript $q \in \mathcal V \cup \mathcal L$ corresponding to a vertex in $\mathcal G$ (indicated inside a small circle in the bottom-right corner),
            \item an integer $\omega$ corresponding to a possible valuation/outcome $\omega_{q}$ of $q$ where $\omega_{q} \in \bc{0_{q}, 1_{q}, \ldots} = \Omega_{q}$ (indicated inside the square of each vertex),
            \item and a decoration in the form of colored outlines\footnote{The order of the colored outlines are arbitrary.} indicating which worlds (defined below) the vertex is a member of\footnote{Every valuation vertex belongs to at least one world.}.
        \end{enumerate}
        \item (Ancestral Isomorphism)\footnote{Readers who are familiar with the Inflation technique~{\cite{Wolfe_2016}} will recognize this ancestral isomorphism property from the definition of an \textit{Inflation} of a causal structure. The critical difference between a possible worlds diagram and an Inflation is that vertices in the former represent valuations of variables whereas vertices in the latter represent independent copies of the variables.} For every valuation vertex $\omega_{q}$ in $\mathcal D$, the ancestral subgraph of $\omega_{q}$ in $\mathcal D$ is isomorphic to the ancestral subgraph of $q$ in $\mathcal G$ under the map $\omega_{q} \mapsto q$.
        \begin{align}
            \label{eq:ancestral_matching}
            \sub[\mathcal D]{\an[\mathcal D]{\omega_{q}}} \simeq \sub[\mathcal G]{\an[\mathcal G]{q}}
        \end{align}
        \item (Consistency) Each valuation vertex $x_{v}$ of a visible variable $v \in \mathcal V$ is consistent with the output of the functional parameter $f_{v} \in \mathcal F_{\mathcal V}$ when applied to the valuation vertices $\pa[\mathcal D]{x_{v}}$,
        \begin{align}
            x_{v} = f_{v}(\pa[\mathcal D]{x_{v}})
        \end{align}
        \item (Uniqueness) For each latent variable $\ell \in \mathcal L$, and for every valuation $\lambda_{\ell} \in \Omega_{\ell}$ there exists a \textit{unique} valuation vertex in $\mathcal D$ corresponding to $\lambda_{\ell}$. Unlike latent valuation vertices, the valuations of visible variables $x_{v} \in \Omega_{v}$ may be repeated (or absent) from $\mathcal D$ depending on the form of $\mathcal F_{\mathcal V}$. In such cases, duplicated $x_{v}$'s are always uniquely distinguished by world membership (colored outline).
        \item (Worlds) A \term{world} is a subgraph of $\mathcal D$ that is isomorphic to $\mathcal G$ under the map $\omega_{q} \mapsto q$. Let $\wor{\lambda_{\mathcal L}} \subseteq \mathcal D$ denote the world containing the valuation $\lambda_{\mathcal L} \in \Omega_{\mathcal L}$\footnote{The uniqueness property guarantees that each world $\wor{\lambda_{\mathcal L}}$ is uniquely determined by $\lambda_{\mathcal L}$.}. Furthermore, for any subset $V \subseteq \mathcal V$ of visible variables, let $\obs[V]{\lambda_{\mathcal L}} \in \Omega_{V}$ denote the observed event supported by $\wor{\lambda_{\mathcal L}}$.
        \item (Completeness) For every valuation of the latent variables $\lambda_{\mathcal L} \in \Omega_{\mathcal L}$, there exists a subgraph corresponding to $\wor{\lambda_{\mathcal L}}$.\footnote{Sometimes it is useful to construct an \textit{incomplete} possible worlds diagram; for example, Figure~\ref{fig:bell_pr_box}.}
    \end{enumerate}
\end{definition}

It is important to remark that although a possible worlds diagram $\mathcal D$ can be constructed from the pair ($\mathcal G, \mathcal F_{\mathcal V}$), the two mathematical objects are not equivalent; the functional parameters $\mathcal F_{\mathcal V}$ can contain superfluous information that never appears in $\mathcal D$. We return to this subtle but crucial observation in Section~\ref{sec:symmetry_and_superfluous}.

The essential purpose of the possible worlds construction is as a diagrammatic tool for calculating the observational predictions of a functional causal model. Lemma~\ref{lem:worlds} captures this essence.

\begin{lemma}
    \label{lem:worlds}
    Given a functional causal model $\br{\mathcal G = \br{\mathcal V \cup \mathcal L, \mathcal E}, \mathcal F_{\mathcal V}, \mathcal P_{\mathcal L}}$ (see Definition~\ref{defn:functional_causal_model}), let $\mathcal D$ be the possible worlds diagram for ($\mathcal G, \mathcal F_{\mathcal V}$). The causal compatibility criterion (Equation~\ref{eq:functional_compatibility}) for $\mathcal G$ is equivalent to a probabilistic sum over worlds in $\mathcal D$:
    \begin{align}
        \mathtt{P}_{\mathcal V} = \sum_{\lambda_{\mathcal L} \in \Omega_{\mathcal L}} \prod_{\ell \in \mathcal L} \mathtt{P}_{\ell}(\lambda_{\ell}) [\obs[\mathcal V]{\lambda_{\mathcal L}}]. \label{eq:world_compatibility}
    \end{align}
\end{lemma}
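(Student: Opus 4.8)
The plan is to show that the right-hand side of Equation~\ref{eq:functional_compatibility} and the right-hand side of Equation~\ref{eq:world_compatibility} are literally the same expression once one unpacks the definition of $\obs[\mathcal V]{\lambda_{\mathcal L}}$. The key observation is that the inner product $\prod_{v \in \mathcal V} \delta(x_v, f_v(x_{\vpa{v}}, \lambda_{\lpa{v}}))$ in Equation~\ref{eq:functional_compatibility}, viewed as a function of $x_{\mathcal V}$ for a fixed $\lambda_{\mathcal L}$, is an indicator supported on a single point of $\Omega_{\mathcal V}$ — precisely the point obtained by evaluating the functions $f_v$ in topological order. So the first step is to make this precise: fix $\lambda_{\mathcal L} \in \Omega_{\mathcal L}$, and argue by induction on a topological ordering of $\mathcal V$ in $\mathcal G$ that the system of equations $x_v = f_v(x_{\vpa{v}}, \lambda_{\lpa{v}})$ for all $v \in \mathcal V$ has a unique solution $x_{\mathcal V}^{*}(\lambda_{\mathcal L})$. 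The base case is a visible vertex with no visible parents, whose value is fixed by $f_v$ applied to its latent parents; the inductive step evaluates $f_v$ on already-determined visible parents and the given latent values. Consequently $\prod_{v\in\mathcal V}\delta(x_v, f_v(\ldots)) = [x_{\mathcal V}^{*}(\lambda_{\mathcal L})](x_{\mathcal V})$ as distributions in $x_{\mathcal V}$.

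The second step is to identify $x_{\mathcal V}^{*}(\lambda_{\mathcal L})$ with $\obs[\mathcal V]{\lambda_{\mathcal L}}$ as defined via the possible worlds diagram $\mathcal D$ in Definition~\ref{defn:awf}. By the Completeness property, $\mathcal D$ contains a subgraph $\wor{\lambda_{\mathcal L}}$ isomorphic to $\mathcal G$ (Worlds property), and by the Uniqueness property this world is the one whose latent valuation vertices carry the values $\lambda_{\ell}$. Within $\wor{\lambda_{\mathcal L}}$, the Consistency property forces $x_v = f_v(\pa[\mathcal D]{x_v})$ for each visible vertex, and the Ancestral Isomorphism guarantees that $\pa[\mathcal D]{x_v}$ consists of exactly the valuation vertices corresponding to $\pa[\mathcal G]{v}$ inside that world — i.e. the visible parents take the values already assigned in $\wor{\lambda_{\mathcal L}}$ and the latent parents take $\lambda_{\lpa{v}}$. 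Hence the values recorded on the visible vertices of $\wor{\lambda_{\mathcal L}}$ satisfy the very same recursion that defines $x_{\mathcal V}^{*}(\lambda_{\mathcal L})$, and by the uniqueness established in step one they coincide: $\obs[\mathcal V]{\lambda_{\mathcal L}} = x_{\mathcal V}^{*}(\lambda_{\mathcal L})$.

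The third and final step is bookkeeping on the latent sum. Equation~\ref{eq:functional_compatibility} is written as $\prod_{\ell\in\mathcal L}\sum_{\lambda_\ell\in\Omega_\ell}\mathtt P_\ell(\lambda_\ell)(\cdots)$, but since the latent variables are exogenous and independent, this product of sums expands to a single sum over $\lambda_{\mathcal L}\in\Omega_{\mathcal L} = \prod_{\ell}\Omega_\ell$ of $\prod_{\ell\in\mathcal L}\mathtt P_\ell(\lambda_\ell)$ times the (now $\lambda_{\mathcal L}$-dependent) product of $\delta$'s; here one also notes the harmless typo "$v\in\mathcal L$" under the inner product in Equation~\ref{eq:functional_compatibility} should read $v\in\mathcal V$. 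Substituting the conclusion of step two, $\prod_v \delta(x_v,f_v(\ldots)) = [\obs[\mathcal V]{\lambda_{\mathcal L}}]$, turns the expression exactly into $\sum_{\lambda_{\mathcal L}\in\Omega_{\mathcal L}}\prod_{\ell\in\mathcal L}\mathtt P_\ell(\lambda_\ell)[\obs[\mathcal V]{\lambda_{\mathcal L}}]$, which is Equation~\ref{eq:world_compatibility}. I expect the only genuine content — and hence the main obstacle, though it is mild — to be step two: carefully justifying that the Ancestral Isomorphism and Uniqueness properties really do pin down the parents of each visible vertex inside a single world to be the intended combination of visible and latent valuations, so that the diagram's recursion matches the functional model's recursion. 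Everything else is rewriting a product of sums as a sum of products and invoking the uniqueness of solutions to a triangular system.
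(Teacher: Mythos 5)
Your proposal is correct, and it matches the argument the paper relies on: the paper states Lemma~\ref{lem:worlds} without an explicit proof, treating it as an immediate unpacking of Definitions~\ref{defn:functional_causal_model} and~\ref{defn:awf}, and your three steps --- topological induction giving a unique solution $x^{*}_{\mathcal V}(\lambda_{\mathcal L})$ of the triangular system $x_{v}=f_{v}(x_{\vpa{v}},\lambda_{\lpa{v}})$, identification of that solution with $\obs[\mathcal V]{\lambda_{\mathcal L}}$ via the Completeness, Uniqueness, Ancestral Isomorphism and Consistency properties of $\mathcal D$, and the rewriting of the product of sums in Equation~\ref{eq:functional_compatibility} as a single sum over $\Omega_{\mathcal L}$ --- are exactly that unpacking. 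Your observation that the index ``$v\in\mathcal L$'' in Equation~\ref{eq:functional_compatibility} should read $v\in\mathcal V$ is also correct.
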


The remainder of this paper explores the consequences of adopting the possible worlds framework as a method for tackling the causal compatibility problem.

\section{A Complete Possibilistic Solution}
\label{sec:complete_possibilistic}

Section~\ref{sec:possible_worlds_framework} introduced the possible worlds framework as a technique for calculating the observable predictions of a functional causal model by means of Lemma~\ref{lem:worlds}. In this section, we use the possible worlds framework to develop a combinatorial algorithm for completely solving the \textit{possibilistic} causal compatibility problem.
\begin{definition}
    Given a probability distribution $\mathtt{P}_{\mathcal V} : \Omega_{\mathcal V} \to \bs{0,1}$, its \term{support} $\sigma(\mathtt{P}_{\mathcal V})$ is defined as the subset of events which are possible,
    \begin{align}
        \sigma(\mathtt{P}_{\mathcal V}) = \bc{x_{\mathcal V} \in \Omega_{\mathcal V} \mid \mathtt{P}_{\mathcal V}(x_{\mathcal V}) > 0}.
    \end{align}
\end{definition}
An observed distribution $\mathtt{P}_{\mathcal V}$ is said to be \textit{possibilistically compatible} with $\mathcal G$ if there exists a functional causal model $\br{\mathcal G, \mathcal F_{\mathcal V}, \mathcal P_{\mathcal L}}$ for which Equation~\ref{eq:functional_compatibility} produces a distribution with the same support as $\mathtt{P}_{\mathcal V}$. The possibilistic variant of the causal compatibility problem is naturally related to the probabilistic causal compatibility problem defined in Definition~\ref{defn:causal_compat}; if a distribution is possibilistically incompatible with $\mathcal G$, then it is also probabilistically incompatible. We now proceed to apply the possible worlds framework to prove possibilistic incompatibility between a number of distribution/causal structure pairs.

\subsection{A Simple Example Causal Structure}
\label{sec:d-separation}

\begin{figure}
    \centering
    \includegraphics{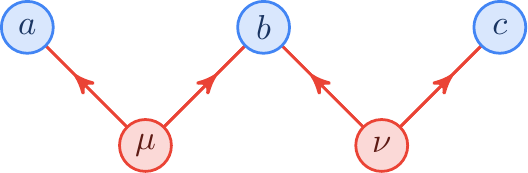}
    \caption{A causal structure $\mathcal G_{\ref{fig:w_structure}}$ with three visible vertices $\mathcal V = \bc{a,b,c}$ and two latent vertices $\mathcal L = \bc{\mu, \nu}$.}
    \label{fig:w_structure}
\end{figure}

\begin{figure}
    \centering
    \begin{subfigure}[t]{0.49\textwidth}
        \centering
        \includegraphics{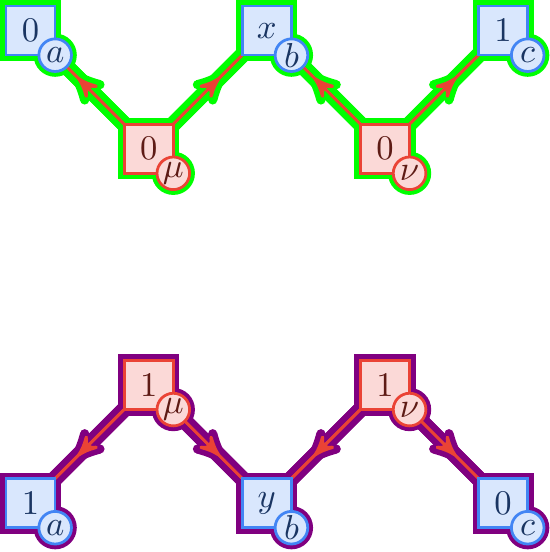}
        \caption{An incomplete possible worlds diagram for $\mathcal G_{\ref{fig:w_structure}}$ initialized by $\mathtt{P}_{abc}^{(\ref{eq:w_dist})}$. The worlds are colored: $\colorline{green}{\wor{0_{\mu}0_{\nu}}}$ green, $\colorline{violet}{\wor{1_{\mu}1_{\nu}}}$ violet.}
        \label{fig:wsdw1}
    \end{subfigure}
    \begin{subfigure}[t]{0.49\textwidth}
        \centering
        \includegraphics{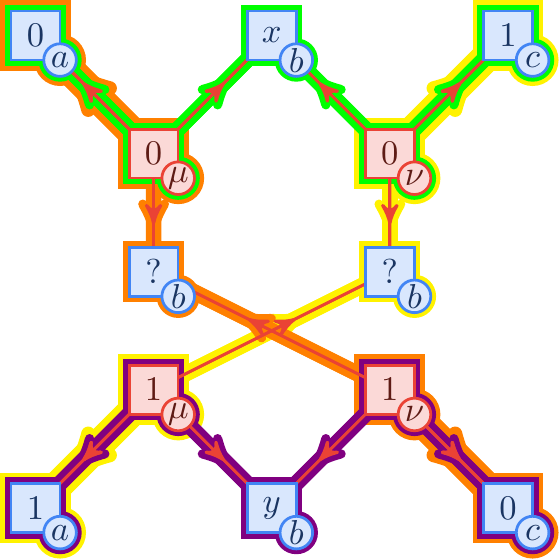}
        \caption{Considering possible worlds produces a contradiction with $\mathtt{P}_{abc}^{(\ref{eq:w_dist})}$. The additional worlds are colored: $\colorline{orange}{\wor{0_{\mu}1_{\nu}}}$ orange, $\colorline{yellow}{\wor{1_{\mu}0_{\nu}}}$ yellow.}
        \label{fig:wsdw2}
    \end{subfigure}
    \caption{The possible worlds diagram for $\mathcal G_{\ref{fig:w_structure}}$ (Figure~\ref{fig:w_structure}) is incompatible with $\mathtt{P}_{abc}^{(\ref{eq:w_dist})}$ (Equation~\ref{eq:w_dist}).}
    \label{fig:w_structure_completion}
\end{figure}
Consider the causal structure $\mathcal G_{\ref{fig:w_structure}}$ depicted in Figure~\ref{fig:w_structure}. For $\mathcal G_{\ref{fig:w_structure}}$, the causal compatibility criteria (Equation~\ref{eq:functional_compatibility}) takes the form,
\begin{align}
    \mathtt{P}_{abc}(x_{a}x_{b}x_{c}) = \sum_{\lambda_{\mu} \in \Omega_{\mu}}\sum_{\lambda_{\nu} \in \Omega_{\nu}} \mathtt{P}_{{\mu}}(\lambda_{\mu})\mathtt{P}_{{\nu}}(\lambda_{\nu}) \delta(x_{a}, f_{a}(\lambda_{\mu}))\delta(x_{b}, f_{b}(\lambda_{\mu}, \lambda_{\nu}))\delta(x_{c}, f_{c}(\lambda_{\nu})). \label{eq:w_compatibility}
\end{align}
The following family of distributions for arbitrary $x_{b},y_{b} \in \Omega_{b}$,
\begin{align}
    \mathtt{P}_{abc}^{(\ref{eq:w_dist})} = z[0_{a}x_{b}1_{c}] + (1-z)[1_{a}y_{b}0_{c}]), \quad 0 < z < 1,
    \label{eq:w_dist}
\end{align}
are incompatible with $\mathcal G_{\ref{fig:w_structure}}$. Traditionally, distributions like $\mathtt{P}_{abc}^{(\ref{eq:w_dist})}$ are proven incompatible on the basis that they violate an independence constraint that is implied by $\mathcal G_{\ref{fig:w_structure}}$~\cite{Pearl_2009}, namely,
\begin{align}
    \forall \mathtt{P}_{abc} \in \mathcal M(\mathcal G_{\ref{fig:w_structure}}), \quad \mathtt{P}_{ac}(x_{a}x_{c}) = \mathtt{P}_{a}(x_{a})\mathtt{P}_{c}(x_{c}).
\end{align}
Intuitively, $\mathcal G_{\ref{fig:w_structure}}$ provides no latent mechanism by which $a$ and $c$ can attempt to correlate (or anti-correlate). We now prove the possibilistic incompatibility of the support $\sigma(\mathtt{P}_{abc}^{(\ref{eq:w_dist})})$ with $\mathcal G_{\ref{fig:w_structure}}$ using the possible worlds framework.
\begin{proof}
    Proof by contradiction; assume that a functional causal model $\mathcal F_{\mathcal V} = \bc{f_{a}, f_{b}, f_{c}}$ for $\mathcal G_{\ref{fig:w_structure}}$ exists such that Equation~\ref{eq:w_compatibility} produces $\mathtt{P}_{abc}^{(\ref{eq:w_dist})}$. Since there are two distinct valuations of the joint variables $abc$ in $\mathtt{P}_{abc}^{(\ref{eq:w_dist})}$, namely $0_{a}x_{b}1_{c}$ and $1_{a}y_{b}0_{c}$, consider each as being sampled from two possible worlds. Without loss of generality\footnote{There is no loss of generality in choosing $0_{\mu}0_{\nu}$ and $1_{\mu}1_{\nu}$ (instead of $0_{\mu}1_{\nu}$ and $1_{\mu}0_{\nu}$) as the valuations for the worlds because the valuation ``{labels}'' associated with latent events are arbitrary. The valuations can not be $0_{\mu}1_{\nu}$ and $1_{\mu}1_{\nu}$ because of the cross-world consistency constraint $\obs[c]{0_{\mu}1_{\nu}} = \obs[c]{1_{\mu}1_{\nu}} = f_{c}(1_{\nu})$.}, let $0_{\mu}0_{\nu} \in \Omega_{\mu} \times \Omega_{\nu}$ denote any valuation of the latent variables such that $\colorline{green}{\obs[abc]{0_{\mu}0_{\nu}}} = 0_{a}x_{b}1_{c}$. Similarly, let $1_{\mu}1_{\nu} \in \Omega_{\mu}\times \Omega_{\nu}$ denote any valuation of the latent variables such that $\colorline{violet}{\obs[abc]{1_{\mu}1_{\nu}}} = 1_{a}y_{b}0_{c}$. Using these observations, initialize a possible worlds diagram using $\colorline{green}{\wor{0_{\mu}0_{\nu}}}$, colored green, and $\colorline{violet}{\wor{1_{\mu}1_{\nu}}}$, colored violet, as seen in Figure~\ref{fig:wsdw1}. In order to complete Figure~\ref{fig:wsdw1}, one simply needs to specify the behavior of $b$ in two of the ``off-diagonal'' worlds, namely $\colorline{orange}{\wor{0_{\mu}1_{\nu}}}$, colored orange, and $\colorline{yellow}{\wor{1_{\mu}0_{\nu}}}$, colored yellow (see Figure~\ref{fig:wsdw2}). Regardless of this choice, the observed event $\colorline{orange}{\obs[ac]{0_{\mu}1_{\nu}} = 0_{a}0_{c}}$ in the orange world $\colorline{orange}{\wor{0_{\mu}1_{\nu}}}$ predicts $\mathtt{P}_{ac}(0_{a}0_{c}) > 0$\footnote{The probabilities associated to each world by Lemma~\ref{lem:worlds} can always be assumed positive, because otherwise, those valuations would be excluded from the latent sample space $\Omega_{\mathcal L}$.} which contradicts $\mathtt{P}_{abc}^{(\ref{eq:w_dist})}$. Therefore, because the proof technique did not rely on the value of $0 < z < 1$, $\mathtt{P}_{abc}^{(\ref{eq:w_dist})}$ is possibilistically incompatible with $\mathcal G_{\ref{fig:w_structure}}$.
\end{proof}

\subsection{The Instrumental Structure}

\begin{figure}
    \centering
    \includegraphics{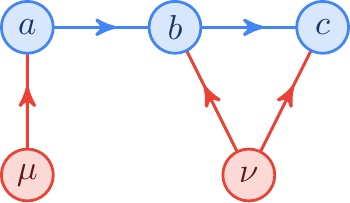}
    \caption{The Instrumental Scenario.}
    \label{fig:instrumental_scenario}
\end{figure}
\begin{figure}
    \centering
    \begin{subfigure}[t]{0.49\textwidth}
        \centering
        \includegraphics{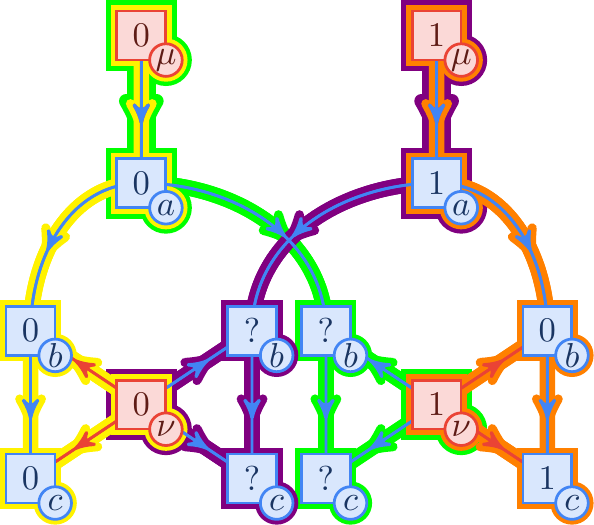}
        \caption{Worlds $\colorline{yellow}{\wor{0_{\mu}0_{\nu}}}$, and $\colorline{orange}{\wor{1_{\mu}1_{\nu}}}$ are initialized by the observed events in Equation~\ref{eq:instrumental_dist}.}
        \label{fig:ispv1}
    \end{subfigure}
    \begin{subfigure}[t]{0.49\textwidth}
        \centering
        \includegraphics{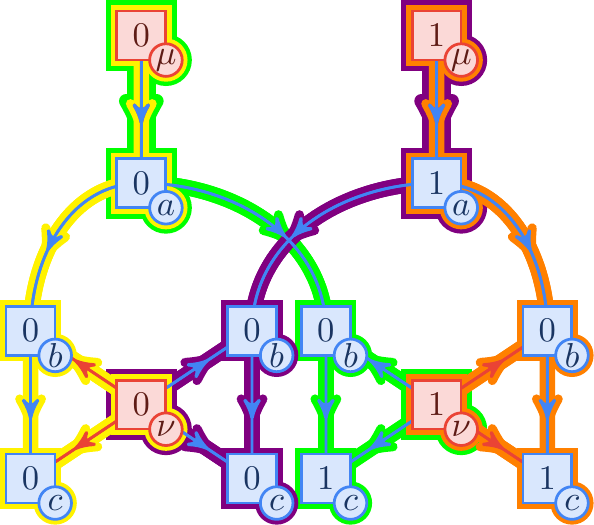}
        \caption{Populating the events in $\colorline{green}{\wor{0_{\mu}1_{\nu}}}$ and $\colorline{violet}{\wor{1_{\mu}0_{\nu}}}$ leads to a contradiction with Equation~\ref{eq:instrumental_dist}. }
        \label{fig:ispv2}
    \end{subfigure}
    \caption{A possible worlds diagram for $\mathcal G_{\ref{fig:instrumental_scenario}}$ (Figure~\ref{fig:instrumental_scenario}). The worlds are colored: $\colorline{yellow}{\wor{0_{\mu}0_{\nu}}}$ yellow, $\colorline{orange}{\wor{1_{\mu}1_{\nu}}}$ orange, $\colorline{violet}{\wor{1_{\mu}0_{\nu}}}$ violet, $\colorline{green}{\wor{0_{\mu}1_{\nu}}}$ green.}
    \label{fig:incomplete_valuation_instrumental}
\end{figure}
The causal structure $\mathcal G_{\ref{fig:instrumental_scenario}}$ depicted in Figure~\ref{fig:instrumental_scenario} is known as the Instrumental Scenario~\cite{Bonet_2013,Pearl_1995,Pearl_2013_2}. For $\mathcal G_{\ref{fig:instrumental_scenario}}$, Equation~\ref{eq:functional_compatibility} takes the form,
\begin{align}
    \mathtt{P}_{{a}{b}{c}}\br{x_{a}x_{b}x_{c}} = \sum_{\lambda_{\mu} \in \Omega_{\mu}}\sum_{\lambda_{\nu} \in \Omega_{\nu}} \mathtt{P}_{{\mu}}(\lambda_{\mu})\mathtt{P}_{{\nu}}(\lambda_{\nu}) \delta(x_{a}, f_{a}(\lambda_{\mu}))\delta(x_{b}, f_{b}(a, \lambda_{\nu}))\delta(x_{c}, f_{c}(b, \lambda_{\nu})). \label{eq:instrumental_compatibility}
\end{align}
The following family of distributions,
\begin{align}
    \label{eq:instrumental_dist}
    \mathtt{P}_{abc}^{(\ref{eq:instrumental_dist})} = z\bs{0_{a}0_{b}0_{c}} + (1-z)\bs{1_{a}0_{b}1_{c}}, \quad 0 < z < 1,
\end{align}
are possibilistically incompatible with $\mathcal G_{\ref{fig:instrumental_scenario}}$. The Instrumental scenario $\mathcal G_{\ref{fig:instrumental_scenario}}$ is different from $\mathcal G_{\ref{fig:w_structure}}$ in that there are no observable conditional independence constraints which can prove the possibilistic incompatibility of $\mathtt{P}_{abc}^{(\ref{eq:instrumental_dist})}$. Instead, the possibilistic incompatibility of $\mathtt{P}_{abc}^{(\ref{eq:instrumental_dist})}$ is traditionally witnessed by an Instrumental inequality originally derived in~\cite{Pearl_1995},
\begin{align}
    \forall \mathtt{P}_{abc} \in \mathcal M(\mathcal G_{\ref{fig:instrumental_scenario}}), \quad \mathtt{P}_{bc|a}(0_{b}0_{c}|0_{a}) + \mathtt{P}_{bc|a}(0_{b}1_{c}|1_{a}) \leq 1.
    \label{eq:instrumental}
\end{align}
Independently of Equation~\ref{eq:instrumental}, we now prove possibilistic incompatibility of $\mathtt{P}_{abc}^{(\ref{eq:instrumental_dist})}$ with $\mathcal G_{\ref{fig:instrumental_scenario}}$ using the possible worlds framework.
\begin{proof}
    Proof by contradiction; assume that a functional model $\mathcal F_{\mathcal V} = \bc{f_{a}, f_{b}, f_{c}}$ for $\mathcal G_{\ref{fig:instrumental_scenario}}$ exists such that Equation~\ref{eq:instrumental_compatibility} produces $\mathtt{P}_{abc}^{(\ref{eq:instrumental_dist})}$ (Equation~\ref{eq:instrumental_dist}). Analogously to the proof in Section~\ref{sec:d-separation}, there are only two distinct valuations of the joint variables $abc$, namely $0_{a}0_{b}0_{c}$ and $1_{a}0_{b}1_{c}$. Therefore, define two worlds one where $\colorline{yellow}{\obs[abc]{0_{\mu}0_{\nu}}} = 0_{a}0_{b}0_{c}$ and another where $\colorline{orange}{\obs[abc]{1_{\mu}1_{\nu}}} = 1_{a}0_{b}1_{c}$. Using these two worlds, a possible worlds diagram can be initialized as in Figure~\ref{fig:ispv1} where $\colorline{yellow}{\wor{0_{\mu}0_{\nu}}}$ is colored yellow and $\colorline{orange}{\wor{1_{\mu}1_{\nu}}}$ is colored orange. In order to complete the possible worlds diagram of Figure~\ref{fig:ispv1}, one first needs to specify how $b$ behaves in two possible worlds: $\colorline{green}{\wor{0_{\mu}1_{\nu}}}$ colored green and $\colorline{violet}{\wor{1_{\mu}0_{\nu}}}$ colored violet.
    \begin{align}
    \begin{split}
        \colorline{violet}{\obs[b]{1_{\mu}0_{\nu}}} &= f_{b}(1_{a}0_{\nu}) = ?_{b}, \\
        \colorline{green}{\obs[b]{0_{\mu}1_{\nu}}} &= f_{b}(0_{a}1_{\nu}) = ?_{b}.
    \end{split}
    \end{align}
    By appealing to $\mathtt{P}_{abc}^{(\ref{eq:instrumental_dist})}$, it must be that $\colorline{violet}{\obs[b]{1_{\mu}0_{\nu}}} = \colorline{green}{\obs[b]{0_{\mu}1_{\nu}}} = 0_{b}$ as no other valuations for $b$ are in the support of $\mathtt{P}_{abc}^{(\ref{eq:instrumental_dist})}$. Finally, the remaining `unknown' observations for $c$ in the violet world $\colorline{violet}{\obs[c]{1_{\mu}0_{\nu}}} = f_{c}(0_b0_{\nu})$, and green world $\colorline{green}{\obs[c]{0_{\mu}1_{\nu}}} = f_{c}(0_b1_{\nu})$ are determined respectively by the behavior of $c$ in the orange $\colorline{orange}{\wor{1_{\mu}1_{\nu}}}$ and yellow $\colorline{yellow}{\wor{0_{\mu}0_{\nu}}}$ worlds as depicted in Figure~\ref{fig:ispv2}. Explicitly,
    \begin{align}
    \label{eq:cross_world_consistency}
    \begin{split}
        \colorline{violet}{\obs[c]{1_{\mu}0_{\nu}}} &= f_{c}(0_b0_{\nu}) = \colorline{yellow}{\obs[c]{0_{\mu}0_{\nu}}} = 0_{c}, \\
        \colorline{green}{\obs[c]{0_{\mu}1_{\nu}}} &= f_{c}(0_b1_{\nu}) = \colorline{orange}{\obs[c]{1_{\mu}1_{\nu}}} = 1_{c}.
    \end{split}
    \end{align}
    Therefore the observed events in the green and violet worlds are fixed to be,
    \begin{align}
        \colorline{green}{\obs[abc]{1_{\mu}0_{\nu}}} = 1_{a}0_{b}0_{c}, \quad \colorline{violet}{\obs[abc]{0_{\mu}1_{\nu}}} = 0_{a}0_{b}1_{c}.
    \end{align}
    Unfortunately, neither of theses events are in the support of $\mathtt{P}_{abc}^{(\ref{eq:instrumental_dist})}$, which is a contradiction; therefore $\mathtt{P}_{abc}^{(\ref{eq:instrumental_dist})}$ is possibilistically incompatible with $\mathcal G_{\ref{fig:instrumental_scenario}}$.
\end{proof}
Notice that unlike the proof from Section~\ref{sec:d-separation}, here we needed to appeal to the cross-world consistency constraints (Equation~\ref{eq:cross_world_consistency}) demanded by the possible worlds framework.

\subsection{The Bell Structure}

\begin{figure}
    \centering
    \includegraphics{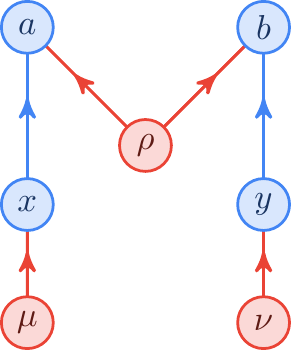}
    \caption{The Bell causal structure has variables $a,b$ `measuring' hidden variable $\rho$ with `measurement settings' $x,y$ determined independently of $\rho$.}
    \label{fig:bell_structure}
\end{figure}
\begin{figure}
    \centering
    \includegraphics{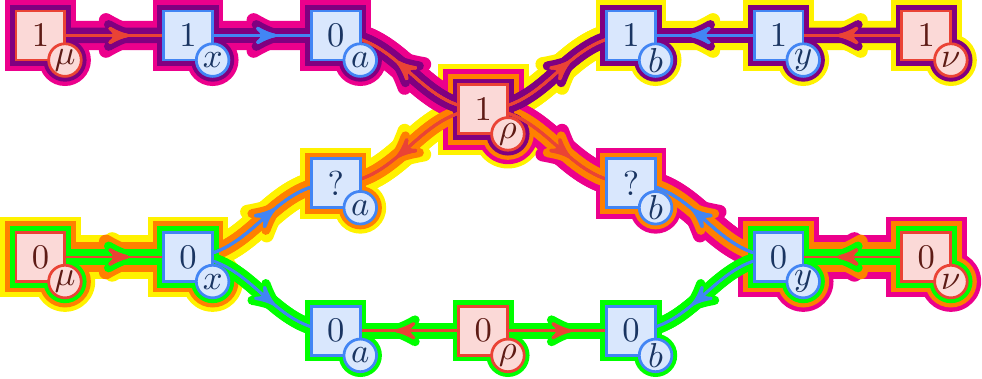}
    \caption{An incomplete possible worlds diagram for the Bell structure $\mathcal G_{\ref{fig:bell_structure}}$ (Figure~\ref{fig:bell_structure}) initialized by the observed events $\colorline{green}{\obs[xaby]{0_{\mu}0_{\rho}0_{\nu}}} = 0_{x}0_{a}0_{b}0_{y}$ and $\colorline{violet}{\obs[xaby]{1_{\mu}1_{\rho}1_{\nu}}} = 1_{x}0_{a}1_{b}1_{y}$. The worlds are colored: $\colorline{green}{\wor{0_{\mu}0_{\rho}0_{\nu}}}$ green, $\colorline{violet}{\wor{1_{\mu}1_{\rho}1_{\nu}}}$ violet, $\colorline{magenta}{\wor{1_{\mu}1_{\rho}0_{\nu}}}$ magenta, $\colorline{yellow}{\wor{0_{\mu}1_{\rho}1_{\nu}}}$ yellow, and $\colorline{orange}{\wor{0_{\mu}1_{\rho}0_{\nu}}}$ orange.}
    \label{fig:bell_pr_box}
\end{figure}
Consider the causal structure $\mathcal G_{\ref{fig:bell_structure}}$ depicted in Figure~\ref{fig:bell_structure} known as the Bell structure~\cite{Bell_1964}. From the perspective of causal inference, Bell's theorem~\cite{Bell_1964} states that any distribution compatible with $\mathcal G_{\ref{fig:bell_structure}}$ must satisfy an inequality constraint known as a Bell inequality. For example, the inequality due to Clauser, Horne, Shimony and Holt, referred to as the CHSH inequality, constrains correlations held between $a$ and $b$ as $x, y$ vary~\cite{CHSH_Original}\footnote{The two variable correlation is defined as $\left\langle ab | x_{x}x_{y}\right\rangle = \sum_{i,j=1}^{2} (-1)^{i + j} \mathtt{P}_{ab|xy}(i_{a}j_{b}|x_{x}x_{y})$. },
\begin{align}
    \forall \mathtt{P}_{xaby} \in \mathcal M(\mathcal G_{\ref{fig:bell_structure}}), \quad S = \left\langle ab|0_{x}0_{y}\right\rangle + \left\langle ab|0_{x}1_{y}\right\rangle + \left\langle ab|1_{x}0_{y}\right\rangle - \left\langle ab|1_{x}1_{y}\right\rangle, \quad \abs{S} \leq 2
    \label{eq:CHSH}
\end{align}
Correlations measured by quantum theory are capable of violating this inequality up to $S = 2 \sqrt{2}$~\cite{Cirelson_1980}. This violation is not maximum; it is possible to achieve a violation of $S = 4$ using Popescu-Rohrlich box correlations~\cite{PR_1995}. The following distribution is an example of a Popescu-Rohrlich box correlation,
\begin{align}
\begin{split}
    \mathtt{P}_{xaby}^{(\ref{eq:bell_dist})}
    = \frac{1}{8}(&[0_{x}0_{a}0_{b}0_{y}] + [0_{x}1_{a}1_{b}0_{y}] + [0_{x}0_{a}0_{b}1_{y}] + [0_{x}1_{a}1_{b}1_{y}] + \\
    + &[1_{x}0_{a}0_{b}0_{y}] + [1_{x}1_{a}1_{b}0_{y}] + [1_{x}0_{a}1_{b}1_{y}] + [1_{x}1_{a}0_{b}1_{y}]).
\end{split}
\label{eq:bell_dist}
\end{align}
Unlike $\mathcal G_{\ref{fig:instrumental_scenario}}$, there are conditional independence constraints placed on correlations compatible with $\mathcal G_{\ref{fig:bell_structure}}$, namely the no-signaling constraints $\mathtt{P}_{a|xy} = \mathtt{P}_{a|x}$ and $\mathtt{P}_{b|xy} = \mathtt{P}_{b|y}$. Because $\mathtt{P}_{xaby}^{(\ref{eq:bell_dist})}$ satisfies the no-signaling constraints, the incompatibility of $\mathtt{P}_{xaby}^{(\ref{eq:bell_dist})}$ with $\mathcal G_{\ref{fig:bell_structure}}$ is traditionally proven using Equation~\ref{eq:CHSH}. We now proceed to prove its incompatibility using the possible worlds framework.
\begin{proof}
    Proof by contradiction; assume that a functional causal model $\mathcal F_{\mathcal V} = \bc{f_{a}, f_{b}, f_{x}, f_{y}}$ for $\mathcal G_{\ref{fig:bell_structure}}$ exists which supports $\mathtt{P}_{xaby}^{(\ref{eq:bell_dist})}$ and use the possible worlds framework. Unlike the previous proofs, we only need to consider a subset of the events in $\mathtt{P}_{xaby}^{(\ref{eq:bell_dist})}$ to initialize a possible worlds diagram. Consider the following pair of events and associated latent valuations which support them\footnote{Clearly, the values of $\lambda_{\mu}$ and $\lambda_{\nu}$ that support these worlds must be unique. Less obvious is the possibility for these worlds to share a $\lambda_{\rho}$ value. Albeit if they do, the event $0_{x}0_{a}1_{b}1_{y}$ becomes possible, contradicting $\mathtt{P}_{xaby}^{(\ref{eq:bell_dist})}$ as well.},
    \begin{align}
        \colorline{green}{\obs[xaby]{0_{\mu}0_{\rho}0_{\nu}}} = 0_{a}0_{b}0_{x}0_{y}, \qquad \colorline{violet}{\obs[xaby]{1_{\mu}1_{\rho}1_{\nu}}} = 1_{a}0_{b}1_{x}1_{y}.
        \label{eq:bell_pr_seed_events}
    \end{align}
    Using Equation~\ref{eq:bell_pr_seed_events}, initialize the possible worlds diagram in Figure~\ref{fig:bell_pr_box} with worlds $\colorline{green}{\wor{0_{\mu}0_{\rho}0_{\nu}}}$ colored green and $\colorline{violet}{\wor{1_{\mu}1_{\rho}1_{\nu}}}$ colored violet. An unavoidable contradiction arises when attempting to populate the values for $f_{a}(0_{x}1_{\rho})$ in the yellow world $\colorline{yellow}{\wor{0_{\mu}1_{\rho}1_{\nu}}}$ and $f_{b}(0_{y}1_{\rho})$ in the magenta world $\colorline{magenta}{\wor{1_{\mu}1_{\rho}0_{\nu}}}$. First, the observed event $\colorline{yellow}{\obs[xaby]{0_{\mu}1_{\rho}1_{\nu}}} = 0_{x}?_{a}1_{b}1_{y}$ in the yellow world $\colorline{yellow}{\wor{0_{\mu}1_{\rho}1_{\nu}}}$ must belong to the list of possible events prescribed by $\mathtt{P}_{xaby}^{(\ref{eq:bell_dist})}$; a quick inspection leads one to recognize that the only possibility is $\colorline{yellow}{\obs[a]{0_{\mu}1_{\rho}1_{\nu}}} = f_{a}(0_{x}1_{\rho}) = 1_{a}$. An analogous argument in the magenta world $\colorline{magenta}{\wor{1_{\mu}1_{\rho}0_{\nu}}}$ proves that $\colorline{magenta}{\obs[b]{1_{\mu}1_{\rho}0_{\nu}}} = f_{b}(0_{y}1_{\rho}) = 0_{b}$. Therefore, the observed event in the orange world $\colorline{orange}{\wor{0_{\mu}1_{\rho}0_{\nu}}}$ must be,
    \begin{align}
        \colorline{orange}{\obs[abcd]{0_{\mu}1_{\rho}0_{\nu}}} = 0_{x}1_{a}0_{b}0_{y},
    \end{align}
    and therefore $\mathtt{P}_{xaby}(0_{x}1_{a}0_{b}0_{y}) > 0$ which contradicts $\mathtt{P}_{xaby}^{(\ref{eq:bell_dist})}$. Therefore, $\mathtt{P}_{xaby}^{(\ref{eq:bell_dist})}$ is possibilistically\footnote{The proof holds if the probabilities of the events in $\mathtt{P}_{xaby}^{(\ref{eq:bell_dist})}$ are any positive value.} incompatible with $\mathcal G_{\ref{fig:bell_structure}}$.
\end{proof}

\subsection{The Triangle Structure}

\begin{figure}
    \centering
    \includegraphics{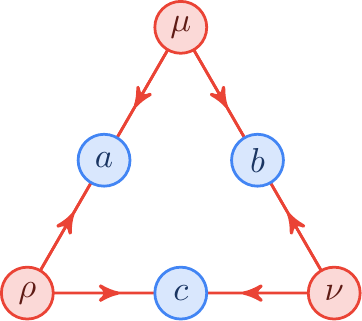}
    \caption{The Triangle structure $\mathcal G_{\ref{fig:triangle_structure}}$ involving three visible variables $\mathcal V = \bc{a, b, c}$ each sharing a pair of latent variables from $\mathcal L = \bc{\mu, \nu, \rho}$.}
    \label{fig:triangle_structure}
\end{figure}
\begin{figure}
    \centering
    \includegraphics{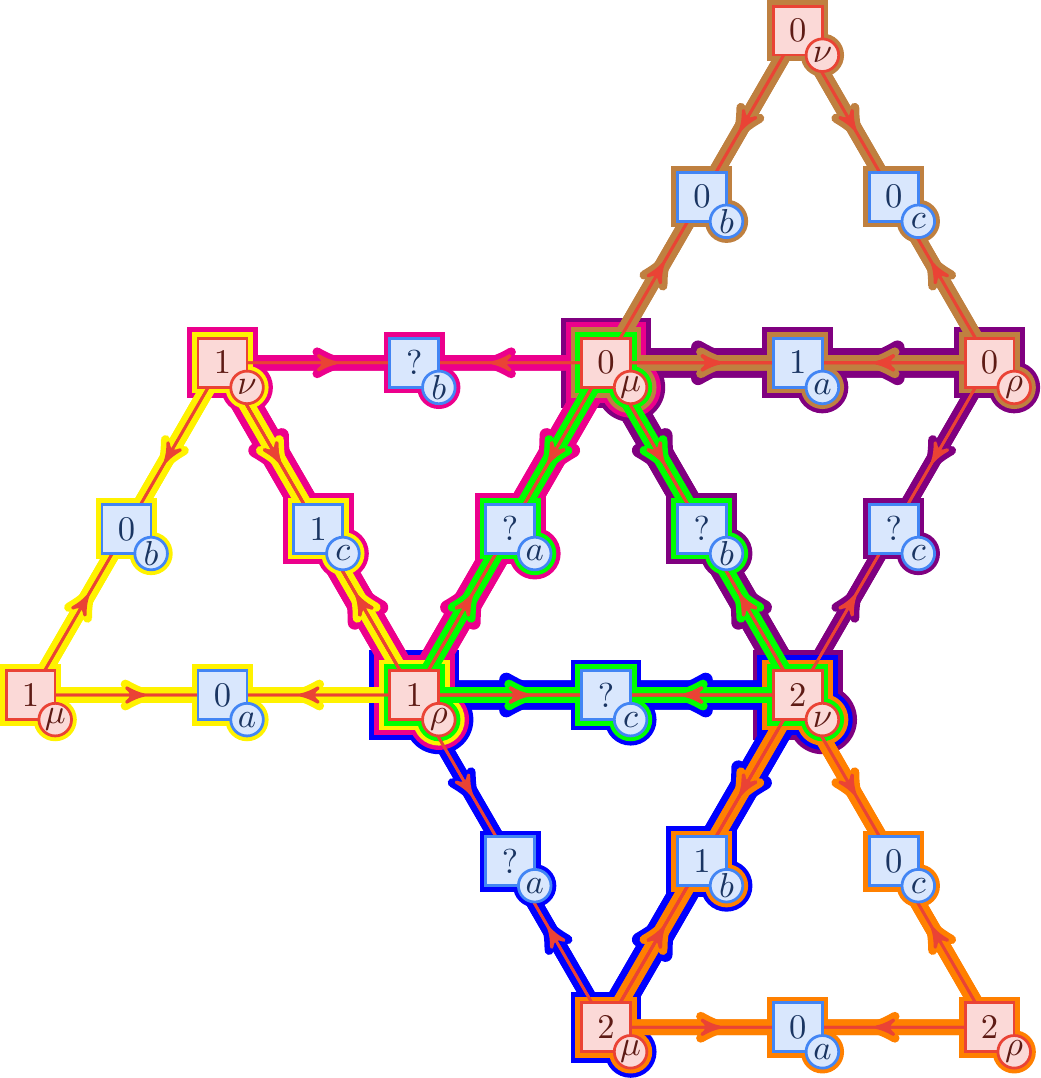}
    \caption{An incomplete possible worlds diagram for the Triangle structure $\mathcal G_{\ref{fig:triangle_structure}}$ (Figure~\ref{fig:triangle_structure}) initialized by the triplet of observed events in Equation~\ref{eq:triangle_seed_events}. The worlds are colored: $\colorline{brown}{\wor{0_{\mu}0_{\nu}0_{\rho}}}$ brown, $\colorline{yellow}{\wor{1_{\mu}1_{\nu}1_{\rho}}}$ yellow, $\colorline{orange}{\wor{2_{\mu}2_{\nu}2_{\rho}}}$ orange, $\colorline{magenta}{\wor{0_{\mu}1_{\nu}1_{\rho}}}$ magenta, $\colorline{blue}{\wor{2_{\mu}2_{\nu}1_{\rho}}}$ blue, $\colorline{violet}{\wor{0_{\mu}2_{\nu}0_{\rho}}}$ violet, and $\colorline{green}{\wor{0_{\mu}2_{\nu}1_{\rho}}}$ green.}
    \label{fig:triangle_structure_definite_worlds}
\end{figure}
Consider the causal structure $\mathcal G_{\ref{fig:triangle_structure}}$ depicted in Figure~\ref{fig:triangle_structure} known as the Triangle structure. The Triangle has been studied extensively in recent decades~\cite{Studel_2015,Fritz_2012,Chaves_2014,Branciard_2012,Henson_2014,Weilenmann_2016,Navascues_2017,Wolfe_2016,Fraser_2017}. The following family of distributions are possibilistically incompatible with $\mathcal G_{\ref{fig:triangle_structure}}$\footnote{The Inflation Technique first proved the incompatibility between $\mathtt{P}_{abc}^{(\ref{eq:triangle_dist})}$ and $\mathcal G_{\ref{fig:triangle_structure}}$.},
\begin{align}
\begin{split}
    \mathtt{P}_{abc}^{(\ref{eq:triangle_dist})}
    = p_{1}[1_{a}0_{b}0_{c}] + p_{2}[0_{a}1_{b}0_{c}] + p_{3}[0_{a}0_{b}1_{c}], \quad \sum_{i=1}^{3} p_{i} = 1, p_{i} > 0.
\end{split}
\label{eq:triangle_dist}
\end{align}
\begin{proof}
    Proof by contradiction: assume that a functional causal model $\mathcal F_{\mathcal V} = \bc{f_{a}, f_{b}, f_{c}}$ for $\mathcal G_{\ref{fig:triangle_structure}}$ exists supporting $\mathtt{P}_{abc}^{(\ref{eq:triangle_dist})}$ and use the possible worlds framework. For each distinct event in $\mathtt{P}_{abc}^{(\ref{eq:triangle_dist})}$, consider a world in which it happens definitely. Explicitly define,
    \begin{align}
         \colorline{brown}{\obs[abc]{0_{\mu}0_{\rho}0_{\nu}}} = 1_{a}0_{b}0_{c}, \\
        \colorline{yellow}{\obs[abc]{1_{\mu}1_{\rho}1_{\nu}}} = 0_{a}0_{b}1_{c}, \\
        \colorline{orange}{\obs[abc]{2_{\mu}2_{\rho}2_{\nu}}} = 0_{a}1_{b}0_{c},
        \label{eq:triangle_seed_events}
    \end{align}
    corresponding to the exterior worlds in Figure~\ref{fig:triangle_structure_definite_worlds}. Consider magenta world $\colorline{magenta}{\wor{0_{\mu}1_{\rho}1_{\nu}}}$ with partially specified observation $\colorline{magenta}{\obs[abc]{0_{\mu}1_{\rho}1_{\nu}}} = ?_{a}?_{b}1_{c}$. Recalling $\mathtt{P}_{abc}^{(\ref{eq:triangle_dist})}$, whenever $c$ takes value $1_{c}$, \textit{both} $a$ and $b$ take the value $0$; i.e. $0_{a}0_{b}$. Therefore, it must be that the observed event in the magenta world $\colorline{magenta}{\wor{0_{\mu}1_{\rho}1_{\nu}}}$ is $\colorline{magenta}{\obs[abc]{0_{\mu}1_{\rho}1_{\nu}}} = 0_{a}0_{b}1_{c}$. An analogous argument holds for other worlds,
    \begin{align}
    \label{eq:triangle_false_conclusions}
    \begin{split}
        \colorline{magenta}{\obs[abc]{0_{\mu}1_{\rho}1_{\nu}}} = ?_{a}?_{b}1_{c} &\Rightarrow \colorline{magenta}{\obs[abc]{0_{\mu}1_{\rho}1_{\nu}}} = 0_{a}0_{b}1_{c}, \\
        \colorline{blue}{\obs[abc]{2_{\mu}2_{\rho}1_{\nu}}} = ?_{a}1_{b}?_{c} &\Rightarrow \colorline{blue}{\obs[abc]{2_{\mu}2_{\rho}1_{\nu}}} = 0_{a}1_{b}0_{c}, \\
        \colorline{violet}{\obs[abc]{0_{\mu}2_{\rho}0_{\nu}}} = 1_{a}?_{b}?_{c} &\Rightarrow \colorline{violet}{\obs[abc]{0_{\mu}2_{\rho}0_{\nu}}} = 1_{a}0_{b}0_{c}.
    \end{split}
    \end{align}
    However, the conclusions drawn by Equation~\ref{eq:triangle_false_conclusions} predict the observed event the in central, green world $\colorline{green}{\wor{0_{\mu}2_{\rho}1_{\nu}}}$ must be,
    \begin{align}
        \colorline{green}{\obs[abc]{0_{\mu}2_{\rho}1_{\nu}}} = 0_{a}0_{b}0_{c},
    \end{align}
    and therefore $\mathtt{P}_{abc}(0_{a}0_{b}0_{c}) > 0$ which contradicts $\mathtt{P}_{abc}^{(\ref{eq:triangle_dist})}$. Therefore, $\mathtt{P}_{abc}^{(\ref{eq:triangle_dist})}$ is possibilistically incompatible with $\mathcal G_{\ref{fig:triangle_structure}}$.
\end{proof}

\subsection{An Evans Causal Structure}

\begin{figure}
    \centering
    \includegraphics{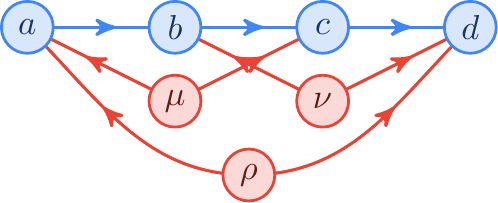}
    \caption{The Evans Causal Structure $\mathcal G_{\ref{fig:evans_causal_structure_1}}$.}
    \label{fig:evans_causal_structure_1}
\end{figure}
Consider the causal structure in Figure~\ref{fig:evans_causal_structure_1}, denoted $\mathcal G_{\ref{fig:evans_causal_structure_1}}$. This causal structure was first mentioned by Evans~\cite{Evans_2016}, along with two others, as one for which no existing techniques were able to prove whether or not it was saturated; that is, whether or not \textit{all} distributions were compatible with it. Here it is shown that there are indeed distributions which are possibilistically \textit{incompatible} with $\mathcal G_{\ref{fig:evans_causal_structure_1}}$ using the framework of possible worlds diagrams. As such, this framework currently stands as the most powerful method for deciding possibilistic compatibility.

Consider the family of distributions with three possible events:
\begin{align}
    \mathtt{P}_{abcd}^{(\ref{eq:evans_failing_distribution})} = p_{1} [0_{a}0_{b}0_{c}y_{d}] + p_{2} [1_{a}0_{b}1_{c}0_{d}] + p_{3} [0_{a}1_{b}1_{c}1_{d}], \quad \sum_{i=1}^{3} p_{i} = 1, p_{i} > 0.
    \label{eq:evans_failing_distribution}
\end{align}
Regardless of the values for $p_{1},p_{2},p_{3}$ (and $y_{d} \in \Omega_{d}$ arbitrary), $\mathtt{P}_{abcd}^{(\ref{eq:evans_failing_distribution})}$ is incompatible with $\mathcal G_{\ref{fig:evans_causal_structure_1}}$.

\begin{figure}
    \centering
    \includegraphics{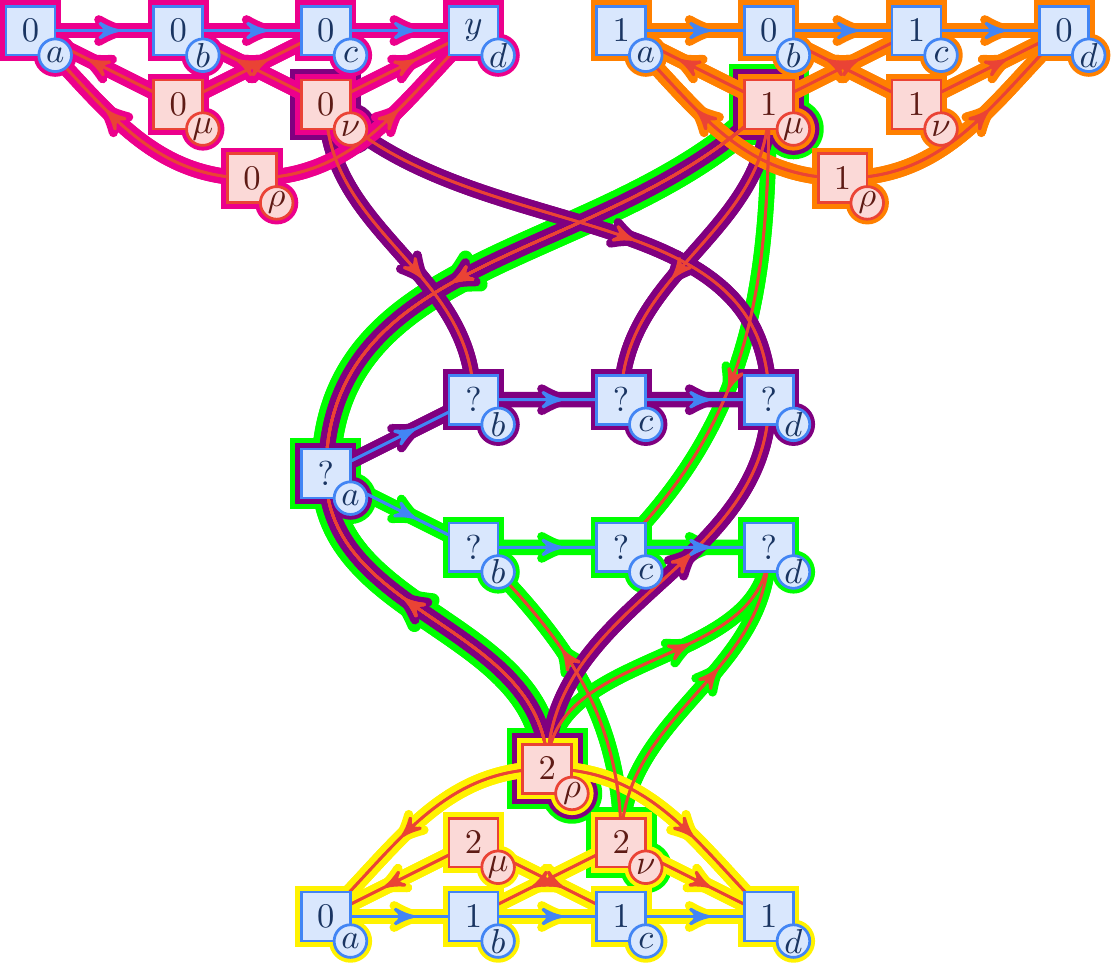}
    \caption{A possible worlds diagram for $\mathcal G_{\ref{fig:evans_causal_structure_1}}$ initialized by the distribution in Equation~\ref{eq:evans_failing_distribution}. The worlds are colored: $\colorline{magenta}{\wor{0_{\mu}0_{\nu}0_{\rho}}}$ magenta, $\colorline{orange}{\wor{1_{\mu}1_{\nu}1_{\rho}}}$ orange, $\colorline{yellow}{\wor{2_{\mu}2_{\nu}2_{\rho}}}$ yellow, $\colorline{violet}{\wor{1_{\mu}0_{\nu}2_{\rho}}}$ violet, and $\colorline{green}{\wor{1_{\mu0}2_{\nu}2_{\rho}}}$ green. }
    \label{eq:evans_definite_worlds}
\end{figure}
\begin{proof}
    Proof by contradiction. First assume that a deterministic model $\mathcal F_{\mathcal V} = \bc{f_{a}, f_{b}, f_{c}, f_{d}}$ for $\mathtt{P}_{abcd}^{(\ref{eq:evans_failing_distribution})}$ exists and adopt the possible worlds framework. Let $\wor{i_{\mu}i_{\nu}i_{\rho}}$ for $i\in \bc{1,2,3}$ index the possible worlds which support the events observed in $\mathtt{P}_{abcd}$,
    \begin{align}
    \begin{split}
        \colorline{magenta}{\obs[abcd]{0_{\mu}0_{\nu}0_{\rho}}} = 0_{a}0_{b}0_{c}y_{d}, \\
        \colorline{orange}{\obs[abcd]{1_{\mu}1_{\nu}1_{\rho}}} = 1_{a}0_{b}1_{c}0_{d}, \\
        \colorline{yellow}{\obs[abcd]{2_{\mu}2_{\nu}2_{\rho}}} = 0_{a}1_{b}1_{c}1_{d}.
    \end{split}
    \end{align}
    Only two additional possible worlds are necessary for achieving a contradiction. Consulting Figure~\ref{eq:evans_definite_worlds} for details, these possible worlds are $\colorline{violet}{\wor{1_{\mu}0_{\nu}2_{\rho}}}$ colored violet and $\colorline{green}{\wor{1_{\mu}2_{\nu}2_{\rho}}}$ colored green.
    Notice that the determined value for $a$ must be the same in both worlds as it is independent of $\lambda_{\nu}$:
    \begin{align}
        x_{a} = f_{a}(1_{\mu}2_{\rho}) = \colorline{violet}{\obs[a]{1_{\mu}0_{\nu}2_{\rho}}} = \colorline{green}{\obs[a]{1_{\mu}2_{\nu}2_{\rho}}}.
    \end{align}
    There are only two possible values for $x_{a}$ in any world, namely $x_{a} = 0_{a}$ or $x_{a} = 1_{a}$ as given by $\mathtt{P}_{abcd}^{(\ref{eq:evans_failing_distribution})}$. First suppose that $x_{a} = 0_{a}$. Then in the violet world $\colorline{violet}{\wor{1_{\mu}0_{\nu}2_{\rho}}}$, the value of $b$, to be $\colorline{violet}{\obs[b]{1_{\mu}0_{\nu}2_{\rho}}} = f_{b}(0_{a}0_{\nu}) = 0_{b}$ is completely constrained by consistency with the magenta world $\colorline{magenta}{\wor{0_{\mu}0_{\nu}0_{\rho}}}$. Therefore, $\colorline{violet}{\obs[ab]{1_{\mu}0_{\nu}2_{\rho}}} = 0_{a}0_{b}$. By analogous logic, in the violet world the value of $c$ is constrained to be $\colorline{violet}{\obs[c]{1_{\mu}0_{\nu}2_{\rho}}} = f_{c}(0_{b}1_{\mu}) = 0_{c}$ by the orange world $\colorline{orange}{\wor{1_{\mu}1_{\nu}1_{\rho}}}$. Therefore, $\colorline{violet}{\obs[abc]{1_{\mu}0_{\nu}2_{\rho}}} = 0_{a}0_{b}0_{c}$, which is a contradiction because $0_{a}0_{b}0_{c}$ is an impossible event in $\mathtt{P}_{abcd}^{(\ref{eq:evans_failing_distribution})}$. Therefore, it must be that $x_{a} = 1_{a}$. An unavoidable contradiction follows from attempting to populate the green world $\colorline{green}{\wor{1_{\mu}2_{\nu}2_{\rho}}}$ in Figure~\ref{eq:evans_definite_worlds} with the established knowledge that $\colorline{green}{\obs[a]{1_{\mu}2_{\nu}2_{\rho}}} = 1_{a}$. The value of $\colorline{green}{\obs[b]{1_{\mu}2_{\nu}2_{\rho}}} = f_{b}(1_{a}1_{\nu})$ has yet to be specified by any possible worlds, but choosing $f_{b}(1_{a}1_{\nu}) = 1_{b}$ would yield an impossible event $\colorline{green}{\obs[a]{1_{\mu}2_{\nu}2_{\rho}}} = 1_{a}1_{b}$. Therefore, it must be that $f_{b}(1_{a}1_{\nu}) = 0_{b}$ and $\colorline{green}{\obs[a]{1_{\mu}2_{\nu}2_{\rho}}} = 1_{a}0_{b}$. Similarly, the orange world $\colorline{orange}{\wor{1_{\mu}1_{\nu}1_{\rho}}}$ fixes $f_{c}(0_{b}1_{\mu}) = 1_{c}$ and therefore $\colorline{green}{\obs[abc]{1_{\mu}2_{\nu}2_{\rho}}} = 1_{a}0_{b}1_{c}$. Finally, the yellow world $\colorline{yellow}{\wor{2_{\mu}2_{\nu}2_{\rho}}}$ already determines $\colorline{green}{\obs[d]{1_{\mu}2_{\nu}2_{\rho}}} = f_{d}(0_{c}2_{\nu}2_{\rho}) = 1_{d}$ and therefore one concludes that,
    \begin{align}
        \colorline{green}{\obs[abcd]{1_{\mu}2_{\nu}2_{\rho}}} = 1_{a}0_{b}1_{c}1_{d},
    \end{align}
    which is an impossible event in $\mathtt{P}_{abcd}^{(\ref{eq:evans_failing_distribution})}$. This contradiction implies that \textit{no} functional model $\mathcal F_{\mathcal V} = \bc{f_{a}, f_{b}, f_{c}, f_{d}}$ exists and therefore $\mathtt{P}_{abcd}^{(\ref{eq:evans_failing_distribution})}$ is possibilistically incompatible with $\mathcal G_{\ref{fig:evans_causal_structure_1}}$.
\end{proof}

To reiterate, there are currently no other methods known~\cite{Evans_2016} which are capable of proving the incompatibility of any distribution with $\mathcal G_{\ref{fig:evans_causal_structure_1}}$\footnote{It is worth noting we have also proven the non-saturation of the other two causal structures mention in~\cite{Evans_2016} using analogous proofs.}. Therefore, the possible worlds framework can be seen as the state-of-the-art technique for determining possibilistic causation.

\subsection{Necessity and Sufficiency}
Throughout this section, we explored a number of proofs of possibilistic incompatibility using the possible worlds framework. Moreover, the above examples communicate a systematic algorithm for deciding possibilistic compatibility. Given a distribution $\mathtt{P}_{\mathcal V}$ with support $\sigma(\mathtt{P}_{\mathcal V}) \subset \Omega_{\mathcal V}$, and a causal structure $\mathcal G = \br{\mathcal V \cup \mathcal L, \mathcal E}$, the following algorithm sketch determines if $\mathtt{P}_{\mathcal V}$ is possibilistically compatible with $\mathcal G$.

\begin{enumerate}
    \item Let $W = \abs{\sigma(\mathtt{P}_{\mathcal V})} < \abs{\Omega_{\mathcal V}}$ denote the number of possible events provided by $\mathtt{P}_{\mathcal V}$.
    \item For each $1 \leq i \leq W$, create a possible world $\wor{\lambda^{(i)}_{\mathcal L}}$ where $\lambda^{(i)}_{\mathcal L} = \bc{i_{\ell} \mid \ell \in \mathcal L}$, thus defining the latent sample space $\Omega_{\mathcal L}$.
    \item Attempt to complete the possible worlds diagram $\mathcal D$ initialized by the worlds $\bc{\wor{\lambda^{(i)}_{\mathcal L}}}_{i=1}^{W}$.
    \item If an impossible event $x_{\mathcal V} \not \in \sigma(\mathtt{P}_{\mathcal V})$ is produced by any ``off-diagonal'' world $\wor{\ldots i_{\ell} \ldots j_{\ell'}\ldots }$ where $i \neq j$, or if a cross-world consistency constraint is broken, back-track.
\end{enumerate}
Upon completing the search, there are two possibilities. The first possibility is that the algorithm returns a completed, consistent, possible worlds diagram $\mathcal D$. Then by Lemma~\ref{lem:worlds}, $\mathtt{P}_{\mathcal V}$ is possibilistically compatible with $\mathcal G$. The second possibility is that an unavoidable contradiction arises, and $\mathtt{P}_{\mathcal V}$ is \textit{not} possibilistically compatible with $\mathcal G$.\footnote{A simple C implementation of the above pseudo-algorithm for boolean visible variables ($|\Omega_{v}| = 2, \forall v \in \mathcal V$) can be found at \href{https://github.com/tcfraser/possibilistic_causality}{github.com/tcfraser/possibilistic\_causality}. In particular, the provided software can output a DIMACS formatted CNF file for usage in most popular boolean satisfiability solvers.}

\section{A Complete Probabilistic Solution}
\label{sec:complete_probabilistic}

In Section~\ref{sec:complete_possibilistic}, we demonstrated that the possible worlds framework was capable of providing a complete possibilistic solution to the causal compatibility problem. If however, a given distribution $\mathtt{P}_{\mathcal V}$ happens to satisfy a causal hypothesis on a possibilistic level, can the possible worlds framework be used to determine if $\mathtt{P}_{\mathcal V}$ satisfies the causal hypothesis on a \textit{probabilistic} level as well? In this section, we answer this question affirmatively. In particular, we provide a hierarchy of feasibility tests for probabilistic compatibility which converges exactly. In addition, we illustrate that a possible worlds diagram is the natural data structure for algorithmically implementing this converging hierarchy.

\subsection{Symmetry and Superfluity}
\label{sec:symmetry_and_superfluous}

This aforementioned hierarchy of tests, to be explained in Section~\ref{sec:a_converging_hierarchy_of_tests}, relies on the enumeration of all probability distributions $\mathtt{P}_{\mathcal V}$ which admit \term{uniform} functional causal models $\br{\mathcal G, \mathcal F_{\mathcal V}, \mathcal P_{\mathcal L}}$ for fixed cardinalities $k_{\mathcal V \cup \mathcal L} = \bc{k_{q} = \abs{\Omega_{q}} \mid q \in \mathcal V \cup \mathcal L}$. A functional causal model is \textit{uniform} if the probability distributions $\mathtt{P}_{\ell} \in \mathcal P_{\mathcal L}$ over the latent variables are uniform distributions; $\mathtt{P}_{\ell} : \Omega_{\ell} \to k_{\ell}^{-1}$. Section~\ref{sec:uniformity} discusses why uniform functional causal models are worth considering, whereas in this section, we discuss how to efficiently enumerate \textit{all} probability distributions $\mathtt{P}_{\mathcal V}$ that are uniformly generated from fixed cardinalities $k_{\mathcal V \cup \mathcal L}$.

One method for generating all such distributions is to perform a brute force enumeration of all deterministic strategies $\mathcal F_{\mathcal V}$ for fixed cardinalities $k_{\mathcal V \cup \mathcal L}$. Depending on the details of the causal structure, the number of deterministic functions of this form is poly-exponential in the cardinalities $k_{\mathcal V \cup \mathcal L}$. This method is inefficient because is fails to consider that many distinct deterministic strategies produce the exact same distribution $\mathtt{P}_{\mathcal V}$. There are two optimizations that can be made to avoid regenerations of the same distribution $\mathtt{P}_{\mathcal V}$ while enumerating all deterministic strategies $\mathcal F_{\mathcal V}$. These optimizations are best motivated by an example using the possible worlds framework.

Consider the causal structure $\mathcal G_{\ref{fig:sym_structure}}$ in Figure~\ref{fig:sym_structure} with visible variables $\mathcal V = \bc{a,b,c}$ and latent variables $\mathcal L = \bc{\mu, \nu}$. Furthermore, for concreteness, suppose that $k_{\mu} = k_{\nu} = k_{a} = k_{a} = 2$ and $k_{c} = 4$. Finally let $\mathcal F_{\mathcal V} = \bc{f_{a}, f_{b}, f_{c}}$ be such that,
\begin{align}
\begin{split}
    f_{a}(0_{\mu}) = 0_{a}, \quad f_{a}(1_{\mu}) = 1_{a}, &\quad
    f_{b}(0_{\mu}) = 0_{b}, \quad f_{b}(1_{\mu}) = 1_{b}, \\
    f_{c}(0_{a}0_{b}0_{\nu}) = 2_{c}, \quad f_{c}(0_{a}0_{b}1_{\nu}) = 0_{c}, &\quad f_{c}(1_{a}1_{b}0_{\nu}) = 3_{c}, \quad f_{c}(1_{a}1_{b}1_{\nu}) = 1_{c} \\
    f_{c}(0_{a}1_{b}0_{\nu}) = 0_{c}, \quad f_{c}(0_{a}1_{b}1_{\nu}) = 1_{c}, &\quad f_{c}(1_{a}0_{b}0_{\nu}) = 2_{c}, \quad f_{c}(1_{a}0_{b}1_{\nu}) = 3_{c}.
\end{split}
\label{eq:example_functional_dependences_2}
\end{align}
The possible worlds diagram $\mathcal D$ for $\mathcal G_{\ref{fig:sym_structure}}$ generated by Equation~\ref{eq:example_functional_dependences_2} is depicted in Figure~\ref{fig:sym_no_extra}. If the latent valuations are distributed uniformly, the probability distribution associated with Figure~\ref{fig:sym_no_extra} (as given by Equation~\ref{eq:world_compatibility}) is equal to,
\begin{align}
\begin{split}
    \mathtt{P}_{abc}
    &= \frac{1}{4}( [\colorline{green}{\wor{0_{\mu}0_{\nu}}}] + [\colorline{orange}{\wor{0_{\mu}1_{\nu}}}] + [\colorline{yellow}{\wor{1_{\mu}0_{\nu}}}] + [\colorline{violet}{\wor{1_{\mu}1_{\nu}}}] ) \\
    &= \frac{1}{4}( [0_{a}0_{b}2_{c}] + [0_{a}0_{b}0_{c}] + [1_{a}1_{b}3_{c}] + [1_{a}1_{b}1_{c}] ).
\end{split}
\label{eq:example_distribution_2}
\end{align}
The first optimization comes from noticing that Equation~\ref{eq:example_functional_dependences_2} specifies how $c$ would respond if provided with the valuation $1_{a}0_{b}1_{\nu}$ of its parents, namely $f_{c}(1_{a}0_{b}1_{\nu}) = 3_{c}$. Nonetheless, this hypothetical scenario is excluded from Figure~\ref{fig:sym_no_extra} (crossed out in the figure) because the functional model in Equation~\ref{eq:example_functional_dependences_2} never produces an opportunity for $a$ to be different from $b$. Consequently, the functional dependences in Equation~\ref{eq:example_functional_dependences_2} contain \textit{superfluous} information irrelevant to the observed probability distribution in Equation~\ref{eq:example_distribution_2}.

Therefore, a brute force enumeration of deterministic strategies would regenerate Equation~\ref{eq:example_distribution_2} several times, once for each assignment of $c$'s behavior in these superfluous scenarios. It is possible to avoid these regenerations by using an unpopulated possible worlds diagram $\tilde {\mathcal D}$ as a data structure and performing a brute force enumeration of all \textit{consistent} valuations of $\tilde {\mathcal D}$.

The second optimization comes from noticing that Equation~\ref{eq:example_distribution_2} contains many \textit{symmetries}. Notably, independently permuting the latent valuations, $\pi_{\mu} : 0_{\mu} \leftrightarrow 1_{\mu}$ or $\pi_{\nu} : 0_{\nu} \leftrightarrow 1_{\nu}$, leaves the observed distribution in Equation~\ref{eq:example_distribution_2} invariant, but maps the functional dependences $\mathcal F_{\mathcal V}$ of Equation~\ref{eq:example_functional_dependences_2} to different functional dependences $\mathcal F_{\mathcal V}^{\pi_{\mu}}$ and $\mathcal F_{\mathcal V}^{\pi_{\nu}}$. These symmetries are reflected as permutations of the worlds as depicted in Figures~\ref{fig:sym_1}, and~\ref{fig:sym_2}.

Analogously, it is possible to avoid these regenerations by first pre-computing the induced action on $\tilde {\mathcal D}$, and thus an induced action on $\mathcal F_{\mathcal V}$, under the permutation group $S_{\mathcal L} = \prod_{\ell \in \mathcal L} \perm{\Omega_{\ell}}$. Then, using the permutation group $S_{\mathcal L}$, one only needs to generate a representative from the equivalence classes of possible worlds diagrams $\mathcal D$ under $S_{\mathcal L}$.

\begin{figure}
    \centering

    \begin{subfigure}[t]{0.9\textwidth}
        \centering
        \includegraphics{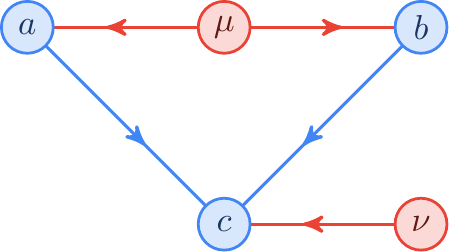}
        \caption{A causal structure $\mathcal G_{\ref{fig:sym_structure}}$ with three visible variables $\mathcal V = \bc{a,b,c}$ and two latent variables $\mathcal L = \bc{\mu, \nu}$.}
        \label{fig:sym_structure}
    \end{subfigure}
    \begin{subfigure}[t]{0.9\textwidth}
        \centering
        \includegraphics{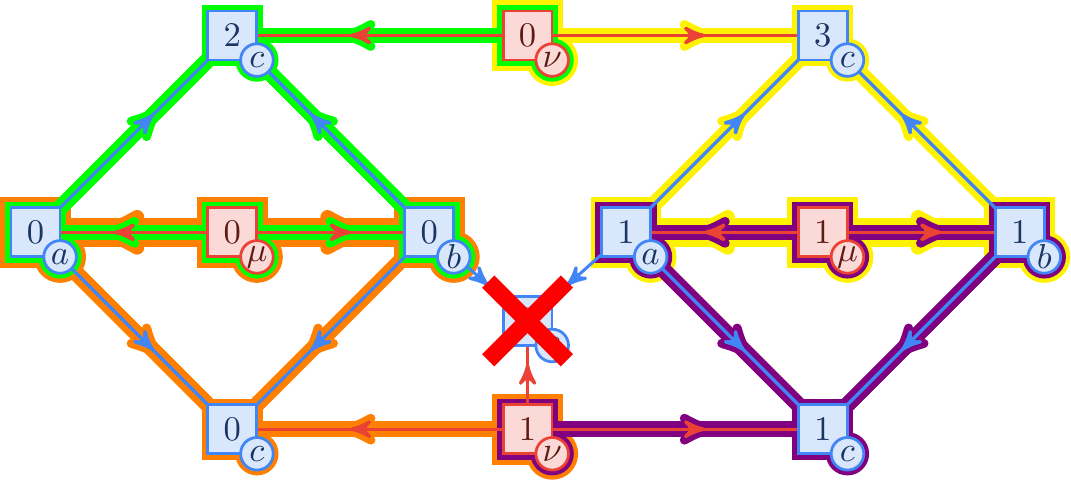}
        \caption{A possible worlds diagram for $\mathcal G_{\ref{fig:sym_structure}}$. The crossed out vertex is excluded because it fails to satisfy the ancestral isomorphism property.}
        \label{fig:sym_no_extra}
    \end{subfigure}
    \begin{subfigure}[t]{0.9\textwidth}
        \centering
        \includegraphics{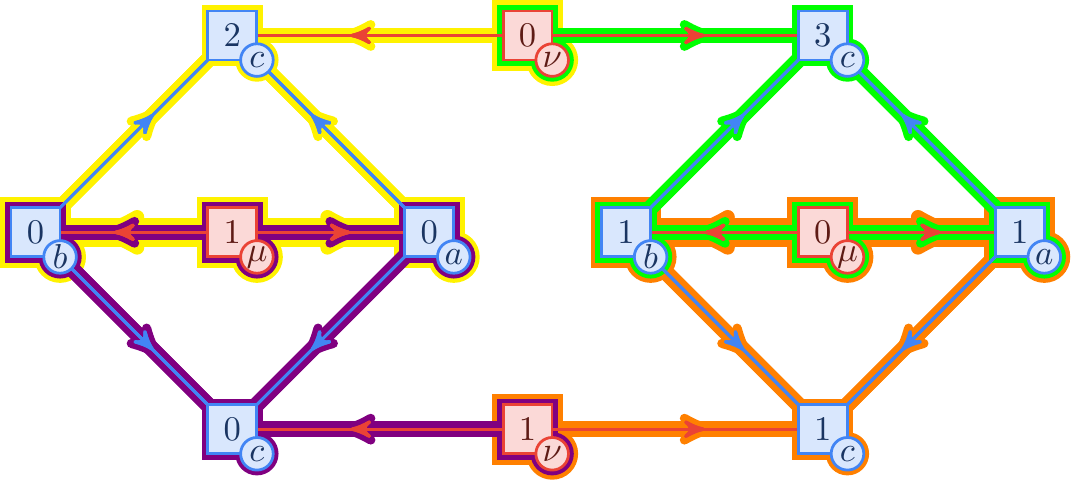}
        \caption{The image of Figure~\ref{fig:sym_no_extra} under the permutation $0_{\mu} \leftrightarrow 1_{\mu}$.}
        \label{fig:sym_1}
    \end{subfigure}
    \begin{subfigure}[t]{0.9\textwidth}
        \centering
        \includegraphics{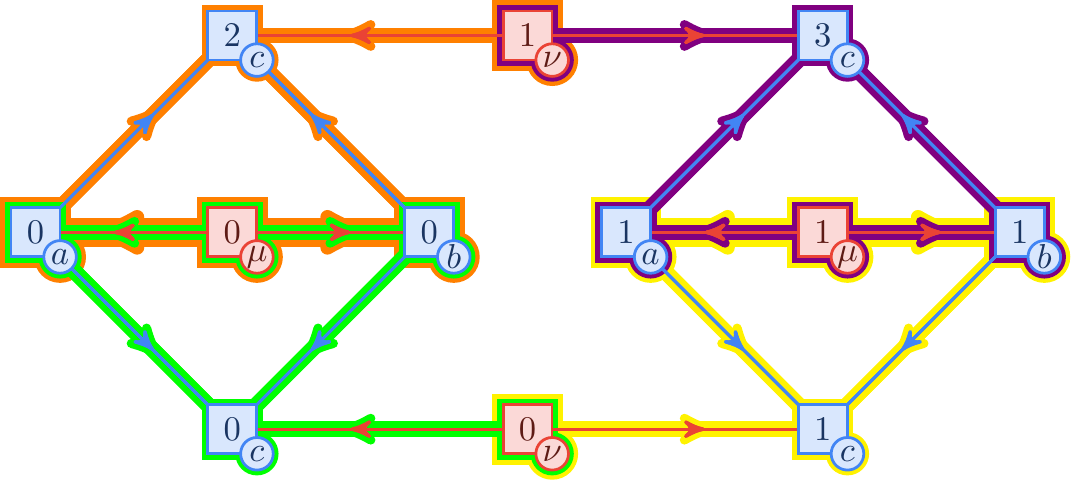}
        \caption{The image of Figure~\ref{fig:sym_no_extra} under the permutation $0_{\nu} \leftrightarrow 1_{\nu}$.}
        \label{fig:sym_2}
    \end{subfigure}
    \caption{Every permutation $\pi_{\ell} : \Omega_{\ell} \to \Omega_{\ell}$ of valuations on the latent variables maps a possible worlds diagram to another possible worlds diagram with the same observed events. The worlds are colored: $\colorline{green}{\wor{0_{\mu}0_{\nu}}}$ green, $\colorline{orange}{\wor{0_{\mu}1_{\nu}}}$ orange, $\colorline{yellow}{\wor{1_{\mu}0_{\nu}}}$ yellow, and $\colorline{violet}{\wor{1_{\mu}1_{\nu}}}$ violet.}
    \label{fig:sym_examples}
\end{figure}

Importantly, the optimizations illuminated above, namely ignoring superfluous specifications and exploiting symmetries, are universal\footnote{As a special case, causal \textit{networks} (which are causal structures where all variables are exogenous or endogenous) contain no superfluous scenarios.}; they can be applied for any causal structure. Additionally, the possible worlds framework intuitively excludes superfluous cases and directly embodies the observational symmetries, making a possible worlds diagram the ideal data structure for performing a search over observed distributions.

\subsection{The Uniformity of Latent Distributions}
\label{sec:uniformity}

The purpose of this section is motivate why it is always possible to approximate any functional causal model $\br{\mathcal G, \mathcal F_{\mathcal V}, \mathcal P_{\mathcal L}}$ with another functional causal model $(\mathcal G, \tilde {\mathcal F}_{\mathcal V}, \tilde{\mathcal P}_{\mathcal L})$ which has latent events $\lambda_{\mathcal L} \in \tilde \Omega_{\mathcal L}$ uniformly distributed. Unsurprisingly, an accurate approximation of this form will require an increase in the cardinality $|\tilde \Omega_{\mathcal L}| > \abs{\Omega_{\mathcal L}}$ of the latent variables.

\begin{definition}[Rational Distributions]
    \label{defn:rational_approximation}
    A discrete probability distribution $\mathtt{P}$ over $\Omega$ is \term{rational} if every probability assigned to events in $\Omega$ by $\mathtt{P}$ is rational,
    \begin{align}
        \forall \lambda \in \Omega, \quad \mathtt{P}(\lambda) = \frac{n_{\lambda}}{d_{\lambda}}, \quad \text{where} \quad n_{\lambda}, d_{\lambda} \in \mathbb{Z}.
    \end{align}
\end{definition}
\begin{definition}[Distance Metric for Distributions]
    Given two probability distributions $\mathtt{P}, \tilde{\mathtt{P}}$ over the same sample space $\Omega$, the distance $\Delta(\mathtt{P}, \tilde{\mathtt{P}})$ between $\mathtt{P}$ and $\tilde{\mathtt{P}}$ is defined as,
    \begin{align}
        \Delta(\mathtt{P}, \tilde{\mathtt{P}}) = \sum_{x \in \Omega} \abs{\mathtt{P}(x) - \tilde{\mathtt{P}}(x)}
    \end{align}
\end{definition}
\begin{theorem}
    \label{thm:discrete_uniform_sampling}
    Let $\mathtt{P}_{\ell} : \Omega_{\ell} \to \bs{0,1}$ be any discrete probability distribution on $\Omega_{\ell}$, then there exists a rational approximation $\tilde {\mathtt{P}}_{\ell} : \Omega_{\ell} \to \bs{0,1}$,
    \begin{align}
        \forall \lambda_{\ell} \in \Omega_{\ell}, \quad \tilde {\mathtt{P}}_{\ell}(\lambda_{\ell}) = \frac{1}{\abs{\Omega_{u}}} \sum_{\omega_{u} \in \Omega_{u}} \delta(\lambda_{\ell}, g(\omega_{u})),
        \label{eq:thm_discrete_uniform_sampling}
    \end{align}
    where $g: \Omega_{u} \to \Omega_{\ell}$ is deterministic and $\Delta(\mathtt{P}_{\ell}, \tilde {\mathtt{P}}_{\ell}) \leq \frac{\abs{\Omega_{u}}-1}{\abs{\Omega_{\ell}}}$.
\end{theorem}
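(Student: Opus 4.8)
The plan is to recognise that the theorem reduces to a purely combinatorial apportionment problem. A distribution of the stated form is completely determined by the fibre sizes $m_{\lambda_\ell} := \abs{g^{-1}(\lambda_\ell)}$ of the deterministic map $g$, since
\begin{align}
\tilde{\mathtt{P}}_\ell(\lambda_\ell) = \frac{1}{\abs{\Omega_u}}\sum_{\omega_u \in \Omega_u}\delta(\lambda_\ell, g(\omega_u)) = \frac{m_{\lambda_\ell}}{\abs{\Omega_u}}.
\end{align}
Conversely, any family of non-negative integers $\{m_{\lambda_\ell}\}$ with $\sum_{\lambda_\ell} m_{\lambda_\ell} = \abs{\Omega_u}$ is realised by partitioning $\Omega_u$ into blocks of the prescribed sizes and letting $g$ be constant on each block; the resulting $\tilde{\mathtt{P}}_\ell$ is automatically a rational distribution (Definition~\ref{defn:rational_approximation}) of the required uniform-pushforward shape. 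So it suffices to choose the integers $m_{\lambda_\ell}$ well. Because this section is predicated on \emph{enlarging} the latent cardinality in order to uniformise, I am free to take $\abs{\Omega_u} \geq \abs{\Omega_\ell}$, and I will keep this inequality in reserve for the final step.

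First I would build the $m_{\lambda_\ell}$ by the largest-remainder (Hamilton) rounding rule. Write $\abs{\Omega_u}\,\mathtt{P}_\ell(\lambda_\ell) = n_{\lambda_\ell} + \phi_{\lambda_\ell}$ with $n_{\lambda_\ell} = \lfloor \abs{\Omega_u}\,\mathtt{P}_\ell(\lambda_\ell)\rfloor$ and remainder $\phi_{\lambda_\ell}\in[0,1)$. The deficit $d = \abs{\Omega_u} - \sum_{\lambda_\ell} n_{\lambda_\ell} = \sum_{\lambda_\ell}\phi_{\lambda_\ell}$ is a non-negative integer, and being a sum of $\abs{\Omega_\ell}$ terms each strictly below $1$ it satisfies $0 \leq d \leq \abs{\Omega_\ell}-1$. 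I then set $m_{\lambda_\ell} = n_{\lambda_\ell}+1$ for the $d$ outcomes carrying the largest remainders and $m_{\lambda_\ell}=n_{\lambda_\ell}$ for the rest, so that $\sum_{\lambda_\ell} m_{\lambda_\ell} = \abs{\Omega_u}$ as required.

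The error estimate is the technical heart. Splitting the distance into rounded-up and rounded-down outcomes gives
\begin{align}
\abs{\Omega_u}\,\Delta(\mathtt{P}_\ell,\tilde{\mathtt{P}}_\ell) = \sum_{\text{up}}(1-\phi_{\lambda_\ell}) + \sum_{\text{down}}\phi_{\lambda_\ell} = 2\Big(d - \sum_{\text{up}}\phi_{\lambda_\ell}\Big),
\end{align}
using $\sum_{\lambda_\ell}\phi_{\lambda_\ell}=d$ and that the up-set has exactly $d$ members. Since the up-set collects the $d$ largest among $\abs{\Omega_\ell}$ remainders whose total is $d$, its average is at least the global average $d/\abs{\Omega_\ell}$, so its sum is at least $d^2/\abs{\Omega_\ell}$; hence $\abs{\Omega_u}\,\Delta \leq 2d(\abs{\Omega_\ell}-d)/\abs{\Omega_\ell} \leq \abs{\Omega_\ell}/2$ (the quadratic being maximised at $d=\abs{\Omega_\ell}/2$). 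This yields the sharp bound $\Delta \leq \abs{\Omega_\ell}/(2\abs{\Omega_u})$, and in particular $\Delta \leq (\abs{\Omega_\ell}-1)/\abs{\Omega_u}$.

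Finally I would convert this into the bound exactly as stated, and this is the point I would flag most carefully. The claimed bound $(\abs{\Omega_u}-1)/\abs{\Omega_\ell}$ carries $\abs{\Omega_u}$ and $\abs{\Omega_\ell}$ in the \emph{opposite} positions from the sharp rounding estimate; taken literally it is false when $\abs{\Omega_u}<\abs{\Omega_\ell}$ (a near-trivial pushforward, e.g. one drawn from a singleton $\Omega_u$, cannot match a spread-out target), so the statement is only meaningful in the enlargement regime. Invoking the reserved inequality $\abs{\Omega_u}\geq\abs{\Omega_\ell}$ resolves it: since $t\mapsto t(t-1)$ is increasing on $t\geq 1$, $\abs{\Omega_\ell}(\abs{\Omega_\ell}-1)\leq\abs{\Omega_u}(\abs{\Omega_u}-1)$, whence $(\abs{\Omega_\ell}-1)/\abs{\Omega_u}\leq(\abs{\Omega_u}-1)/\abs{\Omega_\ell}$ and the stated bound follows. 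The genuine obstacles are therefore the combinatorial remainder estimate of the third paragraph and the recognition that the claimed bound is established (and only sensible) once $\abs{\Omega_u}\geq\abs{\Omega_\ell}$, which the uniformisation construction itself supplies.
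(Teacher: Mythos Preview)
Your argument is correct and proceeds by a genuinely different construction than the paper's. The paper builds $g$ via inverse cumulative sampling: it orders $\Omega_\ell$ and $\Omega_u$, forms the CDF $\mathtt{P}_{\leq\ell}$, and sets $g(\omega_u)=\min\{\lambda_\ell : \mathtt{P}_{\leq\ell}(\lambda_\ell)\abs{\Omega_u}\geq\omega_u\}$; it then bounds each pointwise error $\abs{\varepsilon(\lambda_\ell)}$ by $1$ (by $1/2$ at the two endpoints) and sums to obtain $\Delta\leq(\abs{\Omega_\ell}-1)/\abs{\Omega_u}$. You instead phrase the problem as Hamilton apportionment, choose the fibre sizes by largest-remainder rounding, and exploit the ordering of remainders to get the sharper estimate $\Delta\leq\abs{\Omega_\ell}/(2\abs{\Omega_u})$. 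Your route buys a factor-of-two improvement and avoids any appeal to an ordering on $\Omega_\ell$; the paper's route is more constructive in that $g$ is given by a closed formula rather than by a selection of top remainders.

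You are also right to flag the swapped indices in the stated bound. The paper's own proof in fact concludes with $(\abs{\Omega_\ell}-1)/\abs{\Omega_u}$, not $(\abs{\Omega_u}-1)/\abs{\Omega_\ell}$, and it is the former that is used downstream in Theorem~\ref{thm:main_error_bound_models} (where $C$ bounds the $\abs{\Omega_\ell}$'s and $K$ bounds the $\abs{\Omega_u}$'s from below). So the discrepancy you diagnosed is a typo in the statement rather than an obstacle, and your reserved inequality $\abs{\Omega_u}\geq\abs{\Omega_\ell}$, while sufficient to reconcile the two forms, is not actually needed for the paper's purposes.
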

\begin{proof}
    The proof is illustrated in Figure~\ref{fig:discrete_uniform_sampling}. In the special case that $\abs{\Omega_{\ell}} = 1$, the proof is trivial; $g$ simply maps all values of $\omega_{u}$ to the singleton $\lambda_{\ell} \in \Omega_{\ell}$. The proof follows from a construction of $g$ using inverse uniform sampling. Given some ordering $1_{\ell} < 2_{\ell} < \cdots$ of $\Omega_{\ell}$ and ordering $1_{u} < 2_{u} < \cdots$ of $\Omega_{u}$ compute the cumulative distribution function $\mathtt{P}_{\leq \ell}(\lambda_{\ell}) = \sum_{\lambda'_{\ell} \leq \lambda_{\ell} } \mathtt{P}_{\ell}(\lambda'_{\ell})$. Then the function $g : \Omega_{u} \to \Omega_{\ell}$ is defined as,
    \begin{align}
        g(\omega_{u}) = \min\bc{\lambda_{\ell} \in \Omega_{\ell} \mid \mathtt{P}_{\leq \ell}(\lambda_{\ell}) \abs{\Omega_{u}} \geq \omega_{u} }.
    \end{align}
    Consequently, the proportion of $\omega_{u} \in \Omega_{u}$ values which map to $\lambda_{\ell} \in \Omega_{\ell}$ has error $\varepsilon(\lambda_{\ell})$,
    \begin{align}
        \varepsilon(\lambda_{\ell}) = \abs{\Omega_{u}} \mathtt{P}_{\ell}(\lambda_{\ell}) - \abs{g^{-1}(\lambda_{\ell})},
        \label{eq:rational_approximation}
    \end{align}
    where $\abs{\varepsilon(\lambda_{\ell})} \leq 1$ for all $\lambda_{\ell} \in \Omega_{\ell}$ with the exception of the minimum ($1_{\mu}$) and maximum ($\abs{\Omega_{\ell}}_{\ell}$) values where $\abs{\varepsilon(\lambda_{\ell})} \leq 1/2$. Therefore, the proof follows from a direct computation of the distance $\Delta(\mathtt{P}_{\ell}, \tilde {\mathtt{P}}_{\ell})$,
    \begin{align}
        \Delta(\mathtt{P}_{\ell}, \tilde {\mathtt{P}}_{\ell})
        &= \sum_{\lambda_{\ell} \in \Omega_{\ell}} \abs{\mathtt{P}_{\ell}(\lambda_{\ell}) - \tilde {\mathtt{P}}_{\ell}(\lambda_{\ell})}, \\
        &= \sum_{\lambda_{\ell} \in \Omega_{\ell}} \abs{\mathtt{P}_{\ell}(\lambda_{\ell}) - \frac{1}{\abs{\Omega_{u}}} \abs{g^{-1}(\lambda_{\ell})}}, \\
        &= \frac{1}{\abs{\Omega_{u}}} \sum_{\lambda_{\ell} \in \Omega_{\ell}} \abs{\varepsilon(\lambda_{\ell})}, \\
        &\leq \frac{1}{\abs{\Omega_{u}}} \br{\abs{\Omega_{\ell}} - 2 + 2 \frac{1}{2}}, \\
        &= \frac{\abs{\Omega_{\ell}}-1}{\abs{\Omega_{u}}}.
    \end{align}
\end{proof}
\begin{figure}
    \centering
    \includegraphics{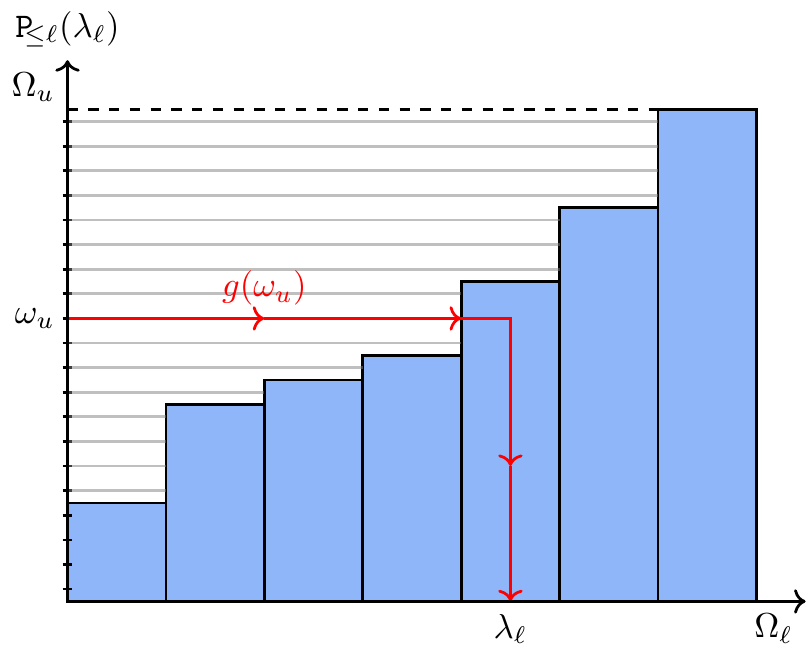}
    \caption{Theorem~\ref{thm:discrete_uniform_sampling}: Approximately sampling a non-uniform distribution using inverse sampling techniques.}
    \label{fig:discrete_uniform_sampling}
\end{figure}

In terms of the causal compatibility problem, Theorem~\ref{thm:discrete_uniform_sampling} suggests that if an observed distribution $\mathtt{P}_{\mathcal V}$ is compatible with $\mathcal G$, and there exists a functional causal model $\br{\mathcal G, \mathcal F_{\mathcal V}, \mathcal P_{\mathcal L}}$ which reproduces $\mathtt{P}_{\mathcal V}$ (via Equation~\ref{eq:functional_compatibility}), then it must be close to a rational distribution $\tilde {\mathtt{P}}_{\mathcal V}$ generated by a functional causal model $(\mathcal G, \tilde {\mathcal F}_{\mathcal V}, \tilde {\mathcal P}_{\mathcal L})$ wherein probability distributions for the latent variables $\tilde {\mathcal P}_{\mathcal L}$ are \textit{uniform}. The following theorem proves this.

\begin{theorem}
    \label{thm:main_error_bound_models}
    Let $\br{\mathcal G, \mathcal F_{\mathcal V}, \mathcal P_{\mathcal L}}$ be a functional causal model with cardinalities $c_{\ell} = \abs{\Omega_{\ell}}$ for the latent variables producing distribution $\mathtt{P}_{\mathcal V}$. Then there exists a functional causal model $(\mathcal G, \tilde {\mathcal F}_{\mathcal V}, \tilde {\mathcal P}_{\mathcal L})$ with cardinalities $k_{\ell} = |\tilde \Omega_{\ell}|$ for the latent variables producing $\tilde {\mathtt{P}}_{\mathcal V}$ where the distributions $\tilde {\mathcal P}_{\mathcal L} = \{\mathsf{U}_{\ell} : \tilde \Omega_{\ell} \to k_{\ell}^{-1} \mid \ell \in \mathcal L\}$ over the latent variables are uniform. In particular, the distance between $\mathtt{P}_{\mathcal V}$ and $\tilde {\mathtt{P}}_{\mathcal V}$ is bounded by,
    \begin{align}
        \Delta(\mathtt{P}_{\mathcal V}, \tilde {\mathtt{P}}_{\mathcal V}) \leq \varepsilon = \sum_{n = 1}^{L} \frac{1}{n!}\br{\frac{L(C - 1)}{K}}^{n} \in \mathcal O\br{\frac{LC}{K}},
        \label{eq:main_error_bound_result}
    \end{align}
    where $C = \max\bc{c_{\ell} \mid \ell \in \mathcal L}$, $K = \min\bc{k_{\ell} \mid \ell \in \mathcal L}$, and $L = \abs{\mathcal L}$ is the number of latent variables.
\end{theorem}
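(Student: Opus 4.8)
The plan is to build the uniform model by post‑composing the given visible functions with the inverse‑sampling maps furnished by Theorem~\ref{thm:discrete_uniform_sampling}, and then to control the error by reducing the comparison of the two visible distributions $\mathtt{P}_{\mathcal V}$ and $\tilde{\mathtt{P}}_{\mathcal V}$ to the comparison of two product distributions over the latent sample spaces.

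\textbf{Step 1 (localize the approximation).} For each $\ell \in \mathcal L$, apply Theorem~\ref{thm:discrete_uniform_sampling} to $\mathtt{P}_{\ell}$ with a target set $\tilde\Omega_{\ell}$ of size $k_{\ell}$. This produces a deterministic map $g_{\ell} : \tilde\Omega_{\ell} \to \Omega_{\ell}$ and a distribution $\tilde{\mathtt{P}}_{\ell}$ on $\Omega_{\ell}$, equal to the pushforward of the uniform distribution $\mathsf{U}_{\ell}$ on $\tilde\Omega_{\ell}$ through $g_{\ell}$ (i.e.\ $\tilde{\mathtt{P}}_{\ell}(\lambda_{\ell}) = |g_{\ell}^{-1}(\lambda_{\ell})|/k_{\ell}$), with $\Delta(\mathtt{P}_{\ell}, \tilde{\mathtt{P}}_{\ell}) \le \frac{c_{\ell}-1}{k_{\ell}} \le \frac{C-1}{K} =: \delta$.

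\textbf{Step 2 (build the uniform model).} Define $\tilde{\mathcal F}_{\mathcal V} = \{\tilde f_{v}\}$ by $\tilde f_{v}(x_{\vpa{v}}, \tilde\lambda_{\lpa{v}}) = f_{v}\!\left(x_{\vpa{v}}, (g_{\ell}(\tilde\lambda_{\ell}))_{\ell \in \lpa{v}}\right)$, and take the latent distributions to be the uniforms $\tilde{\mathcal P}_{\mathcal L} = \{\mathsf{U}_{\ell}\}$; the graph $\mathcal G$ is untouched, since only the manner in which latent values feed the functions has changed. Substituting into Equation~\ref{eq:functional_compatibility} and grouping the sum over $\tilde\lambda_{\mathcal L} \in \tilde\Omega_{\mathcal L}$ according to the fibres $g_{\ell}^{-1}(\lambda_{\ell})$ — whose relative sizes are exactly $\tilde{\mathtt{P}}_{\ell}(\lambda_{\ell})$ by Equation~\ref{eq:thm_discrete_uniform_sampling} — shows that the distribution $\tilde{\mathtt{P}}_{\mathcal V}$ produced by $(\mathcal G, \tilde{\mathcal F}_{\mathcal V}, \tilde{\mathcal P}_{\mathcal L})$ equals the distribution produced by the model $(\mathcal G, \mathcal F_{\mathcal V}, \{\tilde{\mathtt{P}}_{\ell}\})$, i.e.\ by the \emph{original} functions driven by the new latent distributions $\tilde{\mathtt{P}}_{\ell}$ on $\Omega_{\ell}$.

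\textbf{Step 3 (reduce to the latent product distance).} Both $\mathtt{P}_{\mathcal V}$ and $\tilde{\mathtt{P}}_{\mathcal V}$ now arise from the same functions $\mathcal F_{\mathcal V}$ over the same latent spaces $\Omega_{\mathcal L}$, so by Lemma~\ref{lem:worlds} they are sums over the \emph{same} worlds of the common possible worlds diagram, with weights $\prod_{\ell}\mathtt{P}_{\ell}(\lambda_{\ell})$ and $\prod_{\ell}\tilde{\mathtt{P}}_{\ell}(\lambda_{\ell})$ respectively. Since each world $\wor{\lambda_{\mathcal L}}$ contributes to exactly one observed event $\obs[\mathcal V]{\lambda_{\mathcal L}}$, the triangle inequality collapses the sum over $\Omega_{\mathcal V}$ and gives $\Delta(\mathtt{P}_{\mathcal V}, \tilde{\mathtt{P}}_{\mathcal V}) \le \sum_{\lambda_{\mathcal L} \in \Omega_{\mathcal L}} \bigl| \prod_{\ell}\mathtt{P}_{\ell}(\lambda_{\ell}) - \prod_{\ell}\tilde{\mathtt{P}}_{\ell}(\lambda_{\ell}) \bigr| = \Delta\bigl(\bigotimes_{\ell}\mathtt{P}_{\ell}, \bigotimes_{\ell}\tilde{\mathtt{P}}_{\ell}\bigr)$. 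It then remains to bound the distance between the two product distributions. Writing $\tilde{\mathtt{P}}_{\ell} = \mathtt{P}_{\ell} + E_{\ell}$ with $\|E_{\ell}\|_{1} = \Delta(\mathtt{P}_{\ell}, \tilde{\mathtt{P}}_{\ell}) =: \delta_{\ell}$, expanding $\bigotimes_{\ell}\tilde{\mathtt{P}}_{\ell} = \sum_{S \subseteq \mathcal L} \bigl(\bigotimes_{\ell \in S} E_{\ell}\bigr) \otimes \bigl(\bigotimes_{\ell \notin S} \mathtt{P}_{\ell}\bigr)$, subtracting the $S = \varnothing$ term, and using multiplicativity of $\|\cdot\|_{1}$ under tensor products together with $\|\mathtt{P}_{\ell}\|_{1} = 1$ yields $\Delta\bigl(\bigotimes_{\ell}\mathtt{P}_{\ell}, \bigotimes_{\ell}\tilde{\mathtt{P}}_{\ell}\bigr) \le \prod_{\ell}(1+\delta_{\ell}) - 1$. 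Bounding each $\delta_{\ell} \le \delta = (C-1)/K$ and using $\binom{L}{n} \le L^{n}/n!$ gives $\prod_{\ell}(1+\delta_{\ell}) - 1 \le (1+\delta)^{L} - 1 = \sum_{n=1}^{L}\binom{L}{n}\delta^{n} \le \sum_{n=1}^{L}\frac{1}{n!}\bigl(\frac{L(C-1)}{K}\bigr)^{n} = \varepsilon$, and comparing with $e^{L(C-1)/K}-1$ gives the $\mathcal O(LC/K)$ estimate.

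The step I expect to require the most care is Step~2: verifying \emph{exactly} that composing $f_{v}$ with the inverse‑sampling maps $g_{\ell}$ and driving them with uniform latents reproduces precisely the pushforward distributions $\tilde{\mathtt{P}}_{\ell}$ on the original latent spaces — and that this manipulation respects the ancestral‑isomorphism, consistency, and uniqueness requirements of Definition~\ref{defn:awf}. Once this is pinned down, Step~1 is a direct invocation of Theorem~\ref{thm:discrete_uniform_sampling}, Step~3's first half is just the world decomposition of Lemma~\ref{lem:worlds}, and Step~3's second half is a standard hybrid/telescoping estimate for product measures.
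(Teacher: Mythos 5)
Your proposal is correct and follows essentially the same route as the paper's proof in Appendix~\ref{sec:proof_of_theorem_main_error_bound}: invoke Theorem~\ref{thm:discrete_uniform_sampling} per latent variable, reduce $\Delta(\mathtt{P}_{\mathcal V},\tilde{\mathtt{P}}_{\mathcal V})$ to the distance between the latent product distributions, expand the product difference to obtain $\sum_{n=1}^{L}\frac{1}{n!}\br{\frac{L(C-1)}{K}}^{n}$, and realize the uniform model by pre-composing $\mathcal F_{\mathcal V}$ with the inverse-sampling maps $g_{\ell}$. The only differences are cosmetic (you build the uniform model before bounding the error, and package the product-measure estimate as $\prod_{\ell}(1+\delta_{\ell})-1\le(1+\delta)^{L}-1$ rather than via the paper's $\Gamma$-expansion), and your use of the bound $\frac{c_{\ell}-1}{k_{\ell}}$ correctly matches the proof of Theorem~\ref{thm:discrete_uniform_sampling} rather than the inverted ratio appearing in its statement.
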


\begin{proof}
    The proof relies on Theorem~\ref{thm:discrete_uniform_sampling} and can be found in Appendix~\ref{sec:proof_of_theorem_main_error_bound}.
\end{proof}

\subsection{A Converging Hierarchy of Compatibility Tests}
\label{sec:a_converging_hierarchy_of_tests}

In Section~\ref{sec:symmetry_and_superfluous}, we discussed how to take advantage of the symmetries of a possible worlds diagram and the superfluities within a set of functional parameters $\mathcal F_{\mathcal V}$ in order to optimally search over functional models. In Section~\ref{sec:uniformity}, we discussed how to approximate any functional causal model $\br{\mathcal G, \mathcal F_{\mathcal V}, \mathcal P_{\mathcal L}}$ using one with uniform latent probability distributions. Here we combine these insights into a hierarchy of probabilistic compatibility tests for the causal compatibility problem.

\begin{definition}
    Given a causal structure $\mathcal G$, and given cardinalities\footnote{The cardinalities for the visible variables, $k_{\mathcal V} = \bc{k_{v} = \abs{\Omega_{v}} \mid v \in \mathcal V}$, are also assumed to be known.} $k_{\mathcal L} = \bc{k_{\ell} = \abs{\Omega_{\ell}} \mid \ell \in \mathcal L}$ for the latent variables, define the \term{uniformly induced distributions}, denoted as $\mathcal U_{\mathcal V}^{(k_{\mathcal L})}(\mathcal G)$, as the set of \textit{all} distributions $\tilde {\mathtt{P}}_{\mathcal V} \in \mathcal M_{\mathcal V}(\mathcal G)$ which admit of a uniform functional model $\br{\mathcal G, \mathcal F_{\mathcal V}, \mathcal P_{\mathcal L}}$ with cardinalities $k_{\mathcal L}$.
\end{definition}

Recall that Section~\ref{sec:symmetry_and_superfluous} demonstrates a method, using the possible worlds framework, for efficient generation of the entirety of $\mathcal U_{\mathcal V}^{(k_{\mathcal L})}(\mathcal G)$.

\begin{lemma}
    \label{lem:epsilon_dense}
    The uniformly induced distributions $\mathcal U_{\mathcal V}^{(k_{\mathcal L})}(\mathcal G)$ form an $\varepsilon$-dense set in $\mathcal M_{\mathcal V}\br{\mathcal G}$,
    \begin{align}
        \mathtt{P}_{\mathcal V} \in \mathcal M_{\mathcal V}\br{\mathcal G} \implies \exists \tilde {\mathtt{P}}_{\mathcal V} \in \mathcal U_{\mathcal V}^{(k_{\mathcal L})}(\mathcal G), \quad \Delta(\mathtt{P}_{\mathcal V}, \tilde {\mathtt{P}}_{\mathcal V}) \leq \varepsilon \in \mathcal O\br{\frac{LC}{K}}
        \label{eq:error_bound_again}
    \end{align}
    where $\varepsilon$ is a function of $K = \min\bc{k_{\ell} \mid \ell \in \mathcal L}$, the number of latent variables $L = \abs{\mathcal L}$, and $C = \max\bc{c_{\ell} \mid \ell \in \mathcal L}$ where $c_{\ell}$ is the minimum upper bound placed on the cardinalities of the latent variable $\ell$ by Theorem~\ref{thm:upper_bound}.
\end{lemma}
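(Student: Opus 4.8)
The plan is to deduce the lemma essentially as a corollary of Theorem~\ref{thm:main_error_bound_models}, after ensuring that the latent cardinalities being used are large enough for the approximating model to land inside $\mathcal U_{\mathcal V}^{(k_{\mathcal L})}(\mathcal G)$. Fix $\mathtt{P}_{\mathcal V} \in \mathcal M_{\mathcal V}\br{\mathcal G}$. By Definition~\ref{defn:causal_compat} there is a functional causal model $\br{\mathcal G, \mathcal F_{\mathcal V}, \mathcal P_{\mathcal L}}$ reproducing $\mathtt{P}_{\mathcal V}$; by Theorem~\ref{thm:upper_bound} (the cardinality bound proven in the appendix) we may assume each latent variable $\ell$ has cardinality at most $c_{\ell}$, so that $C = \max_{\ell} c_{\ell}$ is a finite quantity depending only on $\mathcal G$ and $k_{\mathcal V}$. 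First I would invoke Theorem~\ref{thm:main_error_bound_models} to obtain a functional causal model $(\mathcal G, \tilde{\mathcal F}_{\mathcal V}, \tilde{\mathcal P}_{\mathcal L})$ with \emph{uniform} latent distributions on sample spaces of sizes $k_{\ell} = |\tilde\Omega_{\ell}|$, producing a distribution $\tilde{\mathtt{P}}_{\mathcal V}$ with $\Delta(\mathtt{P}_{\mathcal V}, \tilde{\mathtt{P}}_{\mathcal V}) \le \varepsilon$, where $\varepsilon = \sum_{n=1}^{L} \frac{1}{n!}\br{\frac{L(C-1)}{K}}^{n} \in \mathcal O\br{\frac{LC}{K}}$ and $K = \min_{\ell} k_{\ell}$, $L = \abs{\mathcal L}$.

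The second step is simply to observe that $\tilde{\mathtt{P}}_{\mathcal V}$ is, by construction, a member of $\mathcal U_{\mathcal V}^{(k_{\mathcal L})}(\mathcal G)$: it is a compatible distribution (it is produced by a functional causal model for $\mathcal G$, so $\tilde{\mathtt{P}}_{\mathcal V} \in \mathcal M_{\mathcal V}(\mathcal G)$) which admits a uniform functional model with exactly the cardinalities $k_{\mathcal L} = \bc{k_{\ell}}$. Hence for every $\mathtt{P}_{\mathcal V} \in \mathcal M_{\mathcal V}\br{\mathcal G}$ there is a witness $\tilde{\mathtt{P}}_{\mathcal V} \in \mathcal U_{\mathcal V}^{(k_{\mathcal L})}(\mathcal G)$ within distance $\varepsilon$, which is precisely the $\varepsilon$-density claim of Equation~\ref{eq:error_bound_again}. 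The dependence of $\varepsilon$ on $K$, $L$, $C$ is inherited verbatim from Theorem~\ref{thm:main_error_bound_models}, with $c_{\ell}$ identified as the minimal upper bound from Theorem~\ref{thm:upper_bound}, matching the statement.

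The one genuine subtlety — and the step I expect to be the main obstacle — is the bookkeeping of \emph{which} cardinalities $k_{\mathcal L}$ appear. Theorem~\ref{thm:main_error_bound_models} as quoted delivers \emph{some} uniform model with \emph{some} cardinalities $k_{\ell}$; to get a clean hierarchy indexed by a prescribed vector $k_{\mathcal L}$ one needs the approximation construction to be monotone, in the sense that any $\tilde{\mathtt{P}}_{\mathcal V}$ realizable with uniform latents of sizes $k_{\ell}$ is also realizable with uniform latents of sizes $k'_{\ell} \ge k_{\ell}$ (e.g.\ by having the larger uniform variable deterministically coarse-grain onto the smaller one, which is exactly the mechanism of Theorem~\ref{thm:discrete_uniform_sampling} with $g$ a surjection). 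I would spell out this padding argument so that, given any target accuracy, one can choose $K$ large enough that $\varepsilon \to 0$, and then fix a specific $k_{\mathcal L}$ achieving it; this is what makes the family $\bc{\mathcal U_{\mathcal V}^{(k_{\mathcal L})}(\mathcal G)}$ into a genuine converging hierarchy rather than an unordered collection. Everything else is a direct quotation of the two preceding theorems.
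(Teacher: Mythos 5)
Your proposal is correct and follows essentially the same route as the paper's proof: invoke Theorem~\ref{thm:upper_bound} to bound the latent cardinalities by $c_{\mathcal L}$, then apply Theorem~\ref{thm:main_error_bound_models} to obtain a uniform functional model whose output distribution lies in $\mathcal U_{\mathcal V}^{(k_{\mathcal L})}(\mathcal G)$ within distance $\varepsilon$. The ``subtlety'' you flag about prescribing the cardinalities $k_{\mathcal L}$ is already handled by the construction in Theorem~\ref{thm:discrete_uniform_sampling}, where the uniform sample spaces $\Omega_{u}$ may be chosen of any desired size, so no additional padding argument is needed.
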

\begin{proof}
    Since $c_{\mathcal L} = \bc{c_{\ell} \mid \ell \in \mathcal L}$ are minimum upper bounds placed on the cardinalities of the latent variables by Theorem~\ref{thm:upper_bound}, any $\mathtt{P}_{\mathcal V} \in \mathcal M_{\mathcal V}\br{\mathcal G}$ \textit{must} admit a functional causal model with cardinalities for the latent variables at most $c_{\mathcal L}$. Then by Theorem~\ref{thm:main_error_bound_models}, there exists a uniform causal model producing $\tilde {\mathtt{P}}_{\mathcal V} \in \mathcal U_{\mathcal V}^{(k_{\mathcal L})}(\mathcal G)$, within a distance $\varepsilon$ given by Equation~\ref{eq:main_error_bound_result}.
\end{proof}

Lemma~\ref{lem:epsilon_dense} forms the basis of the following compatibility test,

\begin{theorem}[The Causal Compatibility Test of Order $K$]
    \label{thm:test_K}
    For a probability distribution $\mathtt{P}_{\mathcal V}$ and a causal structure $\mathcal G$, the causal compatibility test of order $K = \min\bc{k_{\ell} \mid \ell \in \mathcal L}$ is defined as the following question:
    \begin{center}
        Does there exist a uniformly induced distribution $\tilde {\mathtt{P}}_{\mathcal V} \in \mathcal U_{\mathcal V}^{(k_{\mathcal L})}(\mathcal G)$ such that $\Delta(\mathtt{P}_{\mathcal V}, \tilde {\mathtt{P}}_{\mathcal V}) \leq \varepsilon\br{K}$?\footnote{Here $\varepsilon\br{K}$ is the value for $\varepsilon$ provided by Lemma~\ref{lem:epsilon_dense}.}
    \end{center}
    As $K \to \infty$, the distance tends to zero $\varepsilon(K) \to 0$ and the sensitivity of the test increases. If $\mathtt{P}_{\mathcal V} \not \in \mathcal M_{\mathcal V}(\mathcal G)$, then $\mathtt{P}_{\mathcal V}$ will fail the test for finite $K$. If $\mathtt{P}_{\mathcal V} \in \mathcal M_{\mathcal V}(\mathcal G)$, then $\mathtt{P}_{\mathcal V}$ will pass the test for all $K$. Moreover, for fixed $K$, the test can readily return the functional causal model behind the best approximation $\tilde {\mathtt{P}}_{\mathcal V}$.
\end{theorem}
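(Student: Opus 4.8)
The statement bundles four assertions, and I would handle them in increasing order of difficulty. These are: (i) $\varepsilon(K) \to 0$ as $K \to \infty$; (ii) if $\mathtt{P}_{\mathcal V} \in \mathcal M_{\mathcal V}(\mathcal G)$ then the order-$K$ test passes for every $K$; (iii) if $\mathtt{P}_{\mathcal V} \notin \mathcal M_{\mathcal V}(\mathcal G)$ then the order-$K$ test fails for some finite $K$; and (iv) a fixed-order test can return the generating uniform functional causal model of its best approximant. Claim (ii) I would dispatch immediately: Lemma~\ref{lem:epsilon_dense} produces, for each $k_{\mathcal L}$ with $\min_\ell k_\ell = K$, a distribution $\tilde{\mathtt{P}}_{\mathcal V} \in \mathcal U_{\mathcal V}^{(k_{\mathcal L})}(\mathcal G)$ with $\Delta(\mathtt{P}_{\mathcal V},\tilde{\mathtt{P}}_{\mathcal V}) \le \varepsilon(K)$, which is exactly what the test requests, so a compatible distribution passes at every order.

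For (i) the plan is purely arithmetic: by Lemma~\ref{lem:epsilon_dense} and Theorem~\ref{thm:main_error_bound_models} we have $\varepsilon(K) = \sum_{n=1}^{L}\frac{1}{n!}\bigl(\frac{L(C-1)}{K}\bigr)^{n}$, where $L = \abs{\mathcal L}$ and $C = \max_\ell c_\ell$ depend only on $\mathcal G$ and the visible cardinalities through the finite bound of Theorem~\ref{thm:upper_bound}, not on $K$. Since this is a finite sum whose every term vanishes as $K \to \infty$, and indeed $\varepsilon(K) \le \exp\!\bigl(\frac{L(C-1)}{K}\bigr) - 1 \in \mathcal O(LC/K)$, I would conclude that $\varepsilon(K) \to 0$ and, crucially for the next step, that for every $d > 0$ there is a finite $K_0$ with $\varepsilon(K) < d$ for all $K \ge K_0$.

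Claim (iii) is the substantive direction, and the key input I would establish is that $\mathcal M_{\mathcal V}(\mathcal G)$ is a closed subset of the simplex $\mathbb{P}_{\mathcal V}$. My route: invoke Theorem~\ref{thm:upper_bound} so that every compatible distribution is realized by some functional causal model whose latent cardinalities are at most $c_{\mathcal L}$; then $\mathcal M_{\mathcal V}(\mathcal G)$ is the union, over the \emph{finitely many} admissible choices of visible functional parameters $\mathcal F_{\mathcal V}$, of the image of the compact product of simplices $\prod_{\ell \in \mathcal L}\mathbb{P}_{\ell}$ under the map of Definition~\ref{defn:functional_causal_model} (Equation~\ref{eq:functional_compatibility}), which for fixed $\mathcal F_{\mathcal V}$ is multilinear in $\mathcal P_{\mathcal L}$ and hence continuous. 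A finite union of continuous images of compact sets is compact, so $\mathcal M_{\mathcal V}(\mathcal G)$ is closed. Given this, if $\mathtt{P}_{\mathcal V} \notin \mathcal M_{\mathcal V}(\mathcal G)$ then $d := \inf_{\mathtt{Q} \in \mathcal M_{\mathcal V}(\mathcal G)} \Delta(\mathtt{P}_{\mathcal V},\mathtt{Q}) > 0$; since $\mathcal U_{\mathcal V}^{(k_{\mathcal L})}(\mathcal G) \subseteq \mathcal M_{\mathcal V}(\mathcal G)$ by definition, every candidate satisfies $\Delta(\mathtt{P}_{\mathcal V},\tilde{\mathtt{P}}_{\mathcal V}) \ge d$, and choosing $K$ with $\varepsilon(K) < d$ (possible by (i)) makes the order-$K$ test fail. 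I expect this compactness argument to be the main obstacle, and in particular its reliance on the appendix bound on latent cardinalities: without a cardinality cutoff one would have to argue closedness of a nested union of ever-larger parameter spaces directly.

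Finally, for (iv) I would note that, with $k_{\mathcal L}$ fixed, a uniform functional causal model is determined by $\mathcal F_{\mathcal V}$ alone (the latent distributions being forced uniform) and there are only finitely many such $\mathcal F_{\mathcal V}$; hence $\mathcal U_{\mathcal V}^{(k_{\mathcal L})}(\mathcal G)$ is a finite set, each element tagged with its generating model. The possible-worlds search of Section~\ref{sec:symmetry_and_superfluous} enumerates exactly this set (modulo the $S_{\mathcal L}$-symmetry and superfluity reductions), so the order-$K$ test is implemented by generating $\mathcal U_{\mathcal V}^{(k_{\mathcal L})}(\mathcal G)$, returning the $\tilde{\mathtt{P}}_{\mathcal V}$ minimizing $\Delta(\mathtt{P}_{\mathcal V},\tilde{\mathtt{P}}_{\mathcal V})$ together with its populated possible worlds diagram, and declaring a pass if and only if that minimum is at most $\varepsilon(K)$.
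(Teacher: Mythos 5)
Your proposal is correct, and it follows the same basic route the paper takes: the paper in fact offers no explicit proof of Theorem~\ref{thm:test_K}, treating it as an immediate consequence of Lemma~\ref{lem:epsilon_dense} (which you use verbatim for the ``compatible distributions always pass'' direction and for $\varepsilon(K)\to 0$ via the explicit formula in Equation~\ref{eq:main_error_bound_result}). Where you go beyond the paper is claim (iii): the assertion that an incompatible $\mathtt{P}_{\mathcal V}$ fails at some finite $K$ silently requires $\inf_{\mathtt{Q}\in\mathcal M_{\mathcal V}(\mathcal G)}\Delta(\mathtt{P}_{\mathcal V},\mathtt{Q})>0$, i.e.\ closedness of $\mathcal M_{\mathcal V}(\mathcal G)$, and the paper never states or argues this. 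Your compactness argument supplies it correctly: Theorem~\ref{thm:upper_bound} gives latent-cardinality caps $c_{\mathcal L}$ that depend only on $\mathcal G$ and the (finite) visible cardinalities, hence uniformly over $\mathcal M_{\mathcal V}(\mathcal G)$; with cardinalities capped there are finitely many choices of $\mathcal F_{\mathcal V}$, and for each fixed $\mathcal F_{\mathcal V}$ the map of Equation~\ref{eq:functional_compatibility} is multilinear in $\mathcal P_{\mathcal L}$ on a compact product of simplices, so $\mathcal M_{\mathcal V}(\mathcal G)$ is a finite union of compact sets and therefore closed, and $\mathcal U_{\mathcal V}^{(k_{\mathcal L})}(\mathcal G)\subseteq\mathcal M_{\mathcal V}(\mathcal G)$ then forces failure once $\varepsilon(K)$ drops below the positive distance. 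Your treatment of (iv) via finiteness of $\mathcal U_{\mathcal V}^{(k_{\mathcal L})}(\mathcal G)$ and the enumeration of Section~\ref{sec:symmetry_and_superfluous} matches the paper's intent. In short: same skeleton as the paper, with the one genuinely missing ingredient (closedness of the compatible set) made explicit and proved.
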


First notice that Theorem~\ref{thm:test_K} achieves the same rate of convergence as~\cite{Navascues_2017}. Unlike the result of~\cite{Navascues_2017}, Theorem~\ref{thm:test_K} returns a functional model which approximates $\mathtt{P}_{\mathcal V}$. It is interesting to remark that the distance bound $\varepsilon \in \mathcal O(LC/K)$ in Equation~\ref{eq:error_bound_again} depends on $C = \max\bc{c_{\ell} \mid \ell \in \mathcal L}$ where $c_{\ell}$ is the minimum upper bound placed on the cardinalities of the latent variable $\ell$ by Theorem~\ref{thm:upper_bound}. As conjectured in Appendix~\ref{sec:simplifying_causal_parameters}, it is likely that there are tighter bounds that can be placed on these cardinalities for certain causal structures. Therefore, further research into lowering these bounds will improve the performance of Theorem~\ref{thm:test_K}.

\section{Conclusion}
\label{sec:conclusion}

In conclusion, this paper examined the abstract problem of causal compatibility for causal structures with latent variables. Section~\ref{sec:possible_worlds_framework} introduced the framework of possible worlds in an effort to provide solutions to the causal compatibility problem. Central to this framework is the notion of a possible worlds diagram, which can be viewed as a hybrid between a causal structure and the functional parameters of a causal model. It does not however, convey any information about the probability distributions over the latent variables.

In Section~\ref{sec:complete_possibilistic}, we utilized the possible worlds framework to prove possibilistic incompatibility of a number of examples. In addition, we demonstrated the utility of our approach by resolving an open problem associated with one of Evans'~\cite{Evans_2016} causal structures. Particularly, we have shown the causal structure in Figure~\ref{fig:evans_causal_structure_1} is incompatible with the distribution in Equation~\ref{eq:evans_failing_distribution}. Section~\ref{sec:complete_possibilistic} concluded with an algorithm for completely solving the possibilistic causal compatibility problem.

In Section~\ref{sec:complete_probabilistic}, we discussed how to efficiently search through the observational equivalence classes of functional parameters using a possible worlds diagram as a data structure. Afterwards, we derived bounds on the distance between compatible distributions and uniformly induced ones. By combining these results, we provide a hierarchy of necessary tests for probabilistic causal compatibility which converge in the limit.

\section{Acknowledgments}
Foremost, I must thank my supervisor Robert W. Spekkens for his unwavering support and encouragement. Second, I would like to sincerely thank Elie Wolfe for our numerous and lengthy discussions. Without him or his research, this paper simply would not exist. Finally, I thank the two anonymous referees for providing insight necessary for significantly improving this paper.

\printbibliography
\clearpage
\appendix

\section{Simplifying Causal Structures}
\label{sec:simplifying_causal_structures}

\subsection{Observational Equivalence}
\label{sec:observational_impact}
From an experimental perspective, a causal model $\br{\mathcal G, \mathcal P}$ has the ability to predict the effects of \textit{interventions}; by manually tinkering with the configuration of a system, one can learn more about the underlying mechanisms than from observations alone~\cite{Pearl_2009}. When interventions become impossible, because experimentation is expensive or unethical for example, it becomes possible for distinct causal structures to admit the same set of compatible correlations. An important topic in the study of causal inference is the identification of \textit{observationally equivalent} causal structures. Two causal structures $\mathcal G$ and $\mathcal G'$ are observationally equivalent or simply \textit{equivalent} if they share the same set of compatible models $\mathcal M_{\mathcal V}\br{\mathcal G} = \mathcal M_{\mathcal V}\br{\mathcal G'}$. For example, the direct cause causal structure in Figure~\ref{fig:direct_cause} is observationally equivalent to the common cause causal structure in Figure~\ref{fig:common_cause}. Identifying observationally equivalent causal structures is of fundamental importance to the causal compatibility problem; if a distribution $\mathtt{P}_{\mathcal V}$ is known to satisfy the hypotheses of $\mathcal G$, and $\mathcal M_{\mathcal V}(\mathcal G) = \mathcal M_{\mathcal V}(\mathcal G')$ then it will also satisfy the hypotheses of $\mathcal G'$.

\begin{figure}
    \centering
    \begin{subfigure}[t]{0.49\textwidth}
        \centering
        \includegraphics{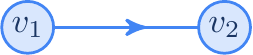}
        \caption{A direct cause from $v_1$ to $v_2$.}
        \label{fig:direct_cause}
    \end{subfigure}
    \begin{subfigure}[t]{0.49\textwidth}
        \centering
        \includegraphics{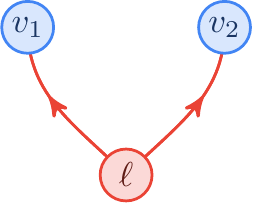}
        \caption{A shared common cause $\ell$ between $v_1$ and $v_2$.}
        \label{fig:common_cause}
    \end{subfigure}
    \caption{The causal structures of (a) and (b) are observationally equivalent.}
    \label{fig:direct_cause_common_cause}
\end{figure}

\subsection{Exo-Simplicial Causal Structures}

In general, other than being a directed acyclic graph, there are no restrictions placed on a causal structure with latent variables. Nonetheless,~\cite{Evans_2016} demonstrated a number of transformations on causal structures which leave $\mathcal M_{\mathcal V}\br{\mathcal G}$ invariant. Two of these transformations are the subject of interest for this section. The first concerns itself with latent vertices that have parents while the second concerns itself with parent-less latent vertices that share children. Each will be taken in turn.

\begin{figure}
    \centering

    \begin{subfigure}[t]{0.49\textwidth}
        \centering
        \includegraphics{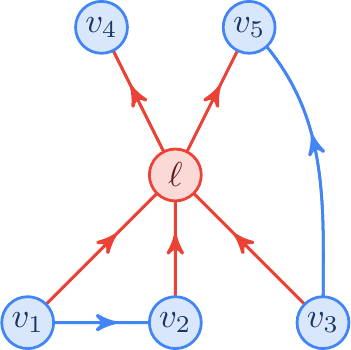}
        \caption{A latent vertex with observable parents.}
        \label{fig:lat_with_obs_parents}
    \end{subfigure}
    \begin{subfigure}[t]{0.49\textwidth}
        \centering
        \includegraphics{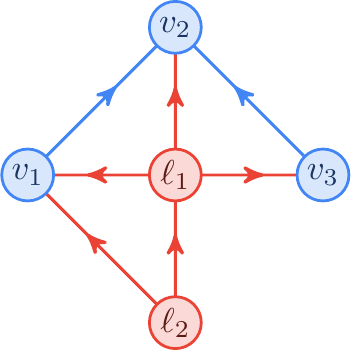}
        \caption{A latent vertex with latent parents}
        \label{fig:latent_edge}
    \end{subfigure}
    \begin{subfigure}[t]{0.49\textwidth}
        \centering
        \includegraphics{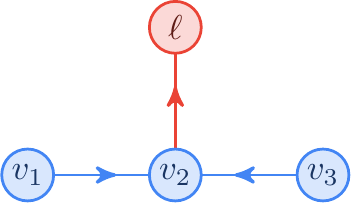}
        \caption{A latent vertex with no children.}
        \label{fig:latent_no_children}
    \end{subfigure}
    \begin{subfigure}[t]{0.49\textwidth}
        \centering
        \includegraphics{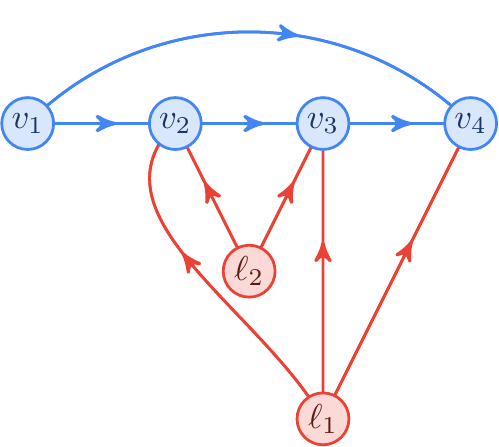}
        \caption{Latent vertices with nested children.}
        \label{fig:latent_containment}
    \end{subfigure}
    \caption{Examples of causal structures which are not exo-simplicial.}
    \label{fig:non_simplicial}
\end{figure}

\begin{definition}[See Defn. 3.6~\cite{Evans_2016}]
    Given a causal structure $\mathcal G = \br{\mathcal V \cup \mathcal L, \mathcal E}$ with latent vertex $\ell \in \mathcal L$, the \term{exogenized} causal structure $\exo[\mathcal G]{\ell}$ is formed by taking $\mathcal E$ and (i) adding an edge $p \to c$ for every $p \in \pa[\mathcal G]{\ell}$ and $c \in \ch[\mathcal G]{\ell}$ if not already present, and (ii) deleting all edges of the form $p \to \ell$ where $p \in \pa[\mathcal G]{\ell}$. If $\pa[\mathcal G]{\ell}$ is empty, $\exo[\mathcal G]{\ell} = \mathcal G$.
\end{definition}

\begin{lemma}[See Lem. 3.7~\cite{Evans_2016}]
    Given a causal structure $\mathcal G = \br{\mathcal V \cup \mathcal L, \mathcal E}$ with latent vertex $\ell \in \mathcal L$, then $\mathcal M_{\mathcal V}\br{\exo[\mathcal G]{\ell}} = \mathcal M_{\mathcal V}\br{\mathcal G}$.
    \label{lem:exogenization}
\end{lemma}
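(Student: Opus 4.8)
The plan is to establish the two inclusions $\mathcal M_{\mathcal V}\br{\mathcal G} \subseteq \mathcal M_{\mathcal V}\br{\exo[\mathcal G]{\ell}}$ and $\mathcal M_{\mathcal V}\br{\exo[\mathcal G]{\ell}} \subseteq \mathcal M_{\mathcal V}\br{\mathcal G}$ separately, in each case starting from a causal model (conditional tables $\mathtt P_{q\mid\pa[\mathcal G]{q}}$ as in Equation~\ref{eq:causal_parameter_product}) that realizes a given visible marginal $\mathtt P_{\mathcal V}$, and explicitly constructing a causal model on the other structure with the identical visible marginal. Write $P = \pa[\mathcal G]{\ell}$ and $C = \ch[\mathcal G]{\ell}$. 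Two cheap preliminaries: $P \cap C = \emptyset$, since a vertex in both would force a $2$-cycle in $\mathcal G$; and any topological order of $\mathcal G$ is also a topological order of $\exo[\mathcal G]{\ell}$, because every edge $p\to c$ added has $p\in P$, $c\in C$ with $p <_{\mathcal G} \ell <_{\mathcal G} c$ — so $\exo[\mathcal G]{\ell}$ is again a DAG and only the parent sets of $\ell$ and of the members of $C$ differ between the two graphs.

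For $\mathcal M_{\mathcal V}\br{\mathcal G} \subseteq \mathcal M_{\mathcal V}\br{\exo[\mathcal G]{\ell}}$, the idea is that in $\exo[\mathcal G]{\ell}$ the children $C$ gain direct access to $P$, so they can internally re-run whatever the now-orphaned $\ell$ used to do. Concretely I would let the new (still latent, now parent-less) vertex $\ell$ take values in the function space $\Omega_{\ell}^{\Omega_{P}}$, with the components indexed by $x_P \in \Omega_P$ sampled independently, the $x_P$-component having law $\mathtt P_{\ell\mid P}(\cdot\mid x_P)$ from the original model; all other latent tables are copied verbatim. For each $c\in C$, whose parent set in $\exo[\mathcal G]{\ell}$ is $\pa[\mathcal G]{c}\cup P$, define its new table to read off the valuation $x_P$, evaluate its function-valued $\ell$-parent at $x_P$ to recover $\lambda\in\Omega_\ell$, and then apply the original $\mathtt P_{c\mid\pa[\mathcal G]{c}}$; every vertex outside $C$ keeps its $\mathcal G$-table. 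Summing the function-valued $\ell$ out of the $\exo[\mathcal G]{\ell}$-joint, each $c\in C$ queries that function only at the single realized point $x_P$, so the sum over functions collapses to $\sum_{\lambda}\mathtt P_{\ell\mid P}(\lambda\mid x_P)\,[\cdots]$, which is exactly the term obtained by summing $\ell$ out of the $\mathcal G$-joint. The point that requires care is that all members of $C$ must read the \emph{same} $\lambda$; this holds because they share the identical parent set $P$ and the identical function-valued latent.

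For the reverse inclusion $\mathcal M_{\mathcal V}\br{\exo[\mathcal G]{\ell}} \subseteq \mathcal M_{\mathcal V}\br{\mathcal G}$, now $\ell$ regains $P$ as parents but its children lose their $P$-edges, so $\ell$ must ferry the valuation of $P$ to them. I would give the $\mathcal G$-vertex $\ell$ sample space $\Omega_P \times \Omega_\ell^{\mathrm{old}}$ with conditional $\mathtt P_{\ell\mid P}\!\br{(\hat p,\lambda)\mid x_P} = \delta(\hat p,x_P)\,\mathtt P_\ell^{\mathrm{old}}(\lambda)$, i.e. $\ell$ deterministically echoes its parents and independently re-samples the old exogenous $\ell$; each $c\in C$ then recovers $(x_P,\lambda)$ from its $\ell$-parent, combines it with the rest of its $\mathcal G$-parents, and applies its $\exo[\mathcal G]{\ell}$-table, whose inputs were precisely $\pa[\mathcal G]{c}\setminus\{\ell\}$, $P$, and the value of $\ell$. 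The same collapsing computation, $\sum_{\hat p,\lambda}\delta(\hat p,x_P)\mathtt P_\ell^{\mathrm{old}}(\lambda)[\cdots] = \sum_\lambda \mathtt P_\ell^{\mathrm{old}}(\lambda)[\cdots]$, shows the visible marginal is preserved.

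The bulk of the work — and the only genuinely delicate part — is the verification in each direction that, after marginalizing the enlarged $\ell$, the joint on $\mathcal V \cup \br{\mathcal L\setminus\{\ell\}}$ agrees term-by-term with the original: namely that (i) each redefined table depends only on the vertex's parents in the new graph, (ii) all of $C$ receives a common routed value, and (iii) the marginalization identity $\sum_{g}\bc{\ldots} = \sum_{\lambda}\bc{\ldots}$ holds. I do not expect any conceptual obstruction beyond this bookkeeping; the whole construction hinges on the freedom to inflate the cardinality of $\ell$, which is legitimate precisely because $\ell$ is unobserved. (If one were willing to invoke the functional-form reduction of a later appendix, the first direction shortens — keep $\ell$'s exogenous noise as the new parent-less latent and let each $c\in C$ recompute $f_\ell$ from $P$ directly — but I would present the self-contained table-level argument to keep Appendix~\ref{sec:simplifying_causal_structures} independent of Appendix~\ref{sec:simplifying_causal_parameters}.)
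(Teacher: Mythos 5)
Your proposal is correct and, in substance, it is the argument the paper relies on: the paper itself offers no in-text proof (it defers to Lem.~3.7 of Evans), and your two constructions --- the function-valued exogenous latent $g \in \Omega_{\ell}^{\Omega_{P}}$ with independently sampled components for $\mathcal M_{\mathcal V}\br{\mathcal G} \subseteq \mathcal M_{\mathcal V}\br{\exo[\mathcal G]{\ell}}$, and the parent-echoing latent on $\Omega_{P}\times\Omega_{\ell}$ for the reverse inclusion --- are exactly the standard ``response-function'' and ``copying'' tricks used there, with the key consistency point (all children of $\ell$ reading the same routed value) correctly identified. The only caveat, which sits at a level of rigor beyond the paper's own treatment, is that when $\pa[\mathcal G]{\ell}$ contains latent variables with uncountable sample spaces the function-space construction needs measure-theoretic care (measurability of the evaluation map); for the finite/discrete settings in which the lemma is actually invoked here, your argument is complete.
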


\begin{proof}
    See proof of Lem. 3.7 from~\cite{Evans_2016}.
\end{proof}

The concept of exogenization is best understood with an example.
\begin{example}
    Consider the causal structure $\mathcal G_{\ref{fig:lat_with_obs_parents}}$ in Figure~\ref{fig:lat_with_obs_parents}. In $\mathcal G_{\ref{fig:lat_with_obs_parents}}$, the latent variable $\ell$ has parents $\pa{\ell} = \bc{v_{1}, v_{2}, v_{3}}$ and children $\ch{\ell} = \bc{v_{4}, v_{5}}$. Since the sample space $\Omega_{\ell}$ is unknown, its cardinality could be arbitrarily large or infinite. As a result, it has an unbounded capacity to \textit{inform} its children of the valuations of its parents, e.g. $v_{4}$ can have complete knowledge of $v_{1}$ through $\ell$ and therefore adding the edge $v_{1} \to v_{4}$ has no observational impact. Applying similar reasoning to all parents of $\ell$, i.e. applying Lemma~\ref{lem:exogenization}, one converts $\mathcal G_{\ref{fig:lat_with_obs_parents}}$ to the observationally equivalent, exogenized causal structure $\exo[\mathcal G_{\ref{fig:lat_with_obs_parents}}]{\ell}$ depicted in Figure~\ref{fig:exogenized_example}.
\end{example}
\begin{figure}
    \centering
    \includegraphics{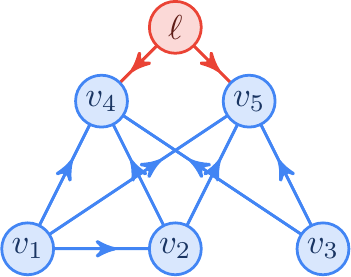}
    \caption{The exogenized causal structure $\exo[\mathcal G_{\ref{fig:lat_with_obs_parents}}]{\ell}$.}
    \label{fig:exogenized_example}
\end{figure}
Lemma~\ref{lem:exogenization} can be applied recursively to each latent variable $\ell \in \mathcal L$ in order to transform any causal structure $\mathcal G$ into an observationally equivalent one wherein the latent variables have no parents (exogenous). Notice that the process of exogenization also works when latent vertices have latent parents, as is the case in Figure~\ref{fig:latent_edge}. Also, when a latent vertex $\ell$ has no children, the process of exogenization disconnects $\ell$ from the rest of the causal structure, where it can be ignored with no observational impact due to Equation~\ref{eq:latent_marginalization_continuous}.

The next observationally invariant transformation requires the exogenization procedure to have been applied first. In Figure~\ref{fig:latent_containment}, $\ell_{1}$ and $\ell_{2}$ are exogenous latent variables where $\ch[\mathcal G_{\ref{fig:latent_containment}}]{\ell_{2}} \subset \ch[\mathcal G_{\ref{fig:latent_containment}}]{\ell_{1}}$. Therefore, because the sample space $\Omega_{\ell_{1}}$ is unspecified, it has the capacity to emulate any dependence that $v_{3}$ and/or $v_{2}$ might have on $\ell_{2}$. This idea is captured by Lemma~\ref{lem:simplicialization}.

\begin{lemma}[See Lem. 3.8~\cite{Evans_2016}]
    Let $\mathcal G$ be a causal structure with latent vertices $\ell, \ell' \in \mathcal L$ where $\ell \neq \ell'$. If $\pa[\mathcal G]{\ell} = \pa[\mathcal G]{\ell'} = \emptyset$, and $\ch[\mathcal G]{\ell'} \subseteq \ch[\mathcal G]{\ell}$ then $\mathcal M_{\mathcal V}\br{\mathcal G} = \mathcal M_{\mathcal V}\br{\sub[\mathcal G]{\mathcal V \cup \mathcal L - \bc{\ell'}}}$.
    \label{lem:simplicialization}
\end{lemma}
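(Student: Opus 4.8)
The plan is to prove the two inclusions $\mathcal M_{\mathcal V}(\sub[\mathcal G]{\mathcal V \cup \mathcal L - \{\ell'\}}) \subseteq \mathcal M_{\mathcal V}(\mathcal G)$ and $\mathcal M_{\mathcal V}(\mathcal G) \subseteq \mathcal M_{\mathcal V}(\sub[\mathcal G]{\mathcal V \cup \mathcal L - \{\ell'\}})$ separately, working throughout with functional causal models (Definition~\ref{defn:functional_causal_model}); write $\mathcal G' = \sub[\mathcal G]{\mathcal V \cup \mathcal L - \{\ell'\}}$. The first inclusion is immediate: given a functional causal model $(\mathcal G', \mathcal F_{\mathcal V}, \mathcal P_{\mathcal L - \{\ell'\}})$ producing $\mathtt P_{\mathcal V}$, reinstate $\ell'$ with, say, a point distribution on a single outcome and leave every $f_v$ unchanged. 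This is legitimate because in $\mathcal G$ each $v \in \ch[\mathcal G]{\ell'}$ merely gains $\ell'$ as an extra parent that its function ignores, and it reproduces $\mathtt P_{\mathcal V}$ since the extra sum over $\Omega_{\ell'}$ in Equation~\ref{eq:functional_compatibility} collapses to a single term of weight $1$.

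The substantive direction is the reverse inclusion, and the idea is to \emph{absorb $\ell'$ into $\ell$}. Given a functional causal model $(\mathcal G, \mathcal F_{\mathcal V}, \mathcal P_{\mathcal L})$ for $\mathcal G$ producing $\mathtt P_{\mathcal V}$, I would enlarge the sample space of $\ell$ to $\tilde\Omega_\ell = \Omega_\ell \times \Omega_{\ell'}$ equipped with the product distribution $\tilde{\mathtt P}_\ell(\lambda_\ell,\lambda_{\ell'}) = \mathtt P_\ell(\lambda_\ell)\,\mathtt P_{\ell'}(\lambda_{\ell'})$ — legitimate precisely because the hypothesis $\pa[\mathcal G]{\ell} = \pa[\mathcal G]{\ell'} = \emptyset$ makes both latent variables exogenous, hence independent, in the model. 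All other latent distributions are retained. For the visible functions: if $v \notin \ch[\mathcal G]{\ell'}$, define $\tilde f_v$ to coincide with $f_v$ after first projecting the $\ell$-entry of its input (when present) to its $\Omega_\ell$ component; if $v \in \ch[\mathcal G]{\ell'} \subseteq \ch[\mathcal G]{\ell}$, then in $\mathcal G'$ the vertex $v$ loses the parent $\ell'$ but keeps $\ell$, so define $\tilde f_v$ to unpack the pair $(\lambda_\ell,\lambda_{\ell'})$ stored at $\ell$ and feed $\lambda_\ell$ and $\lambda_{\ell'}$ into $f_v$ in the argument slots formerly occupied by $\ell$ and $\ell'$. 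Since $\mathcal G'$ is an induced subgraph of the DAG $\mathcal G$ it is still acyclic and the ancestral relations among retained vertices are unchanged, so this is a genuine functional causal model on $\mathcal G'$.

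Finally I would verify that the new model reproduces $\mathtt P_{\mathcal V}$ by inspecting Equation~\ref{eq:functional_compatibility}: the sum over $\lambda_\ell \in \tilde\Omega_\ell$ weighted by $\tilde{\mathtt P}_\ell$ is exactly the double sum over $(\lambda_\ell,\lambda_{\ell'}) \in \Omega_\ell \times \Omega_{\ell'}$ weighted by $\mathtt P_\ell(\lambda_\ell)\mathtt P_{\ell'}(\lambda_{\ell'})$, and term by term the product $\prod_{v} \delta(x_v, \tilde f_v(\cdots))$ equals the original $\prod_{v} \delta(x_v, f_v(\cdots))$ by construction of the $\tilde f_v$; hence the marginals on $\Omega_{\mathcal V}$ coincide. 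The only real obstacle is bookkeeping — keeping straight, for each visible child, which of $\ell,\ell'$ it is a parent of and matching the arity of $\tilde f_v$ to the parent set prescribed by $\mathcal G'$ — but there is no genuine difficulty, since the containment $\ch[\mathcal G]{\ell'} \subseteq \ch[\mathcal G]{\ell}$ guarantees that every variable which consulted $\ell'$ can still recover it from the enlarged $\ell$. (This is essentially the argument of Lemma 3.8 in~\cite{Evans_2016}, which could alternatively just be cited.)
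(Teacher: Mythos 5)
Your proposal is correct, and it is worth noting that the paper does not actually spell out a proof here at all: it simply defers to Lemma 3.8 of Evans (2016). Your argument supplies the missing content, and it is the standard one — the easy inclusion by reinstating $\ell'$ as a trivial (single-outcome, ignored) parent, and the substantive inclusion by absorbing $\ell'$ into $\ell$ via the product sample space $\Omega_{\ell}\times\Omega_{\ell'}$ with product weights, which is legitimate exactly because both latents are exogenous and hence carry independent distributions in Equation~\ref{eq:functional_compatibility}, and because $\ch[\mathcal G]{\ell'} \subseteq \ch[\mathcal G]{\ell}$ guarantees every consumer of $\lambda_{\ell'}$ can unpack it from the enlarged $\ell$. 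The term-by-term matching of the delta products and the collapse of the double sum are exactly right, so at the level of the paper's Definition~\ref{defn:functional_causal_model} and Definition~\ref{defn:causal_compat} your proof is complete. One small caveat on scope rather than correctness: the paper invokes this lemma in Appendix~\ref{sec:determinism} \emph{before} determinism and finite latent cardinality have been established (indeed, to help establish them), so in that context one wants the same absorption argument phrased for general causal models — stochastic kernels $\mathtt{P}_{v|\pa[\mathcal G]{v}}$ and possibly continuous latent spaces, with sums replaced by integrals and the product distribution replaced by the product measure. Your construction carries over verbatim to that setting (the new kernel for $v \in \ch[\mathcal G]{\ell'}$ simply conditions on both components of the enlarged $\ell$), so it would strengthen the write-up to state it at that level of generality, or at least remark that nothing in the argument uses determinism or finiteness.
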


\begin{proof}
    See proof of Lem. 3.8 from~\cite{Evans_2016}.
\end{proof}

An immediate corollary of Lemma~\ref{lem:simplicialization} is that the latent variables $\bc{\ell \mid \ell \in \mathcal L}$, which are isomorphic to their children $\bc{\ch{\ell} \mid \ell \in \mathcal L}$, are isomorphic to the facets of a simplicial complex over the visible variables.

\begin{definition}
    An \term{(abstract) simplicial complex}, $\Delta$, over a finite set $\mathcal V$ is a collection of non-empty subsets of $\mathcal V$ such that:
    \begin{enumerate}
        \item $\{v\} \in \Delta$ for all $v \in \mathcal V$; and
        \item if $C_1 \subseteq C_2 \subseteq \mathcal V$, $C_2 \in \Delta \Rightarrow C_1 \in \Delta$.
    \end{enumerate}
    The maximal subsets with respect to inclusion are called the \term{facets} of the simplicial complex.
\end{definition}

In~\cite{Evans_2016}, this concept led to the invention of \textit{mDAGs} (or marginal directed acyclic graphs), a hybrid between a directed acyclic graph and a simplicial complex. In this work, we refrain from adopting the formalism of mDAGs and instead continue to consider causal structures as entirely directed acyclic graphs. Despite this refrain, Lemmas~\ref{lem:exogenization},~\ref{lem:simplicialization} demonstrate that for the purposes of the causal compatibility problem, the latent variables of a causal structure can be assumed to be exogenous and to have children forming the facets of a simplicial complex. Causal structures which adhere to this characterization will be referred to as \textit{exo-simplicial} causal structures. Figure~\ref{fig:simplicial} depicts four exo-simplicial causal structures respectively equivalent to the causal structures in Figure~\ref{fig:non_simplicial}.

\begin{figure}
    \centering
    \begin{subfigure}[t]{0.49\textwidth}
        \centering
        \includegraphics{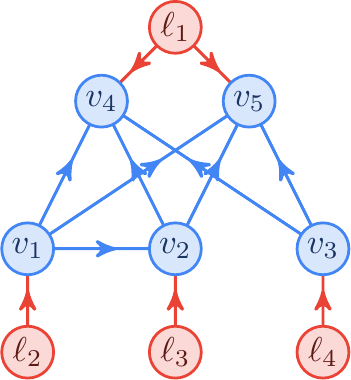}
        \caption{}
    \end{subfigure}
    \begin{subfigure}[t]{0.49\textwidth}
        \centering
        \includegraphics{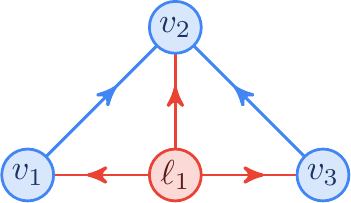}
        \caption{}
    \end{subfigure}
    \begin{subfigure}[t]{0.49\textwidth}
        \centering
        \includegraphics{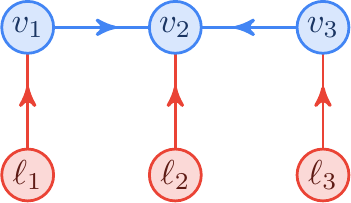}
        \caption{}
    \end{subfigure}
    \begin{subfigure}[t]{0.49\textwidth}
        \centering
        \includegraphics{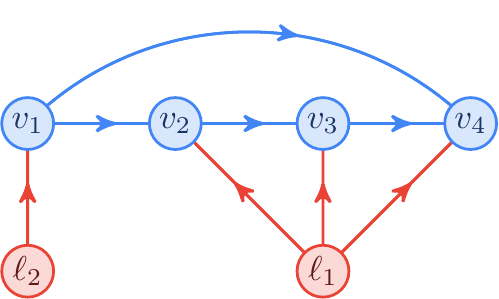}
        \caption{}
    \end{subfigure}
    \caption{Examples of exo-simplicial causal structures which are observationally equivalent to their respective counterparts in Figure~\ref{fig:non_simplicial}.}
    \label{fig:simplicial}
\end{figure}

\section{Simplifying Causal Parameters}
\label{sec:simplifying_causal_parameters}

Recall that a causal model $(\mathcal G, \mathcal P)$ consists of a causal structure $\mathcal G$ and causal parameters $\mathcal P$. Appendix~\ref{sec:simplifying_causal_structures} simplified the causal compatibility problem by revealing that each causal structure $\mathcal G$ can be replaced with an observationally equivalent exo-simplicial causal structure $\mathcal G'$ such that $\mathcal M_{\mathcal V}(\mathcal G) = \mathcal M_{\mathcal V}(\mathcal G')$. The purpose of this section is to simplify the causal compatibility problem in three ways. Section~\ref{sec:determinism} demonstrates that the visible causal parameters $\bc{\mathtt{P}_{v|\pa{v}} \mid v \in \mathcal V}$ of a causal model can be assumed to be deterministic without observational impact. Section~\ref{sec:finite_bounds} shows that if the observed distribution is finite (i.e. $\abs{\Omega_{\mathcal V}} < \infty$), one only needs to consider \textit{finite} probability distributions for the latent variables. Moreover, explicit upper bounds on the cardinalities of the latent variables can be computed.

\subsection{Determinism}
\label{sec:determinism}

\begin{lemma}
    \label{lem:determinism}
    If $\mathtt{P}_{\mathcal V} \in \mathcal M_{\mathcal V}\br{\mathcal G}$ and $\mathcal G$ is exo-simplicial (see Appendix~\ref{sec:simplifying_causal_structures}), then without loss of generality, the causal parameters $\mathtt{P}_{v|\pa[\mathcal G]{v}}$ over the observed variables can be assumed to be deterministic, and consequently,
    \begin{align}
        \forall x_{\mathcal V} \in \Omega_{\mathcal V}, \quad \mathtt{P}_{\mathcal V}(x_{\mathcal V}) = \prod_{\ell \in \mathcal L} \int_{\lambda_{\ell} \in \Omega_{\ell}} \mathrm{d}\mathtt{P}_{\ell}(\lambda_{\ell}) \prod_{v \in \mathcal L} \delta(x_{v}, f_{v}(x_{\vpa[\mathcal G]{v}},\lambda_{\lpa[\mathcal G]{v}}))
    \end{align}
\end{lemma}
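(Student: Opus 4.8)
The plan is to prove this by the standard \emph{private-randomness absorption} (inverse-sampling) construction, which pushes all stochasticity of the visible conditionals $\mathtt{P}_{v|\pa[\mathcal{G}]{v}}$ into enlarged latent sample spaces while leaving the graph $\mathcal{G}$ — both its vertex set and its edge set — completely unchanged. The structural fact I would establish first is that in an exo-simplicial causal structure \emph{every} visible variable has at least one latent parent: by Appendix~\ref{sec:simplifying_causal_structures} the child-sets $\bc{\ch[\mathcal{G}]{\ell} \mid \ell \in \mathcal{L}}$ form the facets of a simplicial complex over $\mathcal{V}$, and such a complex contains each singleton $\bc{v}$, so every $v \in \mathcal{V}$ lies in some facet and is therefore a child of some $\ell$. (If that $\ell$ happened to have cardinality one in the given model, one simply enlarges $\Omega_\ell$; this is precisely why the hypothesis is exo-simpliciality — a visible variable with no latent parent could never be realised as a deterministic function of its visible parents alone, so the lemma is genuinely false without it.)

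Next I would fix, for each $v \in \mathcal{V}$, one designated latent parent $\ell(v)$, partitioning $\mathcal{V}$ into blocks $D(\ell) = \bc{v \in \mathcal{V} \mid \ell(v) = \ell}$. For each $\ell \in \mathcal{L}$ replace $\Omega_\ell$ by $\tilde{\Omega}_\ell = \Omega_\ell \times \prod_{v \in D(\ell)} [0,1]$ carrying the product measure $\mathrm{d}\mathtt{P}_\ell \times \prod_{v \in D(\ell)} \mathrm{d}u_v$ with Lebesgue measure on each factor; since the latents are exogenous the enlarged variables remain mutually independent, and the fresh coordinates $\bc{u_v \mid v \in \mathcal{V}}$ are mutually independent and independent of the original $\bc{\lambda_\ell}$. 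Then define the deterministic parameter $f_v$ to read the values $x_{\vpa[\mathcal{G}]{v}}$ of its visible parents, the original component of $\lambda_{\lpa[\mathcal{G}]{v}}$, and from the $\ell(v)$-factor \emph{only} the single coordinate $u_v$, outputting the inverse-CDF value
\begin{align*}
    f_v(x_{\vpa[\mathcal{G}]{v}}, \lambda_{\lpa[\mathcal{G}]{v}}, u_v) = \min\bc{\, k_v \in \Omega_v \ :\ \textstyle\sum_{j_v \le k_v} \mathtt{P}_{v | \pa[\mathcal{G}]{v}}(j_v \mid x_{\vpa[\mathcal{G}]{v}}, \lambda_{\lpa[\mathcal{G}]{v}}) \ge u_v \,}.
\end{align*}
For fixed non-$u_v$ arguments the map is Borel-measurable, and the Lebesgue measure of $\bc{u_v \in [0,1] \mid f_v = x_v}$ equals $\mathtt{P}_{v | \pa[\mathcal{G}]{v}}(x_v \mid x_{\vpa[\mathcal{G}]{v}}, \lambda_{\lpa[\mathcal{G}]{v}})$.

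Finally I would verify that the resulting functional model — which now has deterministic visible parameters on the \emph{same} $\mathcal{G}$ — reproduces $\mathtt{P}_{\mathcal{V}}$. Writing out its prediction via Equation~\ref{eq:latent_marginalization_continuous} and splitting each $\mathrm{d}\tilde{\mathtt{P}}_\ell$ into $\mathrm{d}\mathtt{P}_\ell$ times its $u$-factors, the key observation is that in the $\delta$-function expansion each coordinate $u_v$ occurs in exactly one factor, namely $\delta(x_v, f_v(\cdots, u_v))$, because no other visible function reads it; hence Fubini permits integrating the $u_v$'s out one at a time, each collapsing to $\mathtt{P}_{v | \pa[\mathcal{G}]{v}}(x_v \mid x_{\vpa[\mathcal{G}]{v}}, \lambda_{\lpa[\mathcal{G}]{v}})$ by the measure computation above. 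What remains is exactly the original stochastic expansion of $\mathtt{P}_{\mathcal{V}}(x_{\mathcal{V}})$, so the two distributions agree. The one step that requires care — the ``obstacle'' — is the bookkeeping that isolates each fresh coordinate inside a single visible function so that this peeling-off is legitimate; the remaining ingredients (measurability of the inverse-CDF map, the product-measure independence claims, and, should one additionally want \emph{finite} latent cardinalities rather than the continuous $[0,1]$ factors, invoking the reductions of Section~\ref{sec:finite_bounds} afterwards) are routine.
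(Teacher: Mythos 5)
Your proposal is correct, and it rests on the same underlying mechanism as the paper's proof, but it is packaged differently. The paper argues in two delegated steps: it first invokes the abstract functional-representation fact to write each visible variable as $\mathsf{X}_{v} = f_{v}\br{\mathsf{X}_{\vpa[\mathcal G]{v}}, \mathsf{\Lambda}_{\lpa[\mathcal G]{v}}, \mathsf{E}_{e_{v}}}$ with a fresh independent error variable $e_{v}$, temporarily adjoins $e_{v} \to v$ as a new exogenous latent, and then deletes each $e_{v}$ by citing Lemma~\ref{lem:simplicialization} (Evans' Lem.~3.8), which applies precisely because exo-simpliciality makes $\ch[\mathcal G]{e_{v}} = \bc{v}$ nested inside the children of some pre-existing latent. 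You inline both steps: the same structural consequence of exo-simpliciality (every $v$ lies in some facet, hence has a latent parent $\ell(v)$) lets you absorb the private randomness directly into $\ell(v)$ by enlarging its sample space with $[0,1]$ coordinates and realizing each kernel by inverse-CDF sampling, and your Fubini peeling argument is in effect a re-derivation of the absorption that the paper outsources to~\cite{Evans_2016}. What your route buys is self-containedness and an explicit formula for the deterministic parameters; what the paper's buys is brevity. Your side remarks also match the paper's treatment: the hypothesis of exo-simpliciality is indeed what guarantees a latent parent exists to carry the noise (the lemma fails for, e.g., a parentless visible variable with a nondegenerate marginal), the lemma as stated permits continuous latent spaces, and finiteness of the latent cardinalities is recovered separately via Appendix~\ref{sec:finite_bounds} (Theorem~\ref{thm:upper_bound}), exactly as you note.
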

\begin{proof}
    Since $\mathtt{P}_{\mathcal V} \in \mathcal M_{\mathcal V}\br{\mathcal G}$, by definition, there exists a joint distribution $\mathtt{P}_{\mathcal V \cup \mathcal L}$ (or density $\mathrm{d} \mathtt{P}_{\mathcal V \cup \mathcal L}$) admitting marginal $\mathtt{P}_{\mathcal V}$ via Equation~\ref{eq:latent_marginalization_continuous}. Since the joint distribution satisfies Equation~\ref{eq:causal_parameter_product}, it is possible to associate to each observed variable $\mathsf{X}_{v}$ an independent random variable $\mathsf{E}_{e_{v}}$ and measurable function $f_{v} : \Omega_{\vpa[\mathcal G]{v}} \times \Omega_{\lpa[\mathcal G]{v}} \times \Omega_{e_{v}}$ such that for all $v \in \mathcal V$,
    \begin{align}
        \mathsf{X}_{v} = f_{v}\br{\mathsf{X}_{\vpa[\mathcal G]{v}}, \mathsf{\Lambda}_{\lpa[\mathcal G]{v}}, \mathsf{E}_{e_v}}.
        \label{eq:functional_model}
    \end{align}
    Therefore, by promoting each $e_v$ to the status of a latent variable in $\mathcal G$ and adding an edge $e_{v} \to v$ to $\mathcal E$, each $\mathsf{X}_{v}$ becomes a deterministic function of its parents. Finally, making use of the fact that $\mathcal G$ is exo-simplicial, every error variable $e_{v}$ has its children $\ch[\mathcal G]{e_{v}} = \bc{v}$ nested inside the children of at least one other pre-existing latent variable. Therefore, by applying Lemma~\ref{lem:simplicialization}, $e_{v}$ is eliminated and one recovers the original $\mathcal G$.
\end{proof}

Essentially, Lemma~\ref{lem:determinism} indicates that any non-determinism due to local noise variables $\mathsf{E}_{e_v}$ can be emulated by the behavior of the latent variables $\mathcal L$.

\subsection{The Finite Bound for Latent Cardinalities}
\label{sec:finite_bounds}

In~\cite{Rosset_2017}, it was shown that if the visible variables have finite cardinality (i.e. $k_{\mathcal V} = \abs{\Omega_{\mathcal V}}$ is finite), then for a particular class of causal structures known as \term{causal networks}, the cardinalities of the latent variables could be assumed to be finite as well. A causal network is a causal structure where all latent variables have no parents (are exogenous) and all visible variables either have no parents or no children~\cite{Navascues_2017}. The purpose of this section is to generalize the results of~\cite{Rosset_2017} to the case of exo-simplicial causal \textit{structures}. Although the proof techniques presented here are similar to that of~\cite{Rosset_2017}, the best upper bounds placed on $k_{\mathcal L} = \abs{\Omega_{\mathcal L}}$ depends more intimately on the form of $\mathcal G$. It is also anticipated that the upper bounds presented here are sub-optimal, much like~\cite{Rosset_2017}. It is also worth noting that the results presented here hold independently of whether or not Lemma~\ref{lem:determinism} is applied.

\begin{theorem}
    \label{thm:upper_bound}
    Let $\br{\mathcal G, \mathcal P}$ be a causal model with (possibly infinite) cardinalities $k_{\mathcal L} = \bc{k_{\ell} \mid \ell \in \mathcal L}$ for the latent variables such that,
    \begin{align}
        \forall x_{\mathcal V} \in \Omega_{\mathcal V}, \quad \mathtt{P}_{\mathcal V}(x_{\mathcal V}) = \prod_{\ell \in \mathcal L} \int_{\lambda_{\ell} \in \Omega_{\ell}} \mathrm{d} \mathtt{P}_{\ell}(\lambda_{\ell}) \prod_{v \in \mathcal V} \mathtt{P}_{v|\pa{v}}(x_{v} | x_{\vpa{v}}\lambda_{\lpa{v}}), \label{eq:main_compatibility_test}
    \end{align}
    produces the distribution $\mathtt{P}_{\mathcal V}$. Then there exists a causal model $\br{\mathcal G, \mathcal P'}$ reproducing $\mathtt{P}_{\mathcal V}$ with cardinalities $k_{\mathcal L} = \bc{k_{\ell} \mid \ell \in \mathcal L}$ where each $k_{\ell}$ is a finite.
\end{theorem}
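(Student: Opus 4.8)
The plan is to peel the latent variables off one at a time, replacing each infinite latent alphabet by a finite one via a Carathéodory-type moment-realization argument carried out inside the finite-dimensional simplex of distributions over $\Omega_{\mathcal V}$; an induction on the number of latent variables of infinite cardinality then finishes the proof.

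First I would fix a latent variable $\ell \in \mathcal L$ with $\abs{\Omega_{\ell}}$ infinite and rewrite the right-hand side of Equation~\ref{eq:main_compatibility_test} using Tonelli's theorem, performing every latent integral except the one over $\ell$ with $\lambda_{\ell}$ held frozen:
\[
    \mathtt{P}_{\mathcal V}(x_{\mathcal V}) = \int_{\lambda_{\ell} \in \Omega_{\ell}} \mathrm{d}\mathtt{P}_{\ell}(\lambda_{\ell})\; G(x_{\mathcal V}, \lambda_{\ell}),
\]
where $G(\cdot, \lambda_{\ell})$ is the distribution on $\Omega_{\mathcal V}$ obtained from Equation~\ref{eq:main_compatibility_test} by carrying out all the integrals over $\mathcal L \setminus \{\ell\}$. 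For every fixed $\lambda_{\ell}$ the slice $G(\cdot, \lambda_{\ell})$ is a genuine probability distribution — summing the product of the conditional kernels over the visible variables in reverse topological order gives $1$ — so it is a point of the compact simplex $\Delta_{\mathcal V} \subseteq \mathbb{R}^{\abs{\Omega_{\mathcal V}}}$, and $\lambda_{\ell} \mapsto G(\cdot, \lambda_{\ell})$ is a bounded measurable map. The displayed identity therefore exhibits $\mathtt{P}_{\mathcal V}$ as the $\mathtt{P}_{\ell}$-average (barycenter) of the family $\{G(\cdot, \lambda_{\ell}) : \lambda_{\ell} \in \Omega_{\ell}\}$.

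Next I would invoke a Richter--Rogosinski-type moment theorem: the expectation of a bounded $\mathbb{R}^{\abs{\Omega_{\mathcal V}}}$-valued map under $\mathtt{P}_{\ell}$ can be reproduced by a finitely supported measure whose atoms lie in the original domain $\Omega_{\ell}$; that is, there are $\lambda_{\ell}^{(1)}, \ldots, \lambda_{\ell}^{(m)} \in \Omega_{\ell}$ with $m \le \abs{\Omega_{\mathcal V}}$ and weights $w_{i} > 0$, $\sum_{i} w_{i} = 1$, such that $\mathtt{P}_{\mathcal V}(x_{\mathcal V}) = \sum_{i=1}^{m} w_{i}\, G(x_{\mathcal V}, \lambda_{\ell}^{(i)})$ for all $x_{\mathcal V}$. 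One then builds $\mathcal P'$ from $\mathcal P$ by replacing $\Omega_{\ell}$ with $\{\lambda_{\ell}^{(1)}, \ldots, \lambda_{\ell}^{(m)}\}$, assigning mass $w_{i}$ to $\lambda_{\ell}^{(i)}$, and keeping every other parameter of $\mathcal P$ (each visible kernel merely restricted in its $\lambda_{\ell}$-argument). Since $G$ is assembled entirely from the unchanged parameters, the new model satisfies Equation~\ref{eq:main_compatibility_test} and reproduces $\mathtt{P}_{\mathcal V}$, the variable $\ell$ now has finite cardinality $\le \abs{\Omega_{\mathcal V}}$, and no other latent cardinality has changed; iterating over the finitely many latents of infinite cardinality gives the claimed $\mathcal P'$. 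The crude uniform bound $c_{\ell} = \abs{\Omega_{\mathcal V}}$ is then improved to $1 + \dim \operatorname{aff}\{G(\cdot, \lambda_{\ell}) : \lambda_{\ell} \in \Omega_{\ell}\}$, and because $\lambda_{\ell}$ enters $G$ only through the kernels attached to $\ch[\mathcal G]{\ell}$, this dimension is controlled by the children of $\ell$ and the cardinalities of their parents, yielding the $\mathcal G$-dependent bounds referred to in Lemma~\ref{lem:epsilon_dense} and Theorem~\ref{thm:main_error_bound_models}.

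The step I expect to be the genuine obstacle is the passage from an integral mixture to a \emph{finite} one whose atoms still lie in $\Omega_{\ell}$: the naive Carathéodory theorem only places the required atoms in the \emph{closure} of $\{G(\cdot, \lambda_{\ell})\}$, and an arbitrary limit point need not equal $G(\cdot, \lambda_{\ell})$ for any genuine $\lambda_{\ell}$, so the restricted kernels could fail to be well defined. This is exactly where the stronger moment-realization result is needed, since it keeps the support inside the original space. In the special case where Lemma~\ref{lem:determinism} has already been applied the family $\{G(\cdot, \lambda_{\ell})\}$ is finite, its convex hull is a polytope, and the subtlety disappears; but the theorem as stated must also cover the non-deterministic form of Equation~\ref{eq:main_compatibility_test}, and it is precisely that generality which forces the measure-theoretic argument rather than the elementary polytope one.
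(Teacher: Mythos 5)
Your argument is correct, and it reaches the conclusion by a genuinely different key lemma than the paper. The paper also treats one latent variable $\xi$ at a time, but it first uses the district factorization $\mathtt{P}_{\mathcal V|\xi} = \mathtt{P}_{D|\bar D\xi}\,\mathtt{P}_{D^{c}|\bar D^{c}}$ to isolate the part of the model that actually depends on $\lambda_{\xi}$, represents the conditionals $\mathtt{P}_{D|\bar D\xi}$ as a set $U$ of vectors, and then applies a Carath\'eodory-type theorem (the Fenchel variant for connected sets) to the barycenter of $U$; this forces the paper to establish that $U$ may be taken compact (closedness via the scheme of~\cite{Rosset_2017}) and connected (by adjoining noisy variants of each $p_{\lambda_{\xi}}$), precisely because the naive barycenter argument only lands in the \emph{closed} convex hull --- the same obstacle you flag. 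Your route instead applies a Richter--Rogosinski finite-support moment-realization theorem directly to the full conditional $G(\cdot,\lambda_{\ell}) = \mathtt{P}_{\mathcal V|\ell}$: since the $\abs{\Omega_{\mathcal V}}$ moment functions $\lambda_{\ell}\mapsto G(x_{\mathcal V},\lambda_{\ell})$ are bounded and measurable, that theorem produces atoms lying in $\Omega_{\ell}$ itself, so the restricted kernels are automatically well defined, the total mass is preserved because each slice is normalized, and no compactness or connectedness argument is needed; iterating over the finitely many infinite-cardinality latents is sound because each replacement leaves the other parameters untouched. What you lose relative to the paper is quantitative: your crude bound is $k_{\ell}\le\abs{\Omega_{\mathcal V}}$, whereas the paper's district decomposition, together with the equality constraint obtained from the partition $D = A\cup B$ with $B$ independent of $\xi$ given $\bar D$, yields the sharper bound $k_{\xi}\le \aff{\mathtt{P}_{D|\bar D}} - \aff{\mathtt{P}_{B|\bar D}}$, and it is these tighter cardinalities $c_{\ell}$ that feed into Lemma~\ref{lem:epsilon_dense} and the convergence rate of Theorem~\ref{thm:test_K}; your closing remark gestures at recovering a dimension-based bound but does not develop the district machinery needed to match it. For the theorem as stated (finiteness only), your proof is complete and arguably cleaner on the measure-theoretic side.
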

\begin{figure}
    \centering
    \includegraphics{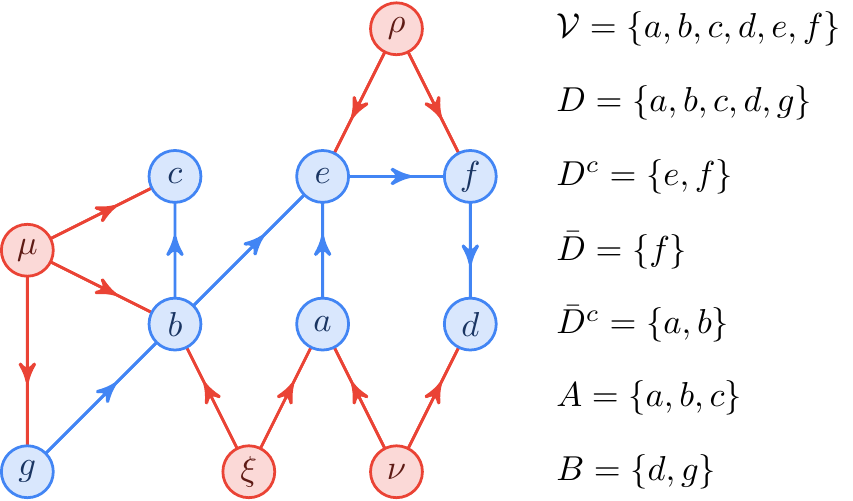}
    \caption{A causal structure $\mathcal G_{\ref{fig:upper_bound_concept}}$ that helps in visualizing the proof of Theorem~\ref{thm:upper_bound}.}
    \label{fig:upper_bound_concept}
\end{figure}
\begin{proof}
    The following proof considers each latent variable $\xi \in \mathcal L$ independently and obtains a value for $k_{\ell}$ in each case. Let $\mathcal L' = \mathcal L - \bc{\xi}$ denote the set of latent variables with $\xi$ removed. Let $\mathrm{d} \mathtt{P}_{\mathcal L'} = \prod_{\ell \in \mathcal L'} \mathrm{d} \mathtt{P}_{\ell}$ be a probability density over $\Omega_{\mathcal L'}$ and consider the conditional probability distribution $\mathtt{P}_{\mathcal V | \xi}(x_{\mathcal V}|\lambda_{\xi})$ given $\lambda_{\xi}$,
    \begin{align}
        \mathtt{P}_{\mathcal V | \xi}(x_{\mathcal V}|\lambda_{\xi})
        &= \int_{\Omega_{\mathcal L'}} \mathrm{d} \mathtt{P}_{\mathcal L'}(\lambda_{\mathcal L'}) \prod_{v \in \mathcal V} \mathtt{P}_{v|\pa{v}}(x_{v} | x_{\vpa{v}}\lambda_{\lpa{v}}) \label{eq:card_1}
    \end{align}
    Consulting Figure~\ref{fig:upper_bound_concept} for clarity, define the \textit{district} $D \subseteq \mathcal V$ of $\xi$ to be the maximal set of visible vertices $v$ in $\mathcal G$ for which there exists an \textit{undirected} path from $v$ to $\xi$ with alternating visible/latent vertices. Let $D^{c} = \mathcal V - D$, $\bar D = \pa{D} - D$ and $\bar D^{c} = \pa{D^{c}} - D^{c}$. The district $D$ has the property that $\mathtt{P}_{\mathcal V | \xi}$ factorizes over $D, D^{c}$~\cite{Evans_2016},
    \begin{align}
        \mathtt{P}_{\mathcal V | \xi}(x_{\mathcal V}|\lambda_{\xi}) = \mathtt{P}_{D | \bar D \xi}(x_{D}|x_{\bar D}\lambda_{\xi})\mathtt{P}_{D^{c}|\bar D^{c}}(x_{D^{c}}|x_{\bar{D}^{c}}). \label{eq:card_2}
    \end{align}
    For varying $\lambda_{\xi}$, consider a vector representation $p_{\lambda_{\xi}}$ of the conditional distribution $\mathtt{P}_{D | \bar D \xi}\br{x_{D} | x_{\bar D} \lambda_{\xi}}$ and define $U = \bc{p_{\lambda_{\xi}} \mid \lambda_{\xi} \in \Omega_{\xi}}$. By construction, the center of mass $p^{*}$ of $U$ represents $\mathtt{P}_{D|\bar D}(x_{D} | x_{\bar D})$,
    \begin{align}
        p^{*} &= \int_{\Omega_{\xi}} \mathrm{d} \mathtt{P}_{\xi}(\lambda_{\xi}) p_{\lambda_{\xi}} \\
        \mathtt{P}_{D|\bar D}(x_{D} | x_{\bar D}) &= \int_{\Omega_{\xi}} \mathrm{d} \mathtt{P}_{\xi}(\lambda_{\xi}) \mathtt{P}_{D | \bar D \xi}(x_{D} | x_{\bar D}\lambda_{\xi}) \label{eq:card_3}
    \end{align}
    Therefore, by a variant of Carathéodory's theorem due to Fenchel~\cite{Barany_2012}, if $U$ is compact and connected, then $p^{*}$ can be written as a finite convex decomposition,
    \begin{align}
        p^{*} = \sum_{j=1}^{\aff{U}} w_{j} p_{j}, \quad \sum_{j} w_{j} = 1, \quad \forall i, w_{i} \geq 0.
    \end{align}
    where $\aff{U}$ is the affine dimension of $U$. Then by letting $\Omega_{\xi} = \{0_{\xi}, 1_{\xi}, \ldots, \aff{U}_{\xi}\}$ be a finite sample space for $\xi$ distributed according to $\mathtt{P}_{\xi}(\lambda_{\xi}) = w_{\lambda}$, by Equations~\ref{eq:main_compatibility_test},~\ref{eq:card_1},~\ref{eq:card_2} and~\ref{eq:card_3},
    \begin{align}
        \mathtt{P}_{\mathcal V}(x_{\mathcal V}) &= \sum_{\lambda_{\xi} \in \Omega_{\xi}} \mathtt{P}_{\xi}(\lambda_{\xi}) \mathtt{P}_{\mathcal V | \xi}(x_{\mathcal V}|\lambda_{\xi}).
    \end{align}
    Therefore, causal parameters exist reproducing $\mathtt{P}_{\mathcal V}$ with cardinality $k_{\xi} = \aff{U}$. What remains is to show that $U$ is compact and to find a bound on $\aff{U}$.

    Because of normalization constraints on each $p_{\lambda_{\xi}}$, $U$ is bounded. Moreover,~\cite{Rosset_2017} demonstrates that $U$ can be taken to be closed as well. Again consulting Figure~\ref{fig:upper_bound_concept} for clarity, partition $D$ into subsets $A = \des{\xi} \cap D$ and $B = D - A$. This partitioning enables one to identify the following linear equality constraint placed on all points $p_{\lambda_\xi}$:
    \begin{align}
        &\sum_{x_{A} \in \Omega_{A}} \mathtt{P}_{D | \bar D \xi}(x_{D}|x_{\bar D}\lambda_{\xi}) \\
        &\quad= \sum_{x_{A} \in \Omega_{A}} \mathtt{P}_{A | B \bar D \xi}(x_{A}|x_{B}x_{\bar D}\lambda_{\xi})\mathtt{P}_{B | \bar D \xi}(x_{B}|x_{\bar D} \lambda_{\xi}) \\
        &\quad= \mathtt{P}_{B | \bar D \xi}(x_{B}|x_{\bar D} \lambda_{\xi}) \\
        &\quad= \mathtt{P}_{B | \bar D}(x_{B}|x_{\bar D}), \label{eq:marginal_equality_constraint_proof}
    \end{align}
    where the last equality holds because $B$ is independent of $\xi$ given $\bar D$\footnote{Every path from $b \in B$ to $\xi$ must pass through an unconditioned collider in $A$ and therefore the d-separation relation $B \perp \bc{\xi} \mid \bar D$ holds~\cite{Pearl_2009}.}.
    Furthermore note that if $U$ is not connected, it can be made connected by a scheme due to~\cite{Rosset_2017} which adds noisy variants of each $p_{\lambda_{\xi}}$ to $U$. Simply include a noise parameter $\nu \in \bs{0,1}$ such that $\lambda_\xi' = \br{\lambda_\xi, \nu}$ and adjust the response functions for variables in $A$ such that,
    \begin{align}
        \mathtt{P}_{A | B \bar D \xi}(x_{A}|x_{B}x_{\bar D}\lambda_{\xi}\nu) = \nu \mathtt{P}_{A | B \bar D \xi}(x_{A}|x_{B}x_{\bar D}\lambda_{\xi}) + \frac{1- \nu}{\abs{\Omega_{A}}}
        \label{eq:noise_model}
    \end{align}
    For each degree of noise $0 \leq \nu \leq 1$, Equation~\ref{eq:noise_model} defines a noisy model $p_{\lambda_{\xi},\nu}$ which are added to $U$. As special cases, no noise $\nu = 0$, yields $p_{\lambda_{\xi},0} = p_{\lambda_{\xi}} \in U$ and complete noise $\nu = 1$ yields $p_{\lambda_{\xi},1}$ representing $\mathtt{P}_{B | \bar D}(x_{B}|x_{\bar D})/\abs{\Omega_{A}} \in U$ which is \textit{independent} of $\lambda_{\xi}$. Therefore, $U$ is connected. Finally, the affine dimension $\aff{U}$ is at most the affine dimension of $\mathtt{P}_{D | \bar D}$ with the degrees of freedom associated with satisfying Equation~\ref{eq:marginal_equality_constraint_proof} removed~\cite{Rosset_2017}. Therefore,
    \begin{align}
        k_{\xi} = \aff{U} \leq \aff{\mathtt{P}_{D | \bar D}} - \aff{\mathtt{P}_{B | \bar D}}
    \end{align}
\end{proof}

\section{Proof of Theorem~\ref{thm:main_error_bound_models}}
\label{sec:proof_of_theorem_main_error_bound}

\begin{proof}
    The proof first constructs the distribution $\tilde {\mathtt{P}}_{\mathcal V}$ which satisfies the error bound in Equation~\ref{eq:main_error_bound_result}. Afterwards, a uniform functional model $(\mathcal G, \tilde {\mathcal F}_{\mathcal V}, \tilde {\mathcal P}_{\mathcal L})$ is constructed which produces $\tilde {\mathtt{P}}_{\mathcal V}$. Begin by letting $\tilde {\mathtt{P}}_{\ell}$ denote the rational approximation of $\mathtt{P}_{\ell}$ for each $\ell \in \mathcal L$ as prescribed by Theorem~\ref{thm:discrete_uniform_sampling}. Then, let
    \begin{align}
        \mathtt{P}_{\mathcal L}(\lambda_{\mathcal L}) = \prod_{\ell \in \mathcal L} \mathtt{P}_{\ell}(\lambda_{\ell}), \quad
        \tilde {\mathtt{P}}_{\mathcal L}(\lambda_{\mathcal L}) = \prod_{\ell \in \mathcal L} \tilde {\mathtt{P}}_{\ell}(\lambda_{\ell}).
    \end{align}
    The joint distribution $\mathtt{P}_{\mathcal V}$ and the rational approximation $\tilde {\mathtt{P}}_{\mathcal V}$ are then given by,
    \begin{align}
        \mathtt{P}_{\mathcal V}(x_{\mathcal V}) &= \sum_{\lambda_{\mathcal L} \in \Omega_{\mathcal L}} \mathtt{P}_{\mathcal L}(\lambda_{\mathcal L}) \delta(x_{\mathcal V}, \mathcal F_{\mathcal V}(\lambda_{\mathcal L}) ), \\
        \tilde {\mathtt{P}}_{\mathcal V}(x_{\mathcal V}) &= \sum_{\lambda_{\mathcal L} \in \Omega_{\mathcal L}} \tilde {\mathtt{P}}_{\mathcal L}(\lambda_{\mathcal L}) \delta(x_{\mathcal V}, \mathcal F_{\mathcal V}(\lambda_{\mathcal L}) ). \label{eq:the_tilde_that_needs_to_match}
    \end{align}
    The distance $\Delta(\mathtt{P}_{\mathcal V}, \tilde {\mathtt{P}}_{\mathcal V})$ between the visible joint distributions is no greater than the distance $\Delta(\mathtt{P}_{\mathcal L}, \tilde {\mathtt{P}}_{\mathcal L})$ between the latent joint distributions:
    \begin{align}
        \Delta(\mathtt{P}_{\mathcal V}, \tilde {\mathtt{P}}_{\mathcal V})
        &= \sum_{x_{\mathcal V} \in \Omega_{\mathcal V}} \abs{\mathtt{P}_{\mathcal V}(x_{\mathcal V}) - \tilde {\mathtt{P}}_{\mathcal V}(x_{\mathcal V})} \\
        &= \sum_{x_{\mathcal V} \in \Omega_{\mathcal V}} \abs{\sum_{\lambda_{\mathcal L} \in \Omega_{\mathcal L}} \bc{\mathtt{P}_{\mathcal L}(\lambda_{\mathcal L}) - \tilde {\mathtt{P}}_{\mathcal L}(\lambda_{\mathcal L})} \delta(x_{\mathcal V}, \mathcal F_{\mathcal V}(\lambda_{\mathcal L}) )} \\
        &\leq \sum_{\lambda_{\mathcal L} \in \Omega_{\mathcal L}}\sum_{x_{\mathcal V} \in \Omega_{\mathcal V}} \abs{ \mathtt{P}_{\mathcal L}(\lambda_{\mathcal L}) - \tilde {\mathtt{P}}_{\mathcal L}(\lambda_{\mathcal L}) }\delta(x_{\mathcal V}, \mathcal F_{\mathcal V}(\lambda_{\mathcal L}) ) \\
        &= \sum_{\lambda_{\mathcal L} \in \Omega_{\mathcal L}} \abs{ \mathtt{P}_{\mathcal L}(\lambda_{\mathcal L}) - \tilde {\mathtt{P}}_{\mathcal L}(\lambda_{\mathcal L}) } \\
        &= \Delta(\mathtt{P}_{\mathcal L}, \tilde {\mathtt{P}}_{\mathcal L})
        \label{eq:error_is_only_latent}
    \end{align}
    The bound in Equation~\ref{eq:main_error_bound_result} will be derived using Equation~\ref{eq:rational_approximation}. For convenience of notation, let the latent variables be indexed $\mathcal L = \bc{\ell_{1}, \ell_{2}, \ldots, \ell_{L}}$ and let $\mathcal L' = \bc{u_{1}, u_{2}, \ldots, u_{L}}$ index the corresponding uniformly distributed variables as defined in Theorem~\ref{thm:discrete_uniform_sampling}. Then,
    \begin{align}
        &\Delta(\mathtt{P}_{\mathcal L}, \tilde {\mathtt{P}}_{\mathcal L}) \\
        &\quad= \sum_{\lambda_{\mathcal L} \in \Omega_{\mathcal L}} \abs{\mathtt{P}_{\mathcal L}(\lambda_{\mathcal L}) - \tilde {\mathtt{P}}_{\mathcal L}(\lambda_{\mathcal L})} \\
        &\quad= \sum_{\lambda_{\mathcal L} \in \Omega_{\mathcal L}} \abs{\prod_{j=1}^{L} \mathtt{P}_{\ell_{j}}(\lambda_{\ell_{j}}) - \prod_{j=1}^{L} \tilde {\mathtt{P}}_{\ell_{j}}(\lambda_{\ell_{j}})} \\
        &\quad= \sum_{\lambda_{\mathcal L} \in \Omega_{\mathcal L}} \abs{\prod_{j=1}^{L} \br{\tilde {\mathtt{P}}_{\ell_{j}}(\lambda_{\ell_{j}}) + \frac{\varepsilon(\lambda_{\ell_{j}})}{\abs{\Omega_{u_{j}}}}} - \prod_{j=1}^{L} \tilde {\mathtt{P}}_{\ell_{j}}(\lambda_{\ell_{j}})}
        \label{eq:left_off}
    \end{align}
    Here it becomes advantageous to define helper variables $\Gamma_{0, j}$ and $\Gamma_{1, j}$ such that,
    \begin{align}
        \Gamma_{0, j}(\lambda_{\mathcal L}) = \tilde {\mathtt{P}}_{\ell_{j}}(\lambda_{\ell_{j}}), \quad \Gamma_{1, j}(\lambda_{\mathcal L}) = \frac{\varepsilon(\lambda_{\ell_{j}})}{\abs{\Omega_{u_{j}}}}. \label{eq:gamma_values}
    \end{align}
    Additionally, let $b \in \bc{0,1}^{L}$ be a binary string of length $L$. Then Equation~\ref{eq:left_off} becomes,
    \begin{align}
        &\Delta(\mathtt{P}_{\mathcal L}, \tilde {\mathtt{P}}_{\mathcal L}) \\
        &\quad= \sum_{\lambda_{\mathcal L} \in \Omega_{\mathcal L}} \abs{\prod_{j=1}^{L} \br{\Gamma_{0,j}(\lambda_{\mathcal L}) + \Gamma_{1,j}(\lambda_{\mathcal L})} - \prod_{j=1}^{L} \Gamma_{0,j}(\lambda_{\mathcal L})} \\
        &\quad= \sum_{\lambda_{\mathcal L} \in \Omega_{\mathcal L}} \abs{\sum_{b = 1}^{2^{L} - 1}\prod_{j=1}^{L} \Gamma_{b_{j}, j}(\lambda_{\mathcal L})} \\
        &\quad\leq \sum_{\lambda_{\mathcal L} \in \Omega_{\mathcal L}} \sum_{b = 1}^{2^{L} - 1} \prod_{j=1}^{L} \abs{ \Gamma_{b_{j}, j}(\lambda_{\mathcal L})}
    \end{align}
    Summing over $\Gamma_{0, j}$ yields $1$ due to normalization of $ \tilde {\mathtt{P}}_{\ell_{j}}(\lambda_{\ell_{j}})$ in Equation~\ref{eq:gamma_values}. However, summing over $\Gamma_{0, j}$ yields $(\abs{\Omega_{\ell_{j}}} - 1)/\abs{\Omega_{u_{j}}}$ exactly as in Theorem~\ref{thm:discrete_uniform_sampling}. Therefore,
    \begin{align}
        \Delta(\mathtt{P}_{\mathcal L}, \tilde {\mathtt{P}}_{\mathcal L}) 
        &\leq \sum_{k_1=1}^{L} \frac{\br{\abs{\Omega_{\ell_{k_{1}}}} - 1}}{\abs{\Omega_{u_{k_{1}}}}} + \frac{1}{2!}\sum_{k_1=1}^{L}\sum_{k_2=1}^{L} \frac{\br{\abs{\Omega_{\ell_{k_{1}}}} - 1}\br{\abs{\Omega_{\ell_{k_{2}}}} - 1}}{\abs{\Omega_{u_{k_{1}}}}\abs{\Omega_{u_{k_{2}}}}} + \cdots \label{eq:truncated_series}
    \end{align}
    In order to simplify Equation~\ref{eq:truncated_series}, let $C, K$ be defined as,
    \begin{align}
        C = \max\bc{\abs{\Omega_{\ell_{j}}} \mid 1 \leq j \leq L}, \quad K = \min\bc{\abs{\Omega_{u_{j}}} \mid 1 \leq j \leq L}.
        \label{eq:values_for_C_K}
    \end{align}
    Combining Equations~\ref{eq:error_is_only_latent},~\ref{eq:truncated_series}, and~\ref{eq:values_for_C_K}, one obtains the required result,
    \begin{align}
        \Delta(\mathtt{P}_{\mathcal V}, \tilde {\mathtt{P}}_{\mathcal V}) \leq \sum_{n = 1}^{L} \frac{1}{n!}\br{\frac{L(C - 1)}{K}}^{n}
    \end{align}
    To conclude the proof, one needs to prove the existence of a uniform functional model $(\mathcal G, \tilde {\mathcal F}_{\mathcal V}, \tilde {\mathcal P}_{\mathcal L})$ which reproduces $\tilde {\mathtt{P}}_{\mathcal V}$. To do so, substitute into Equation~\ref{eq:the_tilde_that_needs_to_match} the functional form of the rational approximations (Equation~\ref{eq:thm_discrete_uniform_sampling}) from Theorem~\ref{thm:discrete_uniform_sampling} for each $\ell_{j} \in \mathcal L$,
    \begin{align}
        \tilde {\mathtt{P}}_{\mathcal V}(x_{\mathcal V}) = \prod_{j \in 1}^{L} \sum_{\lambda_{\ell_j} \in \Omega_{\ell_{j}}} \bc{\frac{1}{\abs{\Omega_{u_{j}}}} \sum_{\omega_{u_{j}} \in \Omega_{u_{j}}} \delta(\lambda_{\ell_{j}}, g_{j}(\omega_{u_{j}}))}\delta(x_{\mathcal V}, \mathcal F_{\mathcal V}(\lambda_{\ell_{1}}\lambda_{\ell_{2}}\ldots\lambda_{\ell_{L}}) ).
    \end{align}
    Perform the sum over all latent valuations to remove the inner delta function,
    \begin{align}
        \tilde {\mathtt{P}}_{\mathcal V}(x_{\mathcal V}) = \prod_{j \in 1}^{L} \bs{\frac{1}{\abs{\Omega_{u_{j}}}} \sum_{\omega_{u_{j}} \in \Omega_{u_{j}}}}\delta(x_{\mathcal V}, \mathcal F_{\mathcal V}(g_{1}(\omega_{u_{1}})g_{2}(\omega_{u_{2}})\ldots g_{L}(\omega_{u_{L}})) ). \label{eq:the_uniform_functional_model}
    \end{align}
    Finally, one can recursively define the functions in $\tilde {\mathcal F}_{\mathcal V}$ to be such that $\tilde {\mathcal F}_{\mathcal V}(\omega_{\mathcal L'}) = {\mathcal F}_{\mathcal V}(g(\omega_{\mathcal L'}))$ and consequently Equation~\ref{eq:the_uniform_functional_model} defines the uniform functional model $(\mathcal G, \tilde {\mathcal F}_{\mathcal V}, \tilde {\mathcal P}_{\mathcal L})$ which reproduces $\tilde {\mathtt{P}}_{\mathcal V}$.
\end{proof}

\end{document}